\definecolor{purple}{RGB}{85, 6,139}
\definecolor{teal}{RGB}{2,108,128}
\definecolor{lavender}{RGB}{129, 102, 122}
\definecolor{carolina blue}{RGB}{68, 157, 209}
\definecolor{phthalo blue}{RGB}{2, 8, 135}
\definecolor{purple2}{RGB}{149, 96, 219}
\definecolor{green1}{RGB}{96, 219, 117}
\definecolor{orange1}{RGB}{208,70,0}
\DeclareMathOperator*{\argmax}{arg\,max}
\DeclareMathOperator*{\argmin}{arg\,min}
\newcommand{\R}{\mathbb{R}}
\newcommand{\F}{\mathcal{F}}
\newtheorem{thm}{Theorem}
\newtheorem{prps}{Proposition}
\newtheorem{lem}{Lemma}
\theoremstyle{definition}
\newtheorem{axm}{\sc Axiom}
\newtheorem{defn}{Definition}
\newtheorem{cor}{Corollary}
\newtheorem{exm}{Example}
\title{Inertial Updating\thanks{Dominiak: Virginia Tech (dominiak@vt.edu); Kovach: Virginia Tech (mkovach@vt.edu); Tserenjigmid: UC Santa Cruz (gtserenj@ucsc.edu). This paper subsumes ``Ordered Surprises and Conditional Probability Systems" and Section 6 of ``Minimum Distance Belief Updating with General Information" by the same authors. We are very grateful to David Freeman, Paolo Ghirardato, Faruk Gul, Edi Karni, Shaowei Ke, Yusufcan Masatlioglu, Pietro Ortoleva, Burkhard Schipper, Dong Wei, and Chen Zhao for valuable comments and discussions, as well as the seminar participants at UC Riverside, Texas A\&M, ISI Delhi, University of Michigan, and Purdue University.}}
\author{\centering Adam Dominiak  \and Matthew Kovach \and Gerelt Tserenjigmid}
\date{\centering\today} 
\begin{document}

\maketitle



\noindent{\textbf{Abstract:} {\onehalfspacing We introduce and characterize \emph{inertial updating} of beliefs. Under inertial updating, a decision maker (DM) chooses a belief that minimizes the subjective distance between their prior belief and the set of beliefs consistent with the observed event. Importantly, by varying the subjective notion of distance, inertial updating provides a unifying framework that nests three different types of belief updating: (i) Bayesian updating, (ii) non-Bayesian updating rules, and (iii) updating rules for events with zero probability, including the conditional probability system (CPS) of \cite{Myerson1886a, Myerson1886b}. We demonstrate that our model is behaviorally equivalent to the Hypothesis Testing model (HT) of \cite{ortoleva2012}, clarifying the connection between HT and CPS and non-Bayesian updating models.  We apply our model to a persuasion game. 

\vspace{2 mm}
\noindent{\textbf{Keywords:} Inertial updating, Bayesian updating, non-Bayesian updating, zero-probability events, Bayesian divergence, conditional probability system, hypothesis testing.}

\vspace{2 mm}
\noindent{\textbf{JEL:} D01, D81.}


\newpage

\section{Introduction}
\onehalfspacing

How decision makers revise their beliefs after receiving information is a foundational problem in economics and game theory. While the benchmark model of Bayesian updating is broadly appealing for a variety of reasons, it has two major issues.  First, it is descriptively limited; there is robust experimental evidence that people's beliefs systematically deviate from what Bayesian updating prescribes.\footnote{For experimental evidence, see \cite{KahnemanTversky1972}, \cite{KT1983}, \cite{Camerer1987}, \cite{Eil2011}, along with surveys by \cite{Camerer1995} and \cite{benjamin2019}.} 
Second, it is \emph{incomplete}; a well-known limitation of Bayesian updating is that it is not defined for zero-probability events.\footnote{This is an especially important issue in dynamic games of incomplete information, as particular off-path beliefs are used to support certain equilibria. Accordingly, complete theories of belief updating, such as the Conditional Probability System introduced by \cite{Myerson1886a, Myerson1886b}, have been proposed.} We resolve these limitations of Bayesian updating by introducing the Inertial Updating (\nameref{IEU}) representation: a \emph{complete} theory of belief updating that unifies Bayesian and non-Bayesian updating rules.


 \nameref{IEU} addresses both of these issues by recasting belief updating as an optimization problem; belief updating is transformed into a problem of belief selection satisfying two intuitive properties. First, our DM must select a belief that is consistent with the information, hence information induces a constraint set. Second, our DM selects a belief that is closest to her current belief according to a subjective distance function.\footnote{For ease of exposition, we use the term ``distance function," which may not satisfy the triangle inequality in our case.} Slightly more formally, given a prior $\mu$ over a set of states $S$ and any event $E \subset S$, her new belief $\mu_E$ is the distribution over $E$ that is ``closest'' to $\mu$ among all of the probability distributions over $E$. Since our DM minimizes the change in her beliefs relative to her prior, we refer to this behavior as Inertial Updating.  Since our DM utilizes a subjective notion of distance, our framework is flexible enough to encompass a variety of updating patterns. We provide a complete behavioral analysis of \nameref{IEU} and demonstrate that it provides a unifying framework to capture various belief updating rules in the literature. 

The \nameref{IEU} representation is characterized by three axioms (see \autoref{axioms}).  The first two postulates are standard: \nameref{SEU} imposes a subjective expected utility representation for each conditional preference $\succsim_E$, and \nameref{CON} ensures that for any event $E$, the DM only considers states within $E$ possible (i.e., $\mu_E\in \Delta(E)$).  The third axiom, \nameref{DCoh}, was introduced by \cite{ortoleva2012} to characterize the Hypothesis Testing model (HT).\footnote{In the HT, an agent's behavior is consistent with SEU, yet she also has a second-order belief and thus has multiple beliefs in mind. She updates her prior according to Bayes' rule if she receives ``expected'' information. When information is ``unexpected,'' she rejects her prior and uses her second-order belief to select a new belief according to a maximum likelihood rule.  Thus an HT agent is essentially Bayesian, but is nevertheless open to fundamentally shifting her worldview.} To interpret this axiom, say that an event $A$ is \emph{revealed implied by} event $B$ if every state that the DM believes is possible after learning $B$ is also an element of $A$.  That is, once the DM learns that the ``true state is contained in $B$,'' she is also convinced that that ``true state is contained in $A$,'' and therefore $A^c$ is believed to be null after $B$.  \nameref{DCoh} requires that this revealed implication over events is acyclic. 

Our main result, \autoref{repthrm}, shows that the preceding three axioms are necessary and sufficient for the \nameref{IEU} representation. Our proof is based on an extension of Afriat's theorem (\cite{afriat1967construction}, \cite{varian1982nonparametric}) for general budget sets due to \cite{matzkin1991axioms}. We are able to apply this theorem by showing that \nameref{DCoh} implies that the data set of ``belief choices'' satisfies the Strong Axiom of Revealed Preferences (SARP).  As in Afriat's theorem, we get continuity and strict convexity of the distance function for free. A corollary of our theorem is that \nameref{IEU} and HT are behaviorally equivalent, despite their stark difference in appearance and the significantly different proof techniques.

One key feature of \nameref{IEU} is that it is descriptively rich; \nameref{IEU} accommodates Bayesian and non-Bayesian updating.  While it is well known that \nameref{DC} ensures Bayesian updating, we provide a complimentary result showing that distance functions that are generalizations of the celebrated Kullback-Leibler (KL) divergence deliver posteriors that are consistent with Bayesian updating.  We then build upon this insight to define a family of non-Bayesian updating rules that we call \nameref{hBayes}. 


A variety of updating biases fall under \nameref{hBayes}. In particular, \nameref{hBayes} updating has a non-trivial connection to the well-known $\alpha-\beta$ rule from \cite{grether1980}, capturing forms of under- or over-reaction. The \nameref{hBayes} can also allow for asymmetric reactions, along with features of confirmation bias. Further, this rule allows for history-dependent updating, and therefore it can capture a wide array of context effects. We provide a behavioral characterization of \nameref{hBayes} via two axioms, both of which are weaker than  \nameref{DC}. The characterization of \nameref{hBayes} and a discussion of the preceding examples can be found in section \ref{sec:hBayes}.

The other key feature of \nameref{IEU} is that it is a complete theory of updating: conditional beliefs are well-defined for all events. This follows because the DM's notion of distance is well-defined for all distributions. Of course, we are not the first to propose a complete theory of updating. The most prominent complete theory is Myerson's Conditional Probability System (CPS) \citep{Myerson1886a, Myerson1886b}, which was motivated by the Sequential Equilibria of \cite{KrepsWilson1982}.

We provide a simple behavioral foundation for CPS in section \ref{CPSsection}. Our characterization relies upon a novel axiom, \nameref{CC}, that implies \nameref{DC} among the non-null events and extends this consistency to ``conditionally non-null'' events. We then show that CPS is a special case of \nameref{IEU} by providing an explicit distance function that generates any CPS. Because \nameref{IEU} and HT are behaviorally equivalent, this also establishes that the CPS is a special case of HT.

The relations between HT and other models of updating such as CPS and Grether's $\alpha-\beta$ rule were not known previously, partly due to stark differences in their representations. By recasting the problem of updating as an optimization problem, our model and results clarify the exact relations between HT, CPS, Grether's $\alpha-\beta$ rule, and Distorted Bayesian in general.






We apply \nameref{IEU} to settings with signal structures and provide a distance function that generalizes the $\alpha-\beta$ rule from \cite{grether1980} in \autoref{Signal}. The generalization of Grether's rule uses two distortion functions, a \emph{prior distortion} $g$ and \emph{signal distortion} $f$, and reduces to Grether's rule when both distortions are power functions. We discuss how over-reaction and under-reaction to news can be captured simultaneously. 

We use this distorted Bayesian distance to analyze the effect of non-Bayesian belief updating on the optimal signal structure in the Bayesian persuasion games of \cite{kamenica2011bayesian} (section \ref{sec-app}). We find that the way it distorts prior probabilities, $g$, has no qualitative impact on the optimal signal structure, whereas the optimal signal structure depends critically on the curvature of the signal distortion $f$. In particular, the set of states at which the sender is fully revealing when $f$ is concave is drastically different from when $f$ is strictly convex. 

We close the paper by introducing a generalization of \nameref{IEU} that relaxes Consequentialism (\autoref{sect-wiu}) and discussing related literature (\autoref{literature}). 



\section{Model}\label{model}
\subsection{Basic Setup}
We study choice under uncertainty in the framework of \cite{anscombeaumann1963}. A DM faces uncertainty described by a nonempty and finite set of states of nature $S=\{s_1,\ldots,s_n\}$.\footnote{We focus on a finite state space as it is more standard for decision theoretic analysis and general enough for most economic applications, but we can easily extend our model to an infinite state space.}  Let $\Sigma$ be an arbitrary collection of nonempty subsets of $S$ such that $S\in\Sigma$. Any element $E$ of $\Sigma$ is called an event. Let $X$ be a nonempty, finite set of outcomes and $\Delta(X)$ be the set of all lotteries over $X$, i.e., $\Delta(X) := \big\{ p : X \rightarrow [0,1] \mid \sum_{x \in X}p(x)=1 \big\}$.


We are interested in a DM's preference over acts, which are mappings $f: S \to \Delta(X)$ that assigns a lottery to each state. The set of all acts is $\F:=\{f: S\to \Delta(X)\}$. Any act $f$ that assigns the same lottery to all states ($f(s)=p$ for all $s \in S$) is called a constant act. Using a standard abuse of notation, we denote by $p \in \F$ the corresponding constant act. Hence, we can identify the set of lotteries $\Delta(X)$ with the constant acts.  We define mixed lotteries and acts in the usual way: for any $\lambda \in [0,1]$, $\lambda p+ (1-\lambda) q$ is the lottery providing $x$ with probability  $\lambda p(x) + (1-\lambda) q(x)$, and  $\lambda f + (1-\lambda)g$ is the act that yields $\lambda f(s) + (1-\lambda)g(s)$ in state $s$. Moreover, for any $E \in \Sigma$, and $f, h \in \F$,  $f E h$ denotes that conditional act that returns $f(s)$ for $s \in E$ and $h(s)$ otherwise.

The DM's behavior is depicted by a family of preference relations $\{\succsim_E\}_{E\in\Sigma}$, each defined over $\F$.  We write $\succsim$ in place of $\succsim_S$, and we call $\succsim$ the initial preference. As usual, for each $E \in \Sigma$,  $\succ_E$ and $\sim_E$ are the asymmetric and symmetric parts of $\succsim_E$, respectively.  We say that $E$ is $\succsim$-null (or simply null) if $fEg\sim g$ for any $f, g\in \F$. Otherwise, $E$ is non-null. Similarly, we say $E$ is $\succsim_A$-null if $fEg\sim_A g$ for any $f, g\in \F$. If $E$ is not $\succsim_A$-null, then it is $\succsim_A$-non-null. 

We denote by $\Delta(S)$ the set of all probability distributions on $S$. For notational convenience, for each $\mu\in \Delta(S)$ and each $s_i \in S$, we will sometimes write $\mu_i$ in place of $\mu(s_i)$: the probability of state $s_i$ according to $\mu$.  For any $ \pi \in \Delta(S)$, let $\text{sp}(\pi)$ denote the support of  $\pi$. For any $\mu$ and event $E$ such that $\mu(E)>0$, let $\text{BU}(\mu, E)$ denote the Bayesian update of $\mu$ conditional on $E$.

Finally, let $\|\cdot\|$ denote the Euclidean norm. For any set $A$ and a function $d$ on $A$, we write  $\arg\min d(A)=\{x\in A \mid d(y)\ge d(x)\text{ for any }y\in A\}$ (whenever this is well-defined).

\subsection{Inertial Updating}

When the DM observes an event  $E \in \Sigma$, she revises her initial preference $\succsim$ to a conditional preference denoted $\succsim_E$. This setting is quite general and incorporates the standard signal structure as a special case.\footnote{In particular, when $S=\Omega \times M$, for a set of payoff relevant states $\Omega$ and signals $M$, the signal $m$ corresponds to the event $\{(\omega,m) \in S \mid 
\omega \in \Omega \}$.} We provide additional analysis of this special case in \autoref{Signal}. 

Rather than specify a specific formula that generates the DM's conditional beliefs (e.g., Bayes' rule, Grether's $\alpha-\beta$ rule), \nameref{IEU} imposes general restrictions on the revision process. That is, \nameref{IEU} requires that her new belief is (i) consistent with the information and (ii) of minimal distance to her prior, while allowing the distance notion to be subjective.  We now formally define our notion of distance. 

\begin{defn}[Distance Function] A function $d:\Delta(S) \to\mathds{R}$ is a \textbf{distance function} with respect to $\mu\in \Delta(S)$, denoted by $d_\mu$, if $d_{\mu}(\mu)<d_{\mu}(\pi)$ for any $\pi\in\Delta\setminus\{\mu\}$. 
\end{defn}
 
The only condition required of the the distance function is that the prior is the global minimizer among all beliefs. This is a simple coherence property, because otherwise a DM should immediately adopt some other belief. Equipped with this notion of distance, we now introduce the \nameref{IEU} representation. For ease of exposition, we use the term ``distance function" even though $d$ may not satisfy the triangle inequality.

\begin{defn}[{\textbf{IU}}]\label{IEU} A family of preference relations $\{\succsim_E\}_{E\in\Sigma}$ admits an \textbf{Inertial Updating} representation if there are a Bernoulli utility function $u:X\to\mathds{R}$, a prior $\mu \in \Delta(S)$, a distance function $d_\mu:\Delta(S)\to\mathds{R}$ such that for each $E \in \Sigma$, the preference relation $\succsim_E$ admits a SEU representation with $(u, \mu_E)$, meaning that for any $f, g\in \F$, \begin{equation}
f \succsim_E g \quad \text{if and only if} \quad \sum_{s\in E}\mu_E(s) u\big(f(s)\big)\ge \sum_{s\in E}\mu_E(s) u\big(g(s)\big), 
\end{equation}where
\begin{equation}\mu_E\equiv \argmin_{\pi \in \Delta(E)}  d_\mu(\pi).\end{equation}
\end{defn}

Since the prior is the global minimizer of $d_{\mu}$, $\mu = \argmin_{\pi \in \Delta(S)}  d_\mu(\pi)$.  For any $E \in \Sigma$, the constraint $\Delta(E)$ is convex, and so $\argmin_{\pi \in \Delta(E)}  d_\mu(\pi)$ will be unique whenever $d_\mu$ is strictly quasi-convex. In fact, the following much weaker condition will suffice: for any $\pi, \pi'\in\Delta(S)$ with $\pi\neq \pi'$, if $d_\mu(\pi)=d_\mu(\pi')$, then there is $\alpha\in (0, 1)$ such that $d_\mu(\alpha\pi+(1-\alpha)\pi')<d_\mu(\pi)$. As our main theorem shows, we get continuity and strict convexity of $d$ for free. Hence, we will not impose any additional properties on $d$.\footnote{The distance functions in Definitions 3-4 are convex, and the distance functions in Definitions 5-7 are strictly convex.}

\subsection{Notions of Distance}\label{notdis}

By allowing for a subjective notion of distance, the \nameref{IEU} generalizes Bayesian updating while also providing a unifying approach to non-Bayesian updating rules.  In this section, we discuss a few examples of distance functions and the beliefs they generate. We begin by introducing a Bayesian distance, which will also be useful in defining non-Bayesian distances later. 

\begin{defn}[\textbf{Bayesian Divergence}]\label{GBex} For any strictly increasing and strictly concave function $\sigma:\mathds{R}_{+}\to\mathds{R}$, let $d_{\mu}$ be given by
\begin{equation}\label{BDeq}
d_\mu(\pi)=-\sum^n_{i=1}\mu_i\, \sigma\left(\frac{\pi_i}{\mu_i}\right).
\end{equation}
\end{defn}

Our first proposition shows that any \nameref{GBex} will generate Bayesian posteriors for all non-null events.\footnote{Bayesian divergence must be modified to be part of an \nameref{IEU} representation; i.e., to yield a complete updating rule. For example, see \autoref{SDBD} for one such way to extend $d_{\mu}$.}

\begin{prps}\label{prp:BD} For any non-null $E\in\Sigma$,  
\[\mu_E=\argmin_{\pi\in\Delta(E)} -\sum^n_{i=1}\mu_i\, \sigma\left(\frac{\pi_i}{\mu_i}\right)=\text{BU}(\mu, E)\]
\end{prps}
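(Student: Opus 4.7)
The plan is to run a Lagrangian calculation on the constrained minimization of $d_\mu$ over $\Delta(E)$ and use strict concavity of $\sigma$ to force the ratios $\pi_i/\mu_i$ to be equalized across $i\in E$. As a first reduction, any $\pi\in\Delta(E)$ satisfies $\pi_i=0$ for $i\notin E$, so those summands of $d_\mu(\pi)$ contribute a constant that drops out of the minimization. If some $i\in E$ has $\mu_i=0$, the natural finiteness convention on the term $\mu_i\,\sigma(\pi_i/\mu_i)$ forces $\pi_i=0$; let $E^{+}=\{i\in E:\mu_i>0\}$, and note that non-nullity of $E$ gives $\sum_{i\in E^{+}}\mu_i=\mu(E)>0$. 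Because $-\sigma$ is strictly convex and $\pi_i\mapsto \pi_i/\mu_i$ is affine, $d_\mu$ is strictly convex on the convex set $\Delta(E^{+})$, so there is a unique minimizer, and any interior critical point of the equality-constrained Lagrangian must be it.

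Next I form the Lagrangian for the constraint $\sum_{i\in E^{+}}\pi_i=1$,
\[
\mathcal{L}(\pi,\lambda)=-\sum_{i\in E^{+}}\mu_i\,\sigma\!\left(\frac{\pi_i}{\mu_i}\right)+\lambda\!\left(\sum_{i\in E^{+}}\pi_i-1\right),
\]
whose stationarity condition in $\pi_i$ is $\sigma'(\pi_i/\mu_i)=\lambda$ for every $i\in E^{+}$. Strict concavity of $\sigma$ makes $\sigma'$ strictly decreasing, hence injective, so the ratio $\pi_i/\mu_i$ takes a common value $c$ across all $i\in E^{+}$. The constraint then yields $c\,\mu(E)=1$, i.e.\ $c=1/\mu(E)$, so $\pi_i=\mu_i/\mu(E)$ for $i\in E$ and $\pi_i=0$ otherwise — which is exactly $\text{BU}(\mu,E)$. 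Since this candidate is interior to $\Delta(E^{+})$ and $d_\mu$ is strictly convex, it is the unique global minimizer.

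The main obstacle I anticipate is not conceptual but rather verifying the Lagrangian step when $\sigma$ is only required to be strictly concave (not $C^{1}$) and when $\sigma'(0^{+})$ may be infinite. In the smooth case the argument is exactly as written; in the general case I would replace $\sigma'$ by a subgradient of $-\sigma$ and use that strict concavity still makes the subdifferential single-valued on $(0,\infty)$ except at countably many points, which is enough to equalize the ratios $\pi_i/\mu_i$ at any interior optimum. Boundary concerns (some $\pi_i=0$ on $E^{+}$) do not arise, because the candidate found by the FOC is itself interior on $E^{+}$ and strict convexity uniquely pins it down as the global minimum.
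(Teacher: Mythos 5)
Your overall strategy---a Lagrangian first-order condition plus strict concavity of $\sigma$ to equalize the likelihood ratios $\pi_i/\mu_i$, with strict convexity of $d_\mu$ certifying that the interior stationary point is the unique global minimizer---is the paper's argument, and your handling of possible boundary solutions on $\Delta(E^{+})$ via convexity is if anything cleaner than the paper's explicit comparison of values across subsets $C'\subseteq C$. The worry about non-differentiable $\sigma$ is shared by the paper (its proof also writes $\sigma'$), so I would not count that against you.

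There is, however, one genuine gap: the claim that for $i\in E$ with $\mu_i=0$ ``the natural finiteness convention on the term $\mu_i\,\sigma(\pi_i/\mu_i)$ forces $\pi_i=0$.'' It does not. Here $\sigma$ is real-valued on $\mathbb{R}_{+}$ precisely so that $d_\mu$ is finite everywhere, and the convention the paper actually uses (its proof immediately rewrites $\sum_{i=1}^n\mu_i\,\sigma(\pi_i/\mu_i)$ as a sum over $\mathrm{sp}(\mu)$ only) is that a term with $\mu_i=0$ contributes zero regardless of $\pi_i$. Consequently, placing mass $1-\alpha>0$ on $E\setminus\mathrm{sp}(\mu)$ is feasible and carries no direct cost; its only effect is to shrink the mass available to $E^{+}$ from $1$ to $\alpha$, and your FOC then delivers $\pi_i=\alpha\mu_i/\mu(E)$ rather than $\mu_i/\mu(E)$. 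Ruling this out requires a monotonicity step your proof omits: given total mass $\alpha$ on $E^{+}$, the optimal value of the objective is $-\mu(E)\,\sigma\bigl(\alpha/\mu(E)\bigr)$ up to constants, which is strictly decreasing in $\alpha$ because $\sigma$ is strictly increasing, so $\alpha=1$ is optimal. This is exactly the step the paper performs (``$f$ is maximized when $\alpha=1$''), and it is the only place where strict monotonicity of $\sigma$---an assumption you never invoke---does any work. As written, your argument establishes the result only for events $E\subseteq\mathrm{sp}(\mu)$.
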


Notably, \autoref{BDeq} ``includes" the KL divergence as a special case ($\sigma(x) = \ln(x)$). However, since $\ln(0)=-\infty$, the KL divergence is not well-defined when $\text{sp}(\mu)\subseteq\text{sp}(\pi)$. Therefore, we focus our attention on $\sigma$ that are well defined on $\mathds{R}_{+}$. For example, $\sigma(x)=\ln(\alpha\, x+\beta)$ where $\alpha, \beta>0$ is a well-defined, strictly increasing, and strictly concave function. Alternatively, $\sigma(x)=\frac{x^\alpha-1}{1-\alpha}$ (resulting in the Renyi divergence) is well-defined, strictly increasing, and strictly concave when $\alpha\in (0, 1)$. 

We now introduce the following notation to simplify our exposition. 

\medskip
\noindent\textbf{Notation.} The \textbf{Bayesian function} for a given $\sigma$ is denoted by $\beta^\sigma: \mathbb{R}^n_{+}\times \mathbb{R}^n_{+}\to \mathbb{R}$; i.e., 
\[\beta^{\sigma}(\textbf{x}, \textbf{y})=-\sum^n_{i=1}x_i\, \sigma\left(\frac{y_i}{x_i}\right)\text{ for any }\mathbf{x}, \mathbf{y}\in\mathbb{R}^n_{+}.\] 
The \textbf{Bayesian update} of $\textbf{x}$ on $E$ is denoted by 
\[\text{BU}(\textbf{x}, E)=\left(\frac{x_i\,\mathds{1}\{i\in E\}}{\sum_{j\in E}x_j}\right)_{i\in S}\text{ for any }\mathbf{x}\in \mathbb{R}^n_{+}\text{ with }\sum_{j\in E}x_j>0.\]
Note that $\mathbf{x}$ and $\mathbf{y}$ are not necessarily probability distributions. 

Following the intuition from \nameref{GBex}, we can introduce a distorted version of this distance notion to capture non-Bayesian beliefs. 

%

\begin{defn}[\textbf{Distorted Bayesian}]\label{hBayes} An \nameref{IEU} DM admits a Distorted Bayesian distance if \[d_\mu(\pi)=\beta^\sigma(\delta(\mu), \pi)
\] where $\delta:[0,1]\to\mathbb{R}_{+}$ and $\sigma$ is strictly increasing and strictly concave. Then by Proposition 1, \begin{equation}\mu_E=\text{BU}(\delta(\mu), E)\end{equation} for any non-null $E\in\Sigma$. Further, we say that this distance is \textbf{Monotonic} if $\delta$ is strictly increasing. 
\end{defn}
If $\delta>0$, then we also have $\mu_E=\text{BU}(\delta(\mu), E)$ for any $E$, resulting in a complete theory of belief updating.\footnote{Otherwise, the distance must be modified slightly. See, for example, \autoref{SDBD}.} For example, suppose $\delta$ is defined as follows: $\delta(t)=t+\epsilon\,\mathds{1}\{t=0\}$ where $\epsilon$ is small enough. Then $\mu_E$ is approximately equivalent to $\text{BU}(\mu, E)$ when $E$ is non-null and  when $E$ is a null-event, $\mu_E$ is equivalent to $\text{BU}(\mu^*, E)$  where $\mu^*$ is the uniform distribution over $S$. This example approximates a special case of Myerson's CPS introduced in \autoref{SDBD}.

The \nameref{hBayes} distance notion captures non-Bayesian updating through the distortion function $\delta$.\footnote{For example, $\delta$ captures the DM's imperfect memory or recall of her previously updated belief -- prior (e.g., see \cite{mullainathan2002memory}, \cite{wilson2014bounded}, \cite{gennaioli2010comes}, and  \cite{bordalo2016stereotypes}.} Intuitively, such an agent behaves as if they apply Bayes' rule to a distorted prior. When $\delta(x)=x^{\alpha}$, this corresponds to a special case of Grether's $\alpha - \beta$ rule  \citep{grether1980} where $\alpha=\beta$. For $\alpha<1$, this captures under-reaction to information and base-rate neglect, while $\alpha>1$ captures over-reaction to information. In \autoref{Signal}, we show that our model nests the general version of Grether's $\alpha - \beta$ rule. It is also straightforward to generalize $\delta$ to capture a variety of belief distortions, including asymmetric reactions based on prior beliefs like confirmation bias (\'a la \cite{rabin1999}) or over(under) reaction to small(large) probabilities (\cite{kahneman1979prospect}). 

In section \ref{sec:hBayes} we characterize \nameref{hBayes} and Monotonic \nameref{hBayes}.  Although $\delta$ is independent of the realized event, the \nameref{hBayes} distance can capture features of history or reference dependence. 

\begin{defn}[\textbf{Mixed Bayesian}]\label{mixedbayes} Let $d_{\mu}$ be given by
\begin{equation}
d_\mu(\pi)=\beta^\sigma(\mu+\rho, \pi),
\end{equation}
where $\sigma$ is strictly increasing and strictly concave and $\text{sp}(\mu)\cup\text{sp}(\rho)=S$. Then for any $E\in\Sigma$, by Proposition 1,
\[\mu_E=\text{BU}(\mu+\rho, E)=\alpha(E)\,\text{BU}(\mu, E)+(1-\alpha(E))\,\text{BU}(\rho, E),\]
where $\alpha(E)=\frac{\mu(E)}{\mu(E)+\rho(E)}$. 

\end{defn}

Notice that $\text{sp}(\mu)\cup\text{sp}(\rho)=S$ ensures that \nameref{mixedbayes} yields a complete updating rule; it is defined for all events. When $E$ is a null-event, $\mu_E=\text{BU}(\rho, E)$. Through $\rho$, the Mixed Bayesian distance can capture motivated reasoning \citet{Kunda1990} or wishful thinking (\citet{Mayraz2011experiment, caplin2019wishful, Kovach2020_wishful}).

To illustrate other forms of \nameref{IEU} updating rules for zero-probability events, we can define a support-dependent Bayesian divergence. 


\begin{defn}[\textbf{Support-Dependent Bayesian Divergence}]\label{SDBD} Let
\[d_\mu(\pi)=
\begin{cases}
\beta^\sigma(\mu, \pi)&\text{ if }\mu(\text{sp}(\pi))>0,\\
\beta^\sigma(\mu^*, \pi)+\sigma(1)+|\sigma(0)|&\text{ otherwise},
\end{cases}\]
for $\mu^*$ with $\text{sp}(\mu)\cup \text{sp}(\mu^*)=S$.\end{defn}

\begin{prps}\label{os2} For any $E\in \Sigma$, 
\[\mu_{E}=
\begin{cases}
\text{BU}(\mu, E)&\text{ if }\mu(E)>0,\\
\text{BU}(\mu^*, E)&\text{ otherwise}.\end{cases}
\]
\end{prps}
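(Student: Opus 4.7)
My plan is to split $\Delta(E)$ according to which branch of $d_\mu$ applies: let $\Delta_1 = \{\pi \in \Delta(E) : \mu(\text{sp}(\pi))>0\}$ and $\Delta_2 = \Delta(E)\setminus\Delta_1$. On $\Delta_1$ the objective is $\beta^\sigma(\mu,\pi)$ and on $\Delta_2$ it is $\beta^\sigma(\mu^*,\pi)+\sigma(1)+|\sigma(0)|$. Proposition 1 will then pin down the minimizer on each piece, and the job is to argue that the additive constant is tuned precisely so that $\Delta_1$ dominates whenever it is nonempty, while the support-union assumption $\text{sp}(\mu)\cup\text{sp}(\mu^*)=S$ makes the $\Delta_2$ optimum available in the complementary case.

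For the case $\mu(E)>0$, the Bayesian update $\text{BU}(\mu,E)$ has support contained in $\text{sp}(\mu)\cap E$ and hence lies in $\Delta_1$, so Proposition 1 immediately gives $\text{BU}(\mu,E)=\argmin_{\pi\in\Delta_1}\beta^\sigma(\mu,\pi)$. To eliminate $\Delta_2$, I would invoke Jensen's inequality for the concave $\sigma$: for every $\pi\in\Delta(S)$, $\beta^\sigma(\mu^*,\pi)\ge -\sigma(1)$, so $d_\mu(\pi)\ge|\sigma(0)|$ on $\Delta_2$. A direct computation gives
\[\beta^\sigma(\mu,\text{BU}(\mu,E)) \;=\; -\sigma(0)(1-\mu(E))\;-\;\sigma\!\left(1/\mu(E)\right)\mu(E),\]
and the needed comparison $\beta^\sigma(\mu,\text{BU}(\mu,E))\le|\sigma(0)|$ is elementary in either sign-regime of $\sigma(0)$ and follows from $\sigma$ strictly increasing (in the leading case $\sigma(0)<0$, it collapses to $\sigma(0)\le\sigma(1/\mu(E))$).

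For the case $\mu(E)=0$, every $\pi\in\Delta(E)$ has $\text{sp}(\pi)\subseteq E$ disjoint from $\text{sp}(\mu)$, so $\Delta(E)=\Delta_2$ and minimizing $d_\mu$ is equivalent to minimizing $\beta^\sigma(\mu^*,\pi)$ over $\Delta(E)$. The hypothesis $\text{sp}(\mu)\cup\text{sp}(\mu^*)=S$ forces $E\subseteq\text{sp}(\mu^*)$, hence $\mu^*(E)>0$, and Proposition 1 applied to $\mu^*$ yields $\mu_E=\text{BU}(\mu^*,E)$. The only delicate step is the additive-constant bookkeeping in the first case, where I must check that $\sigma(1)+|\sigma(0)|$ buys enough slack to keep any $\Delta_2$-candidate from undercutting $\text{BU}(\mu,E)$; everything else is a direct appeal to Proposition 1 and the support hypothesis on $\mu^*$.
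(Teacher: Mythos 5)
Your proposal is correct and follows essentially the same route as the paper: the same two-branch decomposition of $\Delta(E)$, Proposition 1 applied on each piece (with the support hypothesis $\text{sp}(\mu)\cup\text{sp}(\mu^*)=S$ guaranteeing $\mu^*(E)>0$ in the null case), and the bounds $-\sigma(0)\ge\beta^\sigma(\cdot,\cdot)\ge-\sigma(1)$ (the paper's Lemma 1, which you recover via Jensen and a direct evaluation at $\text{BU}(\mu,E)$) to show the additive constant $\sigma(1)+|\sigma(0)|$ separates the branches. The only cosmetic remark is that the final comparison should be noted as strict, which your own computation already delivers since $\sigma(1/\mu(E))>\sigma(0)$ by strict monotonicity.
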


This distance yields Bayesian updating whenever possible. After a null event, the DM switches to $\mu^*$ and then utilizes Bayes' rule.  This complete belief updating rule was used in \cite{galperti2019persuasion}, and is a special case of both \cite{Myerson1886a, Myerson1886b} and \cite{ortoleva2012}. 

A final example that we wish to mention is the Euclidian distance. 
\begin{defn}[\textbf{Euclidean distance}]\label{euclidian} Let $d_\mu(\pi)=||\mu-\pi||$. Then \[\mu_{E}(s)=\mu(s)+\frac{1-\mu(E)}{|E|}\text{ for any }E\in\Sigma\text{ and }s\in E.\] 
\end{defn}

This distance has several nice features. First, it yields a complete updating rule. Second, the Euclidean distance is a metric, unlike KL divergence. On the other hand, it is always non-Bayesian and ``under utilizes'' prior odds when updating beliefs: probability is allocated to the remaining states (i.e., those in $E$) uniformly. These features echo two consistent findings from experiments: DM's exhibit base-rate neglect \citep{benjamin2019} and are biased toward uniform distributions or the ``ignorance prior" \citep{FoxClemen2005}. 

\section{Axiomatic Characterization}\label{axioms}

In this section, we present three behavioral postulates that characterize \nameref{IEU}. Our first axiom imposes the standard SEU conditions of \cite{anscombeaumann1963} on each conditional preference relation, $\succsim_E$, along with a condition that ensures risk preferences are unaffected by information.  Because these conditions are well-understood, we will not provide a formal discussion of the conditions. 

\begin{axm}[\textbf{SEU Postulates}]\label{SEU} For each $E\in \Sigma$, the following conditions hold.

\begin{itemize}
\item[$(i)$] \textbf{Weak Order:} $\succsim_E$ is complete and transitive. 
\item[$(ii)$] \textbf{Archimedean:} For any $f, g, h\in \F$, if $f\succ_E g$ and $g\succ_E h$, then there are $\alpha, \beta\in (0, 1)$ such that $\alpha f+(1-\alpha) h\succ_E g$ and $g\succ_E \beta f+(1-\beta) h$.
\item[$(iii)$] \textbf{Monotonicity:} For any $f, g\in \F$, if $f(s)\succsim_E g(s)$ for each $s\in S$, then $f\succsim_E g$. 
\item[$(iv)$] \textbf{Nontriviality:} There are $f, g\in \F$ such that $f\succ_E g$. 
\item[$(v)$] \textbf{Independence:} For any $f, g, h\in \F$ and $\alpha\in (0, 1]$, $f\succsim_E g$ if and only if $\alpha f+(1-\alpha) h\succsim_E \alpha g+(1-\alpha) h$. 
\item[$(vi)$] \textbf{Invariant Risk Preference:} For all lotteries $p,q\in\Delta(X)$,  $p\succsim q$ if and only if $p\succsim_E q$.

\end{itemize}
\end{axm}  

The next axiom is standard and ensures that the DM forms a new belief that is consistent with the available information. 

\begin{axm}[\textbf{Consequentialism}]\label{CON}
For any $E \in \Sigma$ and all $f,g\in F$,  \[f(s) =g(s) \text{ for all } s \in E \implies f\sim_{E} g.\]
\end{axm}

The next axiom, \nameref{DCoh}, was introduced in \cite{ortoleva2012}, and a careful discussion may be found there.  In our setting, we say that an event $A$ is \emph{revealed implied by} event $B$ if every state that the DM believes is possible after learning $B$ is also an element of $A$.  \nameref{DCoh} requires that this ``revealed preference'' over events is acyclic. 

\begin{axm}[\textbf{Dynamic Coherence}]\label{DCoh} For any $A_1, \ldots, A_n\subseteq S$, if $S\setminus A_i$ is $\succsim_{A_{i+1}}$-null for each $i\le n-1$ and $S\setminus A_n$ is $\succsim_{A_1}$-null, then $\succsim_{A_1}=\succsim_{A_{n}}$.
\end{axm}

If $S\setminus A_i$ is $\succsim_{A_{i+1}}$-null, then $A_i$ is revealed implied by $A_{i+1}$. Since \nameref{DCoh} implies this relation is acyclic, the revealed preference satisfies SARP. Using the result of \citet{matzkin1991axioms}, an extension of \citet{afriat1967construction} to general budget sets, SARP is a necessary and sufficient condition for the existence of a subjective distance function for belief selection.  

\begin{thm}\label{repthrm} The following are equivalent.

\begin{itemize} 
\item[(i)] A family of preference relations $\{\succsim_E\}_{E\in\Sigma}$ admits an \textbf{\nameref{IEU} representation}. 
\item[(ii)] It satisfies \nameref{SEU}, \nameref{CON}, and \nameref{DCoh}.
\item[(iii)] It admits an \textbf{\nameref{IEU} representation with respect to a continuous, strictly convex distance function}. 
\end{itemize}
\end{thm}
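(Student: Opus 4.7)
The plan is to prove the chain $(i) \Rightarrow (ii) \Rightarrow (iii) \Rightarrow (i)$. The implication $(iii) \Rightarrow (i)$ is immediate, since a continuous, strictly convex $d_\mu$ is already a distance function in the sense of Definition 2. For $(i) \Rightarrow (ii)$, the \nameref{SEU} postulates (with a common utility $u$) and \nameref{CON} follow at once from the representation; for \nameref{DCoh}, if $S \setminus A_i$ is $\succsim_{A_{i+1}}$-null for each $i \le n-1$ and $S \setminus A_n$ is $\succsim_{A_1}$-null, then $\mu_{A_{i+1}} \in \Delta(A_i)$ cyclically in $i$. Because $\mu_{A_i} = \argmin_{\pi \in \Delta(A_i)} d_\mu(\pi)$, I get a chain $d_\mu(\mu_{A_i}) \le d_\mu(\mu_{A_{i+1}})$ that closes around the cycle and forces equality of all distances; uniqueness of the minimizer then yields $\mu_{A_1} = \cdots = \mu_{A_n}$ and hence $\succsim_{A_1} = \succsim_{A_n}$.

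The main work is $(ii) \Rightarrow (iii)$. Applying Anscombe--Aumann to each $\succsim_E$ via \nameref{SEU} yields a Bernoulli utility $u_E$ and a belief $\mu_E \in \Delta(S)$; Invariant Risk Preference forces a common $u := u_S$ (after normalization), and \nameref{CON} gives $\text{sp}(\mu_E) \subseteq E$, so $\mu_E \in \Delta(E)$. Setting $\mu := \mu_S$ leaves a finite family of ``choice data'' $\{(\Delta(E), \mu_E)\}_{E \in \Sigma}$, with $\Delta(E)$ a compact, convex ``budget'' and $\mu_E$ the chosen point. Define the directly revealed preference by $\mu_E \succsim^R \mu_{E'}$ whenever $\mu_{E'} \in \Delta(E)$, equivalently whenever $S \setminus E$ is $\succsim_{E'}$-null. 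The key step is to translate \nameref{DCoh} into SARP on this data: if $\mu_{A_1} \succsim^R \mu_{A_2} \succsim^R \cdots \succsim^R \mu_{A_n} \succsim^R \mu_{A_1}$, then \nameref{DCoh} applied directly gives $\succsim_{A_1} = \succsim_{A_n}$, i.e.\ $\mu_{A_1} = \mu_{A_n}$; applying \nameref{DCoh} to cyclic rotations of the same chain then yields $\mu_{A_i} = \mu_{A_j}$ for all $i, j$, which is the single-valued form of SARP.

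Finally, I invoke the extension of Afriat's theorem due to \cite{matzkin1991axioms} for choice data on general convex budget sets: SARP is necessary and sufficient for the existence of a continuous, strictly concave utility $v : \Delta(S) \to \mathbb{R}$ rationalizing the data. Setting $d_\mu := -v$ produces a continuous, strictly convex function on $\Delta(S)$ with $\mu_E = \argmin_{\pi \in \Delta(E)} d_\mu(\pi)$ for each $E \in \Sigma$; taking $E = S$ makes $\mu$ the unique global minimizer, so $d_\mu$ qualifies as a distance function, giving $(iii)$. The main obstacle is the translation step: cleanly deriving SARP on the induced demand data from \nameref{DCoh} and then verifying that Matzkin's hypotheses apply to the particular family of budgets $\{\Delta(E)\}_{E \in \Sigma}$ sitting inside the simplex $\Delta(S)$.
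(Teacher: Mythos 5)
Your proposal is correct and follows essentially the same route as the paper: the same cycle-of-inequalities argument for necessity of \nameref{DCoh}, and for sufficiency the same reduction of the conditional beliefs to a revealed-preference data set $\{(\mu_E,\Delta(E))\}_{E\in\Sigma}$ on which \nameref{DCoh} delivers SARP, followed by an appeal to \citet{matzkin1991axioms} and the sign flip $d_\mu=-v$. The only piece you flag as an obstacle but do not carry out, verifying Matzkin's co-convexity hypotheses for the budgets $\Delta(E)$ inside the simplex, is handled in the paper by noting that $\Delta(S)\setminus\Delta(E)$ is open and convex and that the monotonicity conditions hold vacuously since $\mathbf{x}+\mathbf{e}\in\Delta(S)$ forces $\mathbf{e}=0$.
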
 

For a simple intuition behind our result, note that \nameref{SEU} and \nameref{CON} imply that our DM has a conditional belief $\mu_E$ with support contained in $E$, or $\mu_E\in\Delta(E)$. Consequently, we may view each event $E$ as generating a ``budget set,'' $\Delta(E)$, from which the DM must choose her conditional belief. The conditional belief, $\mu_E$, is therefore ``revealed preferred'' to any other belief in the budget set. \nameref{DCoh} ensures that this revealed preference satisfies SARP, allowing for the construction of a ``utility function'' (i.e., a distance function) that generates these beliefs.

Similar to Afriat's theorem, we obtain a continuous, strictly convex distance function without additional restrictions on preferences. The above result holds for an arbitrary collection $\Sigma$ of events. One advantage of our proof is that it is easy to extend to more general models. In \autoref{sect-wiu}, we consider a generalization of \nameref{IEU} that satisfies a weakening of \nameref{CON} and the corresponding characterization theorem uses the same generalization of Afriat's theorem. 

\subsection{Bayesian Updating}

Our main theorem does not require \nameref{DC}, and in fact our axioms are independent of this classic postulate. Similar to results from \cite{Ghirardato2002} and \cite{epstein1993}, imposing \nameref{DC} in our setting ensures that conditional beliefs are consistent with Bayesian updating whenever possible. Recall that $f E h$ denotes that conditional act that returns $f(s)$ for $s \in E$ and $h(s)$ otherwise.

\begin{axm}[\textbf{Dynamic Consistency}]\label{DC} For all non-null events $E \in \Sigma$ and $f, g, h\in \cal F$,  $$f E h \succsim g E h ~\text{ if and only if }~ f\succsim_{E} g.$$
\end{axm}

\begin{prps}\label{prp:DC} A family of preference relations $\{\succsim_E\}_{E\in\Sigma}$ satisfies \textbf{\nameref{SEU}}, \textbf{\nameref{CON}}, \textbf{\nameref{DCoh}}, and \textbf{\nameref{DC}} if and only if it admits an \textbf{{\nameref{IEU}} representation} and $\mu_E=BU(\mu, E)$ for each non-null $E$. 
\end{prps}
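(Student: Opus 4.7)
The plan is to invoke Theorem~\ref{repthrm} to obtain the \nameref{IEU} representation for free, and then to show that \nameref{DC} is equivalent (given the representation) to the constraint $\mu_E = \text{BU}(\mu,E)$ for non-null $E$. So the whole argument reduces to a standard uniqueness-of-SEU-representation calculation on top of Theorem~\ref{repthrm}.

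For the necessity direction, assume \nameref{SEU}, \nameref{CON}, \nameref{DCoh}, and \nameref{DC}. By Theorem~\ref{repthrm} the family $\{\succsim_E\}$ admits an \nameref{IEU} representation with utility $u$, prior $\mu$, distance function $d_\mu$, and conditional beliefs $\mu_E \in \Delta(E)$. Fix a non-null $E$. I would pick any $h \in \mathcal{F}$ and any $f,g \in \mathcal{F}$ and expand both sides of \nameref{DC} using the SEU representations of $\succsim$ and $\succsim_E$. Since the terms on $E^c$ cancel on the left-hand side and \nameref{CON} lets me restrict to $E$ on the right, \nameref{DC} becomes
\begin{equation*}
\sum_{s\in E}\mu(s)\,u(f(s)) \ge \sum_{s\in E}\mu(s)\,u(g(s)) \quad\Longleftrightarrow\quad \sum_{s\in E}\mu_E(s)\,u(f(s)) \ge \sum_{s\in E}\mu_E(s)\,u(g(s)).
\end{equation*}
Nontriviality of $\succsim$ implies $u$ is non-constant, so as $f,g$ range over $\mathcal{F}$ the vectors $(u(f(s)))_{s\in E} - (u(g(s)))_{s\in E}$ span $\mathbb{R}^E$. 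Hence the two linear functionals $\mu|_E$ and $\mu_E$ on $\mathbb{R}^E$ induce the same weak ordering, so they must be positive scalar multiples of each other. Since $\mu_E \in \Delta(E)$, the scaling factor is pinned down as $1/\mu(E)$, giving $\mu_E = \text{BU}(\mu,E)$.

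For the sufficiency direction, suppose $\{\succsim_E\}$ admits an \nameref{IEU} representation with $\mu_E = \text{BU}(\mu,E)$ on every non-null $E$. Then \nameref{SEU}, \nameref{CON}, and \nameref{DCoh} are immediate from Theorem~\ref{repthrm}. To verify \nameref{DC}, fix a non-null $E$ and $f,g,h \in \mathcal{F}$; the SEU representation of $\succsim$ gives
\begin{equation*}
fEh \succsim gEh \;\Longleftrightarrow\; \sum_{s\in E}\mu(s)\,u(f(s)) \ge \sum_{s\in E}\mu(s)\,u(g(s)),
\end{equation*}
and dividing both sides by $\mu(E) > 0$ and using $\mu_E(s) = \mu(s)/\mu(E)$ rewrites the right-hand inequality as $\sum_{s\in E}\mu_E(s)u(f(s)) \ge \sum_{s\in E}\mu_E(s)u(g(s))$, i.e.\ $f \succsim_E g$.

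Most of the work is already absorbed by Theorem~\ref{repthrm}, so there is no serious obstacle; the only subtle point is invoking uniqueness of the SEU representation to pass from ``$\mu|_E$ and $\mu_E$ represent the same ordering on $E$-supported acts'' to ``$\mu_E = \text{BU}(\mu,E)$,'' and this is standard given that \nameref{SEU} part $(iv)$ ensures $u$ is non-constant.
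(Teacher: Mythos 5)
Your proposal is correct, and it is essentially the argument the paper has in mind: the paper's own ``proof'' is a one-line deferral to the standard Dynamic Consistency results (citing Ghirardato, 2002), and your write-up is precisely that standard argument --- layer Theorem~\ref{repthrm} to get the representation, then use the spanning of utility-difference vectors and uniqueness of the representing functional to force $\mu_E$ proportional to $\mu$ on $E$. The only point worth stating explicitly is that non-nullity of $E$ gives $\mu(E)>0$ (so the functional $\mu|_E$ is nonzero and the normalization $c=1/\mu(E)$ is well defined), which you implicitly use and which follows from the SEU representation of $\succsim$.
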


Since \nameref{DC} has been discussed extensively, (both \cite{Ghirardato2002} and \cite{epstein1993} include excellent discussions), we will not discuss this result further. Instead, we simply wish to remark that \nameref{DC} places no restrictions on conditional beliefs after null events, which is a major drawback of the standard model. 

A strength of \nameref{IEU} is that it provides a coherent framework for belief revision after null events, which we discuss in \autoref{zero}. Notably, in section \ref{CPSsection} we introduce a strengthening of  \nameref{DC}, which we call \nameref{CC}, that extends the logic of \nameref{DC} to all conditional events and show that this condition characterizes the CPS of \cite{Myerson1886a, Myerson1886b}.

\subsection{Distorted Bayesian Updating}\label{sec:hBayes}

One of the key insights provided by \nameref{IEU} is that distance minimization can be viewed as a unifying framework that accommodates various updating behaviors. In this section, we expand upon this insight by characterizing \nameref{hBayes} and monotonic \nameref{hBayes} with a few simple relaxations of \nameref{DC}.

\begin{axm}[\bf{Consistency}]\label{Consistency} For any non-null $E\in  \Sigma$, $s, s'\in E$, and $x,y \in X$,

\[x \{s\} y \sim x \{s'\} y\text{ implies } x \{s\} y\sim_{E} x \{s'\} y.\]
\end{axm}

\nameref{Consistency} requires that if the DM initially believes that two states are equally likely, then she continues to believe that they are equally likely after observing some event containing them.\footnote{If we strengthen Consistency and the following two axioms by requiring the same condition for null events, we obtain Distorted Bayesian updating with $\delta>0$.} 

We characterize \nameref{hBayes} with one additional condition that we call \nameref{IIA}. This axiom ensures that updating behavior only depends on the probability of a state and not on the name of the state. Further, this condition also ensures that the relative distortions are independent of the realized event.

\begin{axm}[\textbf{Independence of Irrelevant Information}]\label{IIA} For any non-null $E_1,E_2\in  \Sigma \setminus S$, $s, s'\in E_1 \cap E_2$, and $p, q, r\in \Delta(X)$,
\[p \{s\} r \sim_{E_1} q \{s'\}r \text{ if and only if } p \{s\} r \sim_{E_2} q \{s'\}r.\]
\end{axm}

\begin{prps}\label{prp:d-BD} Consider a family of preference relations $\{\succsim_E\}_{E\in\Sigma}$ with an \nameref{IEU} representation. The \nameref{IEU} representation admits a \textbf{\nameref{hBayes} distance} if and only if \textbf{\nameref{Consistency}} and 
\textbf{\nameref{IIA}} hold. 
\end{prps}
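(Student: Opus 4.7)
The plan is to prove both directions of the equivalence, with sufficiency as the substantive direction. The common thread is to use the SEU structure on each conditional preference $\succsim_E$ to translate both axioms into statements about the ratios $\mu_E(s)/\mu_E(s')$, and then reconstruct $\delta$ from these ratios via the device already provided by \autoref{prp:BD}.

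For necessity, I verify each axiom under the assumption $\mu_E = \text{BU}(\delta(\mu), E)$ on non-null events, so that $\mu_E(s)/\mu_E(s') = \delta(\mu(s))/\delta(\mu(s'))$. For \nameref{Consistency}, the prior indifference $x\{s\}y \sim x\{s'\}y$ with $u(x) \neq u(y)$ is equivalent via SEU to $\mu(s) = \mu(s')$, which forces $\delta(\mu(s)) = \delta(\mu(s'))$ and hence $\mu_E(s) = \mu_E(s')$. For \nameref{IIA}, translating $p\{s\}r \sim_{E_k} q\{s'\}r$ via SEU on $\succsim_{E_k}$ yields $\mu_{E_k}(s)\,(u(p)-u(r)) = \mu_{E_k}(s')\,(u(q)-u(r))$, so the family of $(p,q,r)$ satisfying the indifference depends only on the ratio $\mu_{E_k}(s)/\mu_{E_k}(s')$; under \nameref{hBayes} this ratio equals $\delta(\mu(s))/\delta(\mu(s'))$ and is independent of $E_k$.

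For sufficiency, I proceed in three steps. First, reversing the SEU translation above shows that \nameref{IIA} delivers event-invariance of ratios: for any non-null $E_1, E_2$ with $s, s' \in E_1 \cap E_2$, we have $\mu_{E_1}(s)/\mu_{E_1}(s') = \mu_{E_2}(s)/\mu_{E_2}(s')$. Second, I fix a $\succsim$-non-null reference state $s^*$ (which exists since $\mu \in \Delta(S)$) and define $\phi(s) := \mu_{\{s,s^*\}}(s)/\mu_{\{s,s^*\}}(s^*)$ with $\phi(s^*) := 1$; the invariance in Step~1 then implies $\mu_E(s)/\mu_E(s^*) = \phi(s)$ for every non-null $E$ containing $s$ and $s^*$. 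Third, I invoke \nameref{Consistency}: whenever $\mu(s) = \mu(s')$ the prior indifference $x\{s\}y \sim x\{s'\}y$ (with $x \succ y$) holds, so applying \nameref{Consistency} to $E = \{s, s', s^*\}$ yields $\mu_E(s) = \mu_E(s')$ and hence $\phi(s) = \phi(s')$. Thus $\phi$ depends only on $\mu(s)$, so I may define $\delta : [0,1] \to \mathbb{R}_+$ by $\delta(\mu(s)) := \phi(s)$ on the range of $\mu$ and extend positively to $[0,1]$ in any convenient way. Choosing any strictly increasing, strictly concave $\sigma$ on $\mathbb{R}_+$ (for instance $\sigma(x) = \ln(x+1)$) and setting $d'_{\mu}(\pi) := \beta^\sigma(\delta(\mu), \pi)$, the first-order computation underlying \autoref{prp:BD}, extended to arbitrary positive vectors, gives $\argmin_{\pi \in \Delta(E)} d'_{\mu}(\pi) = \text{BU}(\delta(\mu), E)$, which by construction of $\phi$ coincides with $\mu_E$ on every non-null $E$.

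The main obstacle will be the treatment of null states. When $\mu(s) = 0$, defining $\phi(s)$ through the non-null event $\{s, s^*\}$ implicitly requires $\mu_{\{s,s^*\}}(s^*) > 0$; if the underlying \nameref{IEU} representation assigned $\mu_{\{s,s^*\}}(s^*) = 0$ then $\phi(s)$ would be ill-posed, so the argument must either rule out this pathology or absorb all null states into a single value $\delta(0)$ determined by a fixed non-null reference state. A secondary technical point arises in Step~1: when $\mu_{E_k}(s') = 0$, the SEU reduction does not directly yield a ratio, and one must instead argue from the \nameref{IIA} equivalence of indifference classes that the vanishing propagates across events in a mutually consistent way.
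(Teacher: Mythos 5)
Your argument follows essentially the same route as the paper's proof: translate both axioms through the SEU representations into statements about the ratios $\mu_E(s)/\mu_E(s')$, use \nameref{Consistency} to make $\delta$ well-defined on states with equal prior probability, use \nameref{IIA} to make the ratios event-invariant, and then recover the posteriors as $\text{BU}(\delta(\mu),E)$ via the computation behind \autoref{prp:BD} applied to the unnormalized vector $\delta(\mu)$. Your explicit verification of necessity and your flagging of the null-state issues are both fine (the paper leaves necessity implicit, and the representation only asserts $\mu_E=\text{BU}(\delta(\mu),E)$ for non-null $E$, which is why the paper can sidestep the pathology you describe).

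The one substantive deviation is where the gap lies: in Step 2 you anchor the construction at the two-element events $\{s,s^*\}$, defining $\phi(s)=\mu_{\{s,s^*\}}(s)/\mu_{\{s,s^*\}}(s^*)$. But $\Sigma$ is an \emph{arbitrary} collection of nonempty subsets of $S$ containing $S$; nothing guarantees that $\{s,s^*\}\in\Sigma$, so the conditional preference $\succsim_{\{s,s^*\}}$ (and hence $\phi(s)$) may simply not exist. Moreover, \nameref{IIA} is stated only for events in $\Sigma\setminus S$, so even when $S$ is the only event containing both $s$ and $s^*$ you cannot use it to tie the ratio at $S$ to the ratio elsewhere. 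The paper avoids this by reversing the order of your construction: it first builds a \emph{local} distortion $\delta_E$ inside each non-null $E\in\Sigma$ (choosing a reference state $s^*\in E$ with $\mu_E(s^*)>0$, which always exists), shows via \nameref{Consistency} that $\delta_E$ is well-defined as a function of $\mu(s)$, and only then uses \nameref{IIA} on pairs of states lying in the intersection of two events that actually belong to $\Sigma$ to conclude that the $\delta_E$ agree up to normalization. To repair your proof you should replace the fixed global anchor $s^*$ and the events $\{s,s^*\}$ with a per-event reference state, i.e., adopt the paper's local-then-glue construction; the rest of your argument, including the choice of $\sigma$ and the verification that $\argmin_{\pi\in\Delta(E)}\beta^\sigma(\delta(\mu),\pi)=\text{BU}(\delta(\mu),E)$, then goes through unchanged.
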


We can now characterize Monotonic \nameref{hBayes} distance by introducing a condition ensuring that the DM preserves the ``more likely than'' judgments implied by her prior.

\begin{axm}[\textbf{Monotonicity}]\label{Mon} For any non-null $E\in\Sigma$, $s, s'\in E$, and $x, y\in X$,
\[x \{s\} y\succsim x \{s'\} y\text{ if and only if } x \{s\} y\succsim_{E} x \{s'\} y.\]
\end{axm}

To understand \nameref{Mon},  consider $S=\{s_1,s_2,s_3\}$, $\mu=(12/20, 7/20, 1/20)$, and $E=\{s_2,s_3\}$. Under \nameref{DC}, relative likelihoods are exactly preserved and so a Bayesian DM continues to believe that $s_2$ is seven times as likely as $s_3$ upon learning $E$. Without \nameref{DC}, the \nameref{IEU} would place no restrictions on the conditional relative likelihoods of $s_2$ and $s_3$. Since our DM believed that $E$ was relatively unlikely, it is plausible that she is now less confident in her judgment about the relative odds of $s_2$ and $s_3$. Consequently, she may desire to further modify her belief. For example, she may now think that $s_2$ is only twice as likely as $s_3$, resulting in the posterior $\mu_E=(2/3,1/3)$. Notice that $s_2$ is still more likely than $s_3$; she does not entirely disregard her previous judgments. This restriction is precisely the content of \nameref{Mon}.

\begin{prps}\label{prp:mon} Consider a family of preference relations $\{\succsim_E\}_{E\in\Sigma}$ with an \nameref{IEU} representation. The \nameref{IEU} representation admits a \textbf{Monotonic Distorted Bayesian distance} if and only if \textbf{\nameref{Mon}} and \textbf{\nameref{IIA}} hold. 
\end{prps}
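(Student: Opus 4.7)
The plan is to leverage Proposition~\ref{prp:d-BD} as a stepping stone, showing that the strengthening of \nameref{Consistency} to \nameref{Mon} corresponds exactly to strict monotonicity of the distortion function $\delta$.

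For the necessity direction, suppose $\{\succsim_E\}_{E \in \Sigma}$ admits an \nameref{IEU} representation with a Monotonic \nameref{hBayes} distance. Then \nameref{IIA} follows from Proposition~\ref{prp:d-BD} since Monotonic Distorted Bayesian is a special case of Distorted Bayesian. To verify \nameref{Mon}, fix a non-null $E \in \Sigma$, states $s, s' \in E$, and outcomes $x, y \in X$. The SEU computation under the prior $\mu$ gives
\[
x\{s\}y \succsim x\{s'\}y \iff [\mu(s) - \mu(s')]\,[u(x) - u(y)] \ge 0,
\]
and the analogous computation under $\succsim_E$ uses $\mu_E$ in place of $\mu$. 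Since $\mu_E(s) = \delta(\mu(s))/\sum_{j \in E} \delta(\mu(s_j))$ and $\delta$ is strictly increasing, $\mu(s) \ge \mu(s') \iff \mu_E(s) \ge \mu_E(s')$, so the two act comparisons coincide, yielding \nameref{Mon}.

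For the sufficiency direction, assume \nameref{Mon} and \nameref{IIA}. First observe that \nameref{Mon} implies \nameref{Consistency}: if $x\{s\}y \sim x\{s'\}y$, both weak preferences hold under $\succsim$, hence both transfer to $\succsim_E$ by \nameref{Mon}, yielding $x\{s\}y \sim_E x\{s'\}y$. By Proposition~\ref{prp:d-BD}, we obtain a \nameref{hBayes} representation with some $\delta$. Picking $x, y$ with $u(x) > u(y)$ (available by Nontriviality) and applying the SEU calculation to \nameref{Mon} on any non-null $E \ni s, s'$ yields
\[
\mu(s) \ge \mu(s') \iff \mu_E(s) \ge \mu_E(s') \iff \delta(\mu(s)) \ge \delta(\mu(s')),
\]
with the strict version analogous. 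Hence $\delta$ is strictly order-preserving on the pairs of $\mu$-values arising from states that co-occur in some non-null event.

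The main obstacle is promoting this pointwise order preservation to a strictly increasing function on all of $[0,1]$. Since $\{\mu(s) : s \in S\}$ is finite, equal $\mu$-values automatically produce equal $\delta$-values and pairs constrained by \nameref{Mon} are strictly ordered, so an elementary interpolation delivers a strictly increasing extension of $\delta$ to $[0,1]$; this leaves the representation intact because posteriors depend only on $\delta$ evaluated at the finite set $\{\mu(s)\}$. Pairs of states that never co-occur in any non-null event impose no behavioral constraint and can be handled by first rescaling $\delta$ independently within each connected component of the hypergraph of non-null events---such rescalings leave the ratios $\delta(\mu(s))/\delta(\mu(s'))$ within any event unchanged and so preserve all posteriors---before interpolating to obtain the desired strictly increasing extension. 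This completes the construction.
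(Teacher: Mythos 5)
Your argument tracks the paper's own proof almost step for step: the necessity direction is the routine SEU computation (which the paper omits as immediate), and for sufficiency both you and the paper observe that \nameref{Mon} implies \nameref{Consistency}, invoke Proposition~\ref{prp:d-BD} to obtain some $\delta$ with $\mu_E=\text{BU}(\delta(\mu),E)$, use \nameref{Mon} to conclude that $\mu(s)>\mu(s')$ forces $\delta(\mu(s))>\delta(\mu(s'))$ whenever $s,s'$ co-occur in a non-null event, and then extend $\delta$ from the finite set $\{\mu(s)\}_{s\in S}$ to a strictly increasing function on $[0,1]$. Up to that point everything you write is correct and is exactly the paper's route.

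The one place you go beyond the paper --- the treatment of states that never co-occur in a non-null event --- is where your argument breaks, and it is worth being precise because you have put your finger on a genuine subtlety that the paper itself glosses over. Your claim is that rescaling $\delta$ by an independent positive constant on each connected component of the co-occurrence hypergraph always permits a strictly increasing extension. It does not: the within-component \emph{ratios} of $\delta$-values are pinned down by the posteriors, and these ratios can be incompatible with the global order of the prior values no matter how the components are scaled. Concretely, take $S=\{s_1,s_2,s_3,s_4\}$, $\mu=(0.1,0.2,0.3,0.4)$, and $\Sigma=\{S,\{s_1,s_4\},\{s_2,s_3\}\}$ with $\mu_{\{s_1,s_4\}}=(1/3,2/3)$ and $\mu_{\{s_2,s_3\}}=(1/101,100/101)$. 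This family satisfies \nameref{SEU}, \nameref{CON}, \nameref{DCoh}, \nameref{Mon} (each posterior preserves the prior's strict ranking within its event), and \nameref{IIA} (vacuous, as the two proper events are disjoint). Any generating $\delta$ must satisfy $\delta(0.4)=2\,\delta(0.1)$ and $\delta(0.3)=100\,\delta(0.2)$; if $\delta$ were strictly increasing we would need $\delta(0.3)<\delta(0.4)=2\,\delta(0.1)$ while also $\delta(0.3)=100\,\delta(0.2)>100\,\delta(0.1)$, a contradiction. So no choice of component scales (nor any interpolation afterwards) can work. The correct conclusion is that the cross-component case cannot be repaired by your rescaling device; it must instead be excluded, e.g.\ by a richness assumption on $\Sigma$ guaranteeing that any two states in $\text{sp}(\mu)$ lie in a common non-null event (as when $\Sigma$ contains all pairs), under which the paper's one-line interpolation --- and the first half of your extension step --- is already sufficient. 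As written, your final paragraph asserts a construction that fails, so the proof is incomplete exactly there.
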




%
%
%
%
%
%

Below we present several examples of \nameref{hBayes} updating. In each of the following examples, we let $S = \{s_1 ,s_2,s_3\}$, and suppose $\mu=(12/20, 7/20, 1/20)$.  In each of the tables, blue (light) shading indicates that the state is under-weighted relative to Bayes' rule, while red (dark) shading indicates the state is over-weighted.

\begin{exm}[Bayesian]\label{BayesUp} Our Distorted Bayesian model includes Bayesian updating as the special case $\delta(x)=x$. These posteriors are given in the table below and will serve as the benchmark to describe our other examples. 

	\begin{table}[h]
		\centering
		\begin{tabular}{ | c | c | c | c |}
			\hline $s / A $ & $\{s_1,s_2\}$ & $\{s_2,s_3\}$ & $\{s_1,s_3\}$ \\
			\hline $s_1$  & $0.63$   & $-$ &  $0.92$ \\
			\hline  $s_2$ & $0.37$   & $0.875$ & $-$ \\  
			\hline  $s_3$ & $-$   & $0.125$ & $0.08$ \\  \hline
		\end{tabular}
		\caption{Bayes' Posteriors for various events}\label{ex1-Bayes}
	\end{table}\end{exm}

\begin{exm}[Under/Over-Reaction] Suppose for some $\alpha>0$, \[\delta(x)= x^{\alpha}.\] Note that for $\alpha=1$ this reduces to Bayes' rule (see \autoref{ex1-Bayes}) . For $\alpha < 1$, the relative probabilities are ``compressed,'' capturing under-reaction to the higher probability state.  One the other hand, when $\alpha > 1$, relative probabilities are ``exaggerated,'' capturing over-reaction to the higher probability state.
%



	\begin{table}[h]
		\centering
		\begin{tabular}{ | c | c | c | c |}
			\hline $s / A $ & $\{s_1,s_2\}$ & $\{s_2,s_3\}$ & $\{s_1,s_3\}$ \\
			\hline $s_1$  & \cellcolor{blue!10}$0.606$   & $-$ &  $  \cellcolor{blue!10}0.88$ \\
			\hline  $s_2$ & $0.394$   & $  \cellcolor{blue!10} 0.826$ & $-$ \\  
			\hline  $s_3$ & $-$   & $0.174$ & $0.12$ \\  \hline
		\end{tabular}
\quad
		\begin{tabular}{ | c | c | c | c |}
			\hline $s / A $ & $\{s_1,s_2\}$ & $\{s_2,s_3\}$ & $\{s_1,s_3\}$ \\
			\hline $s_1$  &  \cellcolor{red!50}$0.656$   & $-$ &  $\cellcolor{red!50}0.95$ \\
			\hline  $s_2$ & $0.344$   & $\cellcolor{red!50}0.912$ & $-$ \\  
			\hline  $s_3$ & $-$   & $0.088$ & $0.05$ \\  \hline
		\end{tabular}
		\caption{Posteriors for $\alpha=0.8$ and $\alpha=1.2$.}\label{ex2}
	\end{table}
Comparing to the Bayesian posteriors in \autoref{ex1-Bayes}, it is simple to see that when $\alpha<1$ the DM always under-weights the more likely state, and when $\alpha>1$ the DM always over-weights the more likely state.

\end{exm} 
\begin{exm}[$S$-reaction] When $\delta$ has a sigmoid shape, it simultaneously captures under-reaction to ``expected states'' and over-reaction to ``unexpected states.'' For some $x_0 \in \R$ and $a >0$,  \[\delta(x)= \frac{1}{1+e^{a(x_0-x)}}.\]

	\begin{table}[h]
		\centering
	\begin{tabular}{ | c | c | c | c |}
			\hline $s / A $ & $\{s_1,s_2\}$ & $\{s_2,s_3\}$ & $\{s_1,s_3\}$ \\
			\hline $s_1$  & \cellcolor{red!50}$0.69$   & $-$ & \cellcolor{blue!10}$0.91$ \\
			\hline  $s_2$ & $0.31$   & \cellcolor{blue!10}$0.821$ & $-$ \\  
			\hline  $s_3$ & $-$   & $0.179$ & $0.09$ \\  \hline
		\end{tabular}
		\caption{Posteriors for $S$-reaction with $a=6, x_0=0.5$.}\label{ex3}
	\end{table}

Compared to the Bayesian posteriors, the DM over-weights $s_1$ after $\{s_1, s_2\}$, exhibiting features of over-reaction, while the DM under-weights $s_1$ after $\{s_1, s_3\}$ and  $s_2$ in after $\{s_2, s_3\}.$ This is because the shape of $\delta$ induces over-reaction to rare events, thereby increasing the probability of $s_3$. 

\end{exm}

\begin{exm}[Confirmation Bias]\label{conf} Confirmation bias refers to the tendency to give extra credence to ``believed hypothesis.'' For some $b > 0$, let  \[\delta(x)= x + b\, \mathds{1}\left\{x > \frac12 \right\}. \]  Under this rule, states which are believed to be more likely are biased by $b$. 

\begin{table}[h]
		\centering
	\begin{tabular}{ | c | c | c | c |}
			\hline $s / A $ & $\{s_1,s_2\}$ & $\{s_2,s_3\}$ & $\{s_1,s_3\}$ \\
			\hline $s_1$  &  \cellcolor{red!50} $\frac{12+20b}{19+20b}$   & $-$ &   \cellcolor{red!50} $\frac{12+20b}{13+20b}$ \\
			\hline  $s_2$ & $\frac{7}{19+20b}$   & $.875$ & $-$ \\  
			\hline  $s_3$ & $-$   & $.125$ & $\frac{1}{13+20b}$ \\  \hline
		\end{tabular}
		\caption{Posteriors under Confirmation Bias.}\label{ex3}
	\end{table}
The DM always over-reacts to  $s_1$, her favored state, whenever information allows. When the information precludes $s_1$ she behaves in accordance with Bayes' rule. 
\end{exm}

\subsection{Other Forms of Non-Bayesian Updating}\label{sec:nonBayes}

There are of course many forms of non-Bayesian updating captured by \nameref{IEU} that fall outside of \nameref{hBayes}.  Below we illustrate how the \nameref{mixedbayes} distance can capture motivated reasoning and wishful thinking. 

\begin{exm}[Mixed Bayesian Optimism]\label{hist} We still let $S = \{s_1 ,s_2,s_3\}$ and suppose $\mu=(12/20, 7/20, 1/20)$, as before. Now suppose our DM uses the \nameref{mixedbayes} distance with $\rho = (0, 0, 1)$, where $\rho$ captures the idea that $s_3$ is the ``best state,'' i.e., the DM has a motivation to believe that $s_3$ is true.  

After $\{s_1,s_2\}$ is realized, the posteriors are identical to the Bayesian posteriors because $s_3$ has been ruled out. For the other two events, the DM exhibits ``reversals.'' Under both $\{s_2, s_3\}$ and $\{s_1,s_3\}$ the DM believes $s_3$ is now the most likely state, which violates \nameref{Mon}. The belief after $\{s_2,s_3\}$ is more extreme because $\{s_2,s_3\}$ is ``unexpected'' under the prior, which pushes the DM more toward $\rho$. 

	\begin{table}[h]
		\centering
		\begin{tabular}{ | c | c | c | c |}
			\hline $s / A $ & $\{s_1,s_2\}$ & $\{s_2,s_3\}$ & $\{s_1,s_3\}$ \\
			\hline $s_1$  & $0.63$   & $-$ &  $0.36$ \\
			\hline  $s_2$ & $0.37$   & $0.25$ & $-$ \\  
			\hline  $s_3$ & $-$   & $0.75$ & $0.64$ \\  \hline
		\end{tabular}
		\caption{Posteriors under Mixed Bayesian updating}\label{ex1-MBayes}
	\end{table}

\end{exm}

%

\section{Updating After Zero-probability Events}\label{zero}

The most well-known limitation of Bayesian updating is that it is incomplete; it is not defined for zero-probability events. This is particularly problematic in game theoretic settings, where beliefs are induced by the equilibrium strategies and any action off the equilibrium path leads to a zero-probability event. In contrast, our notion of belief updating is well-defined for zero-probability events. Thus, \nameref{IEU} provides a way to extend (non-)Bayesian updating to all events. 

\subsection{Conditional Probability System}\label{CPSsection}

Perhaps the most well-known method for handling beliefs conditional on null-events is the conditional probability system (CPS) introduced by \cite{Myerson1886a, Myerson1886b}.\footnote{The idea of CPS goes back to \cite{renyi1955new}.} The development of CPS is closely related the developments of Perfect Bayesian Equilibrium and its refinements.  PBE requires that agents's beliefs are Bayes-consistent with the prior whenever possible. However, PBE does not make any restrictions when Bayes' rule is not applicable. Hence, PBE may allow for some unreasonable beliefs after actions off the equilibrium path. The Sequential Equilibria of \cite{KrepsWilson1982} refines the PBE by requiring that any belief in sequential equilibria should be a limit of full-support beliefs after applying Bayes rule accordingly. Checking whether conditional beliefs can be supported by full-support beliefs is not an easy task, and \cite{Myerson1886a, Myerson1886b} shows that this limit requirement of sequential equilibria is equivalent to the following simple condition. 

\begin{defn}\label{cps} A \textbf{Conditional Probability System} (CPS) is a collection $\{\mu_{E}\}_{E\in\Sigma}$ of conditional probability distributions such that for all $s \in F \subseteq E$, 
\begin{equation}\label{CPSeq}\mu_E(s)=\mu_F(s)\, \mu_E(F).\end{equation}
\end{defn}
When $\mu_E(F)\neq 0$, \autoref{CPSeq} reduces to Bayes' rule. However, when $\mu_E(F)=0$, it implies that $\mu_E(s)=0$ as well, and so it places no restriction directly on $\mu_F(s)$.

As we will show below, CPS is a special case of our model. A major distinction between CPS and \nameref{IEU} is that CPS requires Bayesian updating whenever possible, while   \nameref{IEU} provides a unifying framework that allows for Bayesian and non-Bayesian updating. To characterize CPS, we introduce the following strengthening of \nameref{DC}. 

\begin{axm}[\textbf{Conditional Consistency}]\label{CC} For all $E \in \Sigma$, $\succsim_{E}$-non-null $A \subset E$, and $f, g, h\in \cal F$,  $$f\,A\, h \succsim_{E} g\,A\, h ~\text{ if and only if }~ f\succsim_{A} g.$$
\end{axm}

\nameref{CC} implies \nameref{DC} but also has bite on events that are $\succsim$-null. In essence, \nameref{CC} extends the logic of \nameref{DC} to all conditional preferences $E$ and nested events that are $\succsim_E$-non-null.

To illustrate \nameref{CC}, imagine a coin flip. The states $h$ and $t$ are the usual outcomes of heads or tails, $e$ and $e'$ denote edges where $e'$ has been warn thin, while $l_1$ and $l_2$ denote landing on a marked location, which yields the state space $S=\{h, t, e, e', l_1, l_2\}$.  Initially, the DM believes that $\mu(h)=\mu(t)=\frac12$, and treats the other states as null. 

Suppose the DM is informed that, astonishingly, the coin did not land on a face; $A=\{e, e', l_1, l_2 \}$ was realized.   Further, suppose that our DM believes that the coin landing on either of the marked locations is \emph{more impossible} than its landing on an edge. Accordingly, her conditional beliefs are $\mu_A(\{e, e'\})=1$ and $\mu_A(\{l_1, l_2\})=0$. If this information is further refined so that $\{e\}$ is ruled out and our DM continues to utilize Bayes' rule, then we expect $\mu_B(\{e'\})=1$ (where $B=\{e', l_1, l_2\}$).  \nameref{CC} imposes Dynamic Consistency between $\mu_A$ and $\mu_B$ because $B$ becomes $\succsim_A$-non-null and $B \subset A$. 

Our next theorem states that \nameref{CC} is the precise strengthening of \nameref{DC} required to characterize CPS. 

\begin{thm}\label{CPSthrm} A family of preference relations $\{\succsim_E\}_{E\in\Sigma}$ satisfies \textbf{\nameref{SEU}}, \textbf{\nameref{CON}}, and \textbf{\nameref{CC}} if and only if it admits a \textbf{CPS representation}.
\end{thm}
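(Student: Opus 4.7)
The plan is to prove necessity and sufficiency separately, with the central observation being that \nameref{CC} is exactly the behavioral encoding of the Bayesian-update relation between $\mu_E$ and $\mu_F$ whenever $F \subseteq E$ is $\succsim_E$-non-null; this is precisely the content of the CPS chain rule (\autoref{CPSeq}) in the non-trivial case, while the chain rule is vacuous when $\mu_E(F) = 0$.

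For necessity, suppose $\{\succsim_E\}_{E \in \Sigma}$ is represented by a common Bernoulli utility $u$ and a CPS $\{\mu_E\}_{E \in \Sigma}$. The \nameref{SEU} postulates and \nameref{CON} are immediate since each $\succsim_E$ is SEU with $(u, \mu_E)$ and each $\mu_E$ is supported on $E$. For \nameref{CC}, fix $E$ and an $\succsim_E$-non-null $A \subset E$, i.e., $\mu_E(A) > 0$. The CPS chain rule gives $\mu_A(s) = \mu_E(s)/\mu_E(A)$ for $s \in A$, so $\mu_A$ is the Bayesian update of $\mu_E$ on $A$. Computing the SEU values, $f\,A\,h \succsim_E g\,A\,h$ has the $h$-contributions cancel, leaving $\sum_{s \in A} \mu_E(s) u(f(s)) \ge \sum_{s \in A} \mu_E(s) u(g(s))$, which after dividing by $\mu_E(A) > 0$ is exactly $f \succsim_A g$.

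For sufficiency, apply the \cite{anscombeaumann1963} theorem (delivered by \nameref{SEU}) to each $\succsim_E$ to obtain a Bernoulli utility $u_E$ and a probability $\mu_E \in \Delta(S)$, and use Invariant Risk Preference to normalize $u_E = u$ for every $E$. \nameref{CON} then forces $\mu_E \in \Delta(E)$. To verify the CPS chain rule, fix $s \in F \subseteq E$ with $F, E \in \Sigma$. If $\mu_E(F) > 0$, then $F$ is $\succsim_E$-non-null, so \nameref{CC} yields $f\,F\,h \succsim_E g\,F\,h$ iff $f \succsim_F g$ for all $f, g, h$. Writing both sides via SEU (with the $h$-terms cancelling on the left), the two resulting linear functionals on $f$ must represent the same preference over acts, which by uniqueness of the Anscombe-Aumann probability forces $\mu_F(s) = \mu_E(s)/\mu_E(F)$ for every $s \in F$; rearranging gives the CPS equation. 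If instead $\mu_E(F) = 0$, then $\mu_E(s) = 0$ for each $s \in F$, and the chain rule $\mu_E(s) = \mu_F(s)\mu_E(F)$ holds trivially.

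The main obstacle is conceptual rather than technical: one must recognize that \nameref{CC} encodes Bayesian coherence between $\succsim_E$ and $\succsim_F$ in exactly those cases where the CPS chain rule places any restriction, and that the null case of the chain rule is automatically satisfied by non-negativity of probabilities. Once this is seen, the argument reduces to a routine Anscombe-Aumann extraction followed by a two-case verification, with no fixed-point, compactness, or limit argument required.
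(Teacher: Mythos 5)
Your proposal is correct and follows essentially the same route as the paper: extract an SEU pair $(u,\mu_E)$ for each conditional preference, use Consequentialism to force $\mu_E\in\Delta(E)$, then use Conditional Consistency together with uniqueness of the representing probability to obtain $\mu_F=\text{BU}(\mu_E,F)$ whenever $\mu_E(F)>0$, with the null case of the chain rule holding trivially. The only difference is that you spell out the necessity direction, which the paper dismisses as trivial.
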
 

While our theorem ensures that the collection of beliefs satisfies the requirement of a CPS (\autoref{cps}), it does not directly shed light on the structure of the CPS. It does not imply yet that CPS is a special case of  \nameref{IEU}.

 Our next proposition shows that any CPS is a special case of \nameref{IEU} and it can be described by a collection of beliefs whose supports partition $S$. Further, the DM moves between these beliefs in an ``ordered'' fashion and this CPS representation is generated by a support-dependent bayesian distance. 

\begin{prps}\label{CPS=MD} Suppose a family of preferences $\{\succsim_{E}\}_{E\in \Sigma}$ admits a \textbf{CPS representation}. Then
there are $\mu^0, \ldots, \mu^K\in\Delta(S)$ such that $\text{sp}(\mu^0), \ldots, \text{sp}(\mu^K)$ is a partition of $S$ and for any $E\in\Sigma$, \[\mu_{E}=\text{BU}(\mu^{k^*}, E)\text{ where }k^*=\min\{k \mid \mu^k(E) > 0\}.\]

\noindent Moreover, $\{\succsim_{E}\}_{E\in \Sigma}$ has an \textbf{\nameref{IEU} representation} with respect to the following distance function: 
\[d_\mu(\pi)=\beta^\sigma(\mu^{k^*}_i, \pi) +k^*\,\big(\sigma(1)+|\sigma(0)|\big)\,\]
where $k^*=\min\{k \mid \mu^k( \text{sp}(\pi))>0\}$.\footnote{The first part of this proposition is not entirely new. \cite{KrepsWilson1982} already pointed out a connection between sequential equilibria beliefs and a collection of linearly ordered priors $\mu^0, \ldots, \mu^K$.}
\end{prps}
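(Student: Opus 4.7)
The proposition has two distinct conclusions, which I would tackle in sequence: (a) construct $\mu^0, \ldots, \mu^K$ with disjoint supports partitioning $S$ and verify $\mu_E = \text{BU}(\mu^{k^*}, E)$; and (b) verify that the displayed distance function yields this CPS as an \nameref{IEU} representation.

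For (a), I would build the chain recursively. Set $\mu^0 := \mu_S$ and, having defined $\mu^0,\ldots,\mu^{k-1}$, let $S_k := S \setminus \bigcup_{j<k}\text{sp}(\mu^j)$; if $S_k \neq \emptyset$, put $\mu^k := \mu_{S_k}$, otherwise stop and let $K=k-1$. A minor technical point is that $S_k$ may not belong to $\Sigma$; I would resolve this by invoking the standard fact that any CPS on $\Sigma$ extends uniquely to a CPS on all nonempty subsets of $S$, so we may enlarge $\Sigma$ at the outset. Since \nameref{CON} forces $\mu^k \in \Delta(S_k)$, the supports are pairwise disjoint, and finiteness of $S$ guarantees termination. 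To get $\mu_E = \text{BU}(\mu^{k^*},E)$, note that $\mu^j(E)=0$ for $j<k^*$ means $E \cap \text{sp}(\mu^j) = \emptyset$ for all such $j$, hence $E \subseteq S_{k^*}$. Then $\mu^{k^*}(E) = \mu_{S_{k^*}}(E) > 0$, and the CPS identity $\mu_{S_{k^*}}(s) = \mu_E(s)\,\mu_{S_{k^*}}(E)$ on $E \subseteq S_{k^*}$ directly delivers $\mu_E(s) = \mu^{k^*}_s/\mu^{k^*}(E)$.

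For (b), I need $\arg\min_{\pi \in \Delta(E)} d_\mu(\pi) = \mu_E$. Any $\pi \in \Delta(E)$ satisfies $\text{sp}(\pi) \subseteq E$, forcing $k^*(\pi) \geq k^*$. When $k^*(\pi)=k^*$, the penalty is constant and Proposition \ref{prp:BD} identifies $\text{BU}(\mu^{k^*},E)=\mu_E$ as the unique minimizer of $\beta^\sigma(\mu^{k^*},\cdot)$ over $\Delta(E)$. The main obstacle is the case $k^*(\pi)>k^*$: I must show that the added penalty $(\sigma(1)+|\sigma(0)|)$ strictly exceeds any possible decrease in the $\beta^\sigma$ term. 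By Jensen applied with the probability weights $\mu^{k^*(\pi)}$, we have $\beta^\sigma(\mu^{k^*(\pi)},\pi)\geq -\sigma(1)$, while direct computation yields $\beta^\sigma(\mu^{k^*},\mu_E)=-q\sigma(1/q)-(1-q)\sigma(0)$ with $q:=\mu^{k^*}(E)\in(0,1]$. The required strict inequality $d_\mu(\pi)>d_\mu(\mu_E)$ then reduces to
\[
|\sigma(0)| + q\,\sigma(1/q) + (1-q)\,\sigma(0) > 0,
\]
which a short case split on the sign of $\sigma(0)$ confirms: if $\sigma(0)\geq 0$ the left side equals $(2-q)\sigma(0)+q\sigma(1/q)>0$, and if $\sigma(0)<0$ it equals $q\bigl(\sigma(1/q)-\sigma(0)\bigr)>0$, using strict monotonicity of $\sigma$ in each case. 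Finally, $d_\mu$ is a genuine distance function because $d_\mu(\mu)=\beta^\sigma(\mu^0,\mu^0)=-\sigma(1)$ (as $k^*(\mu)=0$), and the preceding case analysis shows $d_\mu(\pi)>-\sigma(1)$ for every $\pi\neq\mu$.
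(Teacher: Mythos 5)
Your proposal is correct and follows essentially the same route as the paper: the same inductive construction of $\mu^0,\dots,\mu^K$ via successive null sets together with the CPS identity for part (a), and for part (b) the same decomposition of $\Delta(E)$ by the level $k^*(\pi)$, with Proposition \ref{prp:BD} handling the base level and the bound $\beta^\sigma(\cdot,\cdot)\ge -\sigma(1)$ (the paper's Lemma 1, which you re-derive via Jensen) showing that the penalty term dominates. You are in fact slightly more careful than the paper in two places---explicitly verifying that $\mu$ is the global minimizer of $d_\mu$ and flagging that $S_k$ need not lie in $\Sigma$---though the claimed \emph{uniqueness} of the CPS extension to all nonempty subsets is an overstatement (only existence is needed; e.g.\ when $\Sigma=\{S\}$ many full CPSs extend the data).
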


Note that Proposition \ref{os2} is a special case of the above result when $K=1$. 

\begin{exm}[Coin Flip]\label{coin}
Recall the coin flip example from before, where the states are $S=\{h, t, e, e', l_1, l_2\}$, where $h$ and $t$ correspond to heads or tails, $e$ and $e'$ correspond to the coin landing on an edge, where one edge is thinner than the other, while $l_1$ and $l_2$ correspond to the coin landing on precisely marked locations. These possibilities are described by the probability distributions 
\[ \mu^0(s)=\begin{cases} \frac12 & s \in \{h, t\} \\ 0 & \text{otherwise}\end{cases}; \mu^1(s)=\begin{cases} \frac78 & s = e \\ \frac18 & s =e' \\ 0 & \text{otherwise}\end{cases};\text{ and }\mu^2(s)=\begin{cases} \frac12 & s \in \{l_1, l_2\} \\ 0 & \text{otherwise}\end{cases}. \]

Our DM has the initial prior $\mu^0$ (i.e., $\succsim$ has an SEU representation with $(u, \mu^0)$). Suppose she observes $A=\{e, e', l_1, l_2\}$. Since $\mu^0(A)=0$, Bayesian updating is not defined. After $A$, the DM selects $\mu^1$ (i.e., $\mu_A=\mu^1$) because it is of ``lower order'' than $\mu^2$ and therefore it takes precedence. 
\end{exm}

\subsection{Hypothesis Testing}\label{HTsection}

A recent and elegant addition to the literature on updating after zero-probability events is the Hypothesis Testing model (HT) of \cite{ortoleva2012}.  Such an agent will update using Bayes' rule for expected events: events with probability above some threshold $\epsilon$. When an event $E$ is unexpected (i.e., under the agent's prior $\mu(E)\le\epsilon$), the agent rejects her prior, updates a second-order prior over beliefs, and selects a new belief according to a maximum likelihood procedure. Formally, a HT representation is given by a triple, $(\mu,\rho,\epsilon)$, consisting of a prior $\mu \in \Delta(S)$, a second order prior $\rho\in\Delta(\Delta(S))$, and a threshold $\epsilon\in [0, 1)$ with the requirement that $\mu=\arg\max_{\pi\in \Delta(S)} \rho(\pi)$. Then, for any $E\in\Sigma$, 
\[\mu_{E}=
\begin{cases}
\text{BU}(\mu, E)&\text{ if }\mu(E)> \epsilon,\\
\text{BU}(\pi^{E}, E)&\text{ otherwise}.\end{cases}
\]
where $\pi^{E}=\arg\max_{\pi\in \Delta(S)} \rho(\pi) \pi(E)$. It turns out that HT is behaviorally equivalent to \nameref{IEU}. 

\begin{cor}\label{HT=IEU} A family of preference relations $\{\succsim_E\}_{E\in\Sigma}$ admits an \textbf{HT representation} if and only if it admits an \textbf{\nameref{IEU} representation}. 
\end{cor}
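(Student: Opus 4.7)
The plan is to derive the corollary directly from \autoref{repthrm} together with Ortoleva's (2012) characterization of HT. Both representations are pinned down by the same three behavioral postulates---\nameref{SEU}, \nameref{CON}, and \nameref{DCoh}---so the equivalence follows once we confirm that the axiom systems coincide.

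For the direction from HT to \nameref{IEU}, I would suppose $\{\succsim_E\}_{E\in\Sigma}$ admits an HT representation $(\mu,\rho,\epsilon)$. Ortoleva's representation theorem then guarantees that the family satisfies \nameref{SEU}, \nameref{CON}, and \nameref{DCoh}---indeed, as noted in the paper, \nameref{DCoh} was introduced precisely for this purpose. Applying \autoref{repthrm} delivers an \nameref{IEU} representation.

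For the converse, I would suppose $\{\succsim_E\}_{E\in\Sigma}$ admits an \nameref{IEU} representation. By \autoref{repthrm}, the family satisfies \nameref{SEU}, \nameref{CON}, and \nameref{DCoh}. Invoking Ortoleva's theorem in the reverse direction yields an HT representation.

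The main obstacle is really just the careful bookkeeping needed to confirm that our axiom system matches Ortoleva's exactly: the six conditions in \nameref{SEU} (in particular invariant risk preferences), \nameref{CON}, and the precise statement of \nameref{DCoh}. If instead one wanted a direct, constructive translation, the natural attempt would be the following: given an \nameref{IEU} representation with prior $\mu$, take a second-order prior $\rho$ whose support contains $\mu$ together with, for each event $E$ on which $\mu_E \neq \text{BU}(\mu,E)$, a candidate $\pi^E$ satisfying $\text{BU}(\pi^E, E) = \mu_E$; then tune the weights of $\rho$ and the threshold $\epsilon$ so that the maximum-likelihood selection rule picks $\pi^E$ on exactly those events where the \nameref{IEU} posterior disagrees with Bayes. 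Verifying that such $\rho$ and $\epsilon$ can always be chosen consistently (in particular, respecting the constraint $\mu = \argmax_\pi \rho(\pi)$ and avoiding conflicts across different events) is the one genuinely non-trivial technical step, but the axiomatic route above sidesteps it entirely.
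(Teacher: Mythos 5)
Your proposal is correct and is exactly the paper's own argument: the corollary is obtained by combining \autoref{repthrm} with Theorem~1 of \cite{ortoleva2012}, since both representations are characterized by the same axioms (\nameref{SEU}, \nameref{CON}, and \nameref{DCoh}). The constructive translation you sketch as an alternative is not needed and is not pursued in the paper either.
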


This corollary follows from our \autoref{repthrm} and Theorem 1 of \cite{ortoleva2012}.  However, it is important to note that our proof techniques are quite different.

\subsection{Relating HT and CPS}

The formal relationship between HT and CPS has not previously been established. Our results, \autoref{HT=IEU} and \autoref{CPS=MD}, indirectly show that CPS is a special case of HT. Further, since every CPS satisfies Bayes' rule, it is a special case of HT with $\epsilon=0$. 

\begin{cor}\label{CPS-HTE} If a family of preference relations $\{\succsim_E\}_{E\in\Sigma}$ admits an \textbf{CPS representation}, then it admits an \textbf{HT representation} with $\varepsilon=0$.\end{cor}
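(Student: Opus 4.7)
The plan is to combine Proposition \ref{CPS=MD} with an explicit construction of a second-order prior $\rho$ that encodes the ordered hierarchy $\mu^0,\mu^1,\ldots,\mu^K$ provided by CPS. By Proposition \ref{CPS=MD}, a CPS representation gives a finite ordered collection of beliefs whose supports partition $S$, and for every $E\in\Sigma$ one has $\mu_E=\text{BU}(\mu^{k^*},E)$ where $k^*=\min\{k\mid \mu^k(E)>0\}$. The initial prior of the CPS is $\mu^0$ (take $E=S$). I want to construct a triple $(\mu,\rho,\varepsilon)$ that recovers this updating rule with $\varepsilon=0$ and $\mu=\mu^0$.

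First, I would set $\mu=\mu^0$, $\varepsilon=0$, and construct $\rho\in\Delta(\Delta(S))$ supported on $\{\mu^0,\ldots,\mu^K\}$ as follows. Let
\[\delta=\min_{0\le k\le K}\min_{s\in\text{sp}(\mu^k)}\mu^k(s)>0,\]
choose any $r\in(0,\delta)$, and define $\rho(\mu^k)=Cr^k$ for $k=0,\ldots,K$, where $C=\bigl(\sum_{k=0}^K r^k\bigr)^{-1}$ is the normalizing constant; put $\rho$-mass $0$ on all other beliefs. Since $r<1$, the map $k\mapsto\rho(\mu^k)$ is strictly decreasing, so $\mu^0=\arg\max_{\pi\in\Delta(S)}\rho(\pi)$, verifying the HT compatibility requirement $\mu=\arg\max\rho$.

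Second, I would verify that this $(\mu^0,\rho,0)$ reproduces the CPS posteriors. For any $E\in\Sigma$ with $\mu^0(E)>0$ we have $\mu^0(E)>\varepsilon=0$, so HT prescribes $\mu_E=\text{BU}(\mu^0,E)$, which coincides with Proposition \ref{CPS=MD} at $k^*=0$. For any $E\in\Sigma$ with $\mu^0(E)=0$, HT prescribes $\mu_E=\text{BU}(\pi^E,E)$ with $\pi^E=\arg\max_{\pi}\rho(\pi)\pi(E)$, so $\pi^E\in\{\mu^0,\ldots,\mu^K\}$ and in fact $\pi^E=\arg\max_k \rho(\mu^k)\mu^k(E)$. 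Letting $k^*=\min\{k\mid\mu^k(E)>0\}$, for $k<k^*$ the product vanishes; for $k=k^*$ we have $\mu^{k^*}(E)\ge\delta$ by the definition of $\delta$; for $k>k^*$ we have $\mu^k(E)\le 1$ and
\[\rho(\mu^{k^*})\mu^{k^*}(E)\ge Cr^{k^*}\delta>Cr^{k}\cdot 1\ge \rho(\mu^k)\mu^k(E),\]
since $r^{k-k^*}\le r<\delta$. Hence $\pi^E=\mu^{k^*}$ and $\mu_E=\text{BU}(\mu^{k^*},E)$, again matching Proposition \ref{CPS=MD}.

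The only subtle step, and the main obstacle, is the choice of $\rho$: one must ensure that the maximum-likelihood argmax deterministically selects the lowest-indexed $\mu^k$ compatible with $E$ rather than some higher-indexed belief that happens to assign large probability to $E$. The geometric weighting $\rho(\mu^k)=Cr^k$ with $r<\delta$ handles this uniformly in $E$ by exploiting the uniform lower bound $\delta$ on positive support values across all $\mu^k$; the finiteness of the hierarchy $K$ is essential here. Once these inequalities are verified, the HT representation $(\mu^0,\rho,0)$ delivers exactly the posteriors of the CPS, completing the proof.
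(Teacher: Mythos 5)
Your construction is correct: with $\rho(\mu^k)=Cr^k$ and $r<\delta=\min_k\min_{s\in\text{sp}(\mu^k)}\mu^k(s)$, the chain $\rho(\mu^{k^*})\mu^{k^*}(E)\ge Cr^{k^*}\delta>Cr^{k^*}r\ge Cr^{k}\ge\rho(\mu^k)\mu^k(E)$ for $k>k^*$ does force the maximum-likelihood step to select the lowest-indexed belief compatible with $E$, the argmax is unique, and $\mu^0=\arg\max_\pi\rho(\pi)$ holds, so $(\mu^0,\rho,0)$ is a bona fide HT representation reproducing the CPS posteriors of \autoref{CPS=MD}. The paper does not argue this way in the displayed proof: it instead derives the corollary as the $\epsilon=0$ special case of \autoref{eCPS}, whose construction is substantially heavier because, to accommodate $\epsilon>0$, the second-order prior must be supported on all conditional beliefs $\text{BU}(\mu^k,E)$ rather than on the $K+1$ priors themselves, and the weights are pinned down by a system of interval and ordering constraints rather than a single geometric decay. (The paper also notes an entirely axiomatic alternative: Conditional Consistency implies Dynamic Consistency, and HT with $\epsilon=0$ is exactly the Dynamically Consistent case by its Proposition on \nameref{DC}.) Your direct argument buys a short, self-contained proof that makes the mechanism transparent --- the hierarchy is encoded by weights decaying faster than the smallest positive support probability --- at the cost of not generalizing to $\epsilon>0$, where the simple support $\{\mu^0,\ldots,\mu^K\}$ no longer suffices. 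One presentational point: you should state explicitly that the supports of $\mu^0,\ldots,\mu^K$ partition $S$ (so $k^*$ exists for every $E$ and $\delta>0$ is well defined); this is exactly what \autoref{CPS=MD} supplies, and your proof silently relies on it.
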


However, the converse does not hold; even when $\epsilon =0$, HT preferences may be inconsistent with CPS preferences. The reason for this is due to the way in which the selection of new beliefs occurs in HT. Indeed, our previous results formally show why. In HT, $\epsilon=0$ if and only if Dynamic Consistency holds. Hence, Proposition 3 characterizes HT with $\epsilon=0$. Theorem 2 implies that HT with $\epsilon=0$ is strictly more general than CPS since Conditional Consistency is strictly stronger than Dynamic Consistency.

\subsection{A Non-Bayesian CPS}\label{eCPS}

A natural way to generalize the CPS is to retain the sequential selection of new beliefs while incorporating the idea of ``non-Bayesian reaction to unexpected events'' from the HT model. To do so, we introduce $\epsilon$-CPS, a one-parameter, non-Bayesian extension of the CPS. This extension may lead to an interesting, non-Bayesian generalization of sequential equilibria. 

\begin{defn}A family of preferences $\{\succsim_{E}\}_{E\in \Sigma}$ admits an $\epsilon$-\textbf{CPS} representation if there are probability distributions $\mu^0, \ldots, \mu^K \in \Delta(S)$ and $\epsilon\in [0, 1)$ such that 
\[\mu_E=\text{BU}(\mu^{k^*}, E)\text{ where }k^*=\min\{k\le K \mid \mu^k(E)>\epsilon\},\] 
for every $E\in\Sigma$. 
\end{defn}

The $\epsilon$-CPS representation incorporates the key idea of HT by allowing for non-Bayesian reactions to unexpected events:  $\mu^k(E)\leq\epsilon$. However, it provides additional structure to the posterior selection process. The $\epsilon$-CPS remains a special case of HT and \nameref{IEU}.

\begin{thm}\label{eCPS} Any  \textbf{$\epsilon$-CPS representation} also has a \textbf{HT representation}. Moreover, if $\epsilon=0$, then the threshold for the HT representation is also zero. 
\end{thm}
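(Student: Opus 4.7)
The plan is to derive an HT representation by verifying that any family of preferences $\{\succsim_E\}_{E\in\Sigma}$ with an $\epsilon$-CPS representation satisfies the three axioms of Inertial Updating, and then invoking \autoref{repthrm} together with \autoref{HT=IEU}. The axioms \nameref{SEU} and \nameref{CON} hold immediately from the construction: each $\succsim_E$ is an SEU preference with common utility $u$ and belief $\mu_E=\text{BU}(\mu^{k^*(E)}, E)$, which is supported on $E$.

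The crux is \nameref{DCoh}. Suppose $A_1,\ldots,A_n$ satisfy its hypotheses and write $k_i:=k^*(A_i)$. Translating nullity of $S\setminus A_i$ under $\succsim_{A_{i+1}}$ into a support condition gives $\text{sp}(\mu^{k_{i+1}})\cap A_{i+1}\s A_i$ for $i\le n-1$, and analogously $\text{sp}(\mu^{k_1})\cap A_1\s A_n$. Hence $\mu^{k_{i+1}}(A_i)\ge \mu^{k_{i+1}}(A_i\cap A_{i+1})=\mu^{k_{i+1}}(A_{i+1})>\epsilon$, so by minimality of $k^*$ we get $k_i\le k_{i+1}$; the cyclic inclusion gives $k_n\le k_1$. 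Thus $k_1=\cdots=k_n=:k^*$. Iterating the support inclusions around the now-homogeneous cycle shows that $\text{sp}(\mu^{k^*})\cap A_i=\text{sp}(\mu^{k^*})\cap (A_1\cap\cdots\cap A_n)$ for every $i$. Consequently, each $\mu_{A_i}=\text{BU}(\mu^{k^*}, A_i)$ is determined by the same restriction of $\mu^{k^*}$, so the posteriors coincide and in particular $\succsim_{A_1}=\succsim_{A_n}$. With all three axioms established, \autoref{repthrm} yields an \nameref{IEU} representation and \autoref{HT=IEU} converts it into an HT representation.

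For the second claim, I would further verify \nameref{DC} when $\epsilon=0$. In that case every $\succsim$-non-null $E$ has $\mu^0(E)>0$, forcing $k^*(E)=0$ and $\mu_E=\text{BU}(\mu^0, E)$, which is standard Bayesian updating on the prior $\mu^0$; \nameref{DC} then follows by \autoref{prp:DC}. The remark following \autoref{CPS-HTE}, recording that an HT representation has threshold zero if and only if \nameref{DC} holds, then permits choosing the HT threshold to be zero.

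The main obstacle is the \nameref{DCoh} step. Nothing deep is used, but one must close the cyclic chain of index inequalities to pin down a single common index $k^*$, and then chase the support inclusions all the way around the cycle to conclude that $\text{sp}(\mu^{k^*})\cap A_i$ is independent of $i$. The remaining pieces are direct applications of results already established in the paper.
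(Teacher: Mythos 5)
Your argument is correct, but it takes a genuinely different route from the paper's. The paper proves \autoref{eCPS} \emph{constructively}: it builds an explicit second-order prior $\rho$ supported on the Bayesian updates $\{\text{BU}(\mu^k,E)\}$, with weights chosen in carefully separated bands $\overline{\rho}_0>\underline{\rho}_0>\cdots>\overline{\rho}_K>\underline{\rho}_K$, and then directly verifies that the maximum-likelihood selection $\arg\max_\pi \rho(\pi)\pi(E)$ reproduces the $\epsilon$-CPS posteriors; the claim about the zero threshold falls out of that construction. You instead verify the axioms --- in particular \nameref{DCoh}, via the cyclic chain $k_1\le\cdots\le k_n\le k_1$ and the equality of the sets $\text{sp}(\mu^{k^*})\cap A_i$ around the cycle, which is correct --- and then route through \autoref{repthrm} and \autoref{HT=IEU}. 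Your approach is shorter and arguably more in the spirit of the paper's unification theme, but it is non-constructive: it tells you an HT representation exists without exhibiting $\rho$ or the threshold, whereas the paper's proof shows exactly which beliefs populate the second-order prior. For the $\epsilon=0$ claim you lean on the equivalence between a zero HT threshold and \nameref{DC}, which is Theorem 2 of \cite{ortoleva2012} and is only asserted informally in this paper's prose; that is a legitimate but externally sourced step, while the paper obtains the zero threshold directly from its construction. Both proofs implicitly assume, as the definition of $\epsilon$-CPS requires, that $k^*=\min\{k\le K \mid \mu^k(E)>\epsilon\}$ is well defined for every $E\in\Sigma$.
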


\subsection{Relationships}

Since there are multiple approaches to updating after zero probability events, we summarize their relationship to each other and the key axioms in \autoref{modelrelations}.

\begin{figure}[h]
\centering
\includegraphics[width=10cm]{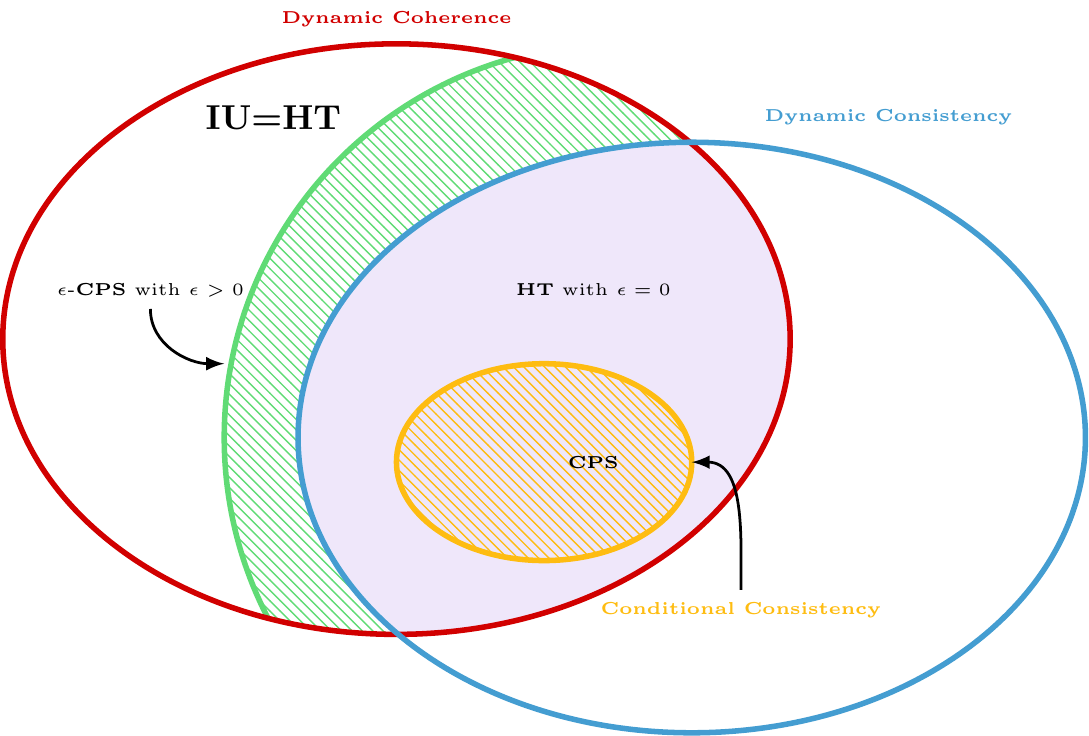}
     \caption{Relationship between complete updating models.}\label{modelrelations}
         \end{figure}
         
\autoref{modelrelations} clearly illustrates two notable discoveries. First, \nameref{CC} implies both \nameref{DC} and \nameref{DCoh}.  Second, Dynamically Consistent HT is strictly more general than the CPS.

\section{Incorporating a Signal Structure}\label{Signal}

While our setting is quite general, it is often useful to make explicit reference to a signal structure.  We therefore illustrate that our framework can incorporate standard signal structures utilized in experimental settings and game theory by introducing more structure to the state space (e.g., $S$ has a product structure). 

Let $\Omega$ be the payoff relevant state space and $M$ be the set of all signals. For each $\omega\in \Omega$ and $m\in M$, let $P(\omega)$ be the (unconditional) probability that the payoff relevant state $\omega$ occurs and $P(m|\omega)$ be the (conditional) probability that the DM receives the signal $m$ when the state is $\omega$. Indeed, receiving a signal is equivalent to observing an event in an expanded state space, $S=\Omega\times M$. Specifically, receiving the signal $m$ is equivalent to observing the event $\{(\omega, m)\}_{\omega\in \Omega}$ in $S$. 

Let $\mu$ be the prior on $S$, so that $\mu_{\omega m}=P(m|\omega)\,P(\omega)$ for each $(\omega, m)$. In the case of Bayesian updating, the connection between our framework and the signal structure is straightforward. Note that the Bayesian divergence generates Bayesian updating in the signal structure framework:
\[P(\omega|m)=\frac{\mu_{\omega m}}{\sum_{\omega'\in \Omega}\mu_{\omega' m}}=\frac{P(m|\omega)\,P(\omega)}{\sum_{\omega'\in \Omega}P(m|\omega')\,P(\omega')}.\]

A similar connection is possible for non-Bayesian updating rules. For example, consider the following distance function. For $\alpha,\beta \ge 0$,
\[d_\mu(\pi)=\sum_{(\omega, m)\in \Omega\times M} \left(\sum_{m'\in M}\mu_{\omega m'}\right)^{\alpha-\beta}\,\mu^{\beta}_{\omega m}\log\left(\frac{\pi_{\omega m}}{\mu_{\omega m}}\right).\]
This distance generates the posterior
\[P(\omega|m)=\frac{\big(\sum_{m'\in M}\mu_{\omega m'}\big)^{\alpha-\beta}\,\mu^\beta_{\omega m}}{\sum_{\omega'\in \Omega}\big(\sum_{m'\in M}\mu_{\omega' m'}\big)^{\alpha-\beta}\,\mu^\beta_{\omega' m}}=\frac{(P(m|\omega))^\beta\,(P(\omega))^\alpha}{\sum_{\omega'\in \Omega}(P(m|\omega'))^\beta\,(P(\omega'))^\alpha},\]
which is precisely the non-Bayesian updating rule proposed by \cite{grether1980}. This is a simple generalization of Bayes' rule, where $\alpha$ captures the influence of the prior and $\beta$ captures the influence of the signals.

In general, the following distance function
\[d_\mu(\pi)=\sum_{(\omega, m)\in \Omega\times M}g\left(\sum_{m'\in M} \mu_{\omega m'}\right)f\left(\frac{\mu_{\omega m}}{\sum_{m'\in M} \mu_{\omega m'}}\right)\log\left(\frac{\pi_{\omega m}}{\mu_{\omega m}}\right)\]
generates Distorted Bayesian updating in the signal structure framework:
\[P(\omega|m)=\frac{f\big(\sum_{m'\in M} \mu_{\omega m'}\big)g\big(\frac{\mu_{\omega m}}{\sum_{m'\in M} \mu_{\omega m'}}\big)}{\sum_{\omega'\in \Omega} f\big(\sum_{m'\in M} \mu_{\omega' m'}\big)g\big(\frac{\mu_{\omega' m}}{\sum_{m'\in M} \mu_{\omega' m'}}\big)}=\frac{f\big(P(m|\omega)\big)g\big(P(\omega)\big)}{\sum_{\omega'\in \Omega}f\big(P(m|\omega')\big)g\big(P(\omega')\big)}.\]
The above updating rule reduces to Grether's rule when $f(x)=x^\beta$ and $g(x)=x^\alpha.$ We apply this updating rule to ``Bayesian'' persuasion games in \autoref{sec-app}.

\begin{exm} Consider the following example, with $\Omega=\{\omega_H, \omega_L\}$ and $M=\{h,l\}$. We suppose $P(\omega_H)=\frac58$ and $P(h | \omega_H)=P(l | \omega_L)=\frac35$.

	\begin{table}[h]
		\centering
		\begin{tabular}{ | c | c | c |}
			\hline $$ & $h$ & $l$ \\
			\hline $\omega_H$  & $0.375$   & $0.25$   \\
			\hline  $\omega_L$ & $0.15$   & $0.225$ \\  \hline
		\end{tabular}
		\quad
			\begin{tabular}{ | c | c | c | }
			\hline  & $\mu( \cdot | h ) $ & $\mu( \cdot | l ) $  \\
			\hline $\omega_H$  &  $0.7143$  & $0.5263$ \\
			\hline  $\omega_L$ &  $0.2857$  & $0.4737$  \\ \hline 
		\end{tabular}
		\caption{Induced prior $\mu$ over $S=\Omega \times M$ and the corresponding Bayes' posteriors.}\label{ex-signal}
	\end{table}

\autoref{ex-signal} Illustrates the prior over $S$ and the resulting posterior beliefs under Bayesian updating.  Applying the Distorted Bayesian distance yields the following conditional probabilities for $\omega_H$ after signals $h$ and $l$:

\[ \mu( \omega_H | h ) = \frac{f(0.6)g(0.625)}{ f(0.6)g(0.625) + f(0.4)g(0.375)},\] 
\[ \mu( \omega_H | l ) = \frac{f(0.4)g(0.625)}{ f(0.4)g(0.625) + f(0.6)g(0.375)}.\] 

To further illustrate, we consider several specifications for $f(x)=x^\beta$ and $g(x)=x^\alpha$ in the table below. 

	\begin{table}[h]
		\centering
		\begin{tabular}{ | c | c | c | }
			\hline  & $\mu( \cdot | h ) $ & $\mu( \cdot | l ) $  \\
			\hline $\omega_H$  &  $ \cellcolor{red!50}0.7264$   &  \cellcolor{blue!10}$0.4603  $  \\
			\hline  $\omega_L$ & $0.2736$   & $0.5397$\\
		 \hline 
		\end{tabular}
		\quad
		\begin{tabular}{ | c | c | c | }
			\hline  & $\mu( \cdot | h ) $ & $\mu( \cdot | l ) $  \\
			\hline $\omega_H$  &  \cellcolor{blue!10}$0.6755$   & \cellcolor{blue!10}$0.5211$  \\
			\hline  $\omega_L$ & $0.3245$   & $ 0.4789$\\
		 \hline 
		\end{tabular}
		\quad
			\begin{tabular}{ | c | c | c | }
			\hline  & $\mu( \cdot | h ) $ & $\mu( \cdot | l ) $  \\
			\hline $\omega_H$  &  \cellcolor{red!50}$0.7347$  & \cellcolor{red!50}$0.5517$ \\
			\hline  $\omega_L$ &  $0.2653$  & $0.4483$  \\
			 \hline 
		\end{tabular}
		\caption{The left table reports posteriors with Base-rate neglect ($\alpha =0.8$) and over-reaction to signals ($\beta=1.4$), the middle table reports posteriors with Base-rate neglect ($\alpha =0.8$) and under-reaction to signals ($\beta=0.8$), and the right table reports posteriors with Base-rate bias ($\alpha =1.2$) and an accurate reaction to signals ($\beta=1$).} 
	\end{table}

\end{exm}

\section{Application to Bayesian Persuasion}\label{sec-app}

In this section, we demonstrate the usefulness of our model by applying it to the Bayesian persuasion games of \cite{kamenica2011bayesian}. In particular, we analyze the effects of non-Bayesian updating rules on the optimal information structure. We first describe the general Bayesian persuasion environment. Let $\Omega$ be the set of payoff-relevant states and $\rho$ be a prior over $\Omega=\{\omega_1, \ldots, \omega_n\}$. Let $A$ and $M$ be the finite sets of actions and messages, respectively. A signal structure is a function $\pi:\Omega \to \Delta(M)$. Given action $a\in A$ and state $\omega\in\Omega$, the receiver's payoff is $u(a, \omega)$ and the sender's payoff is $v(a, \omega)$. Given message realization $m$ and signal structure $\pi$, the receiver's optimal action is determined by
 \[a^*_\pi(m)=\arg\max_{a\in A} \mathbb{E}_{\omega\sim \mu_\pi(\cdot|m)}u(a, \omega)\]
 where $\mu_\pi(\cdot|m)$ is a conditional probability distribution over $\Omega$. The sender's goal is to persuade the receiver to take certain actions by choosing a signal structure $\pi$. The optimal signal structure for the sender must solve
 \[\max_{\pi\in\Pi } \mathbb{E}_{\omega\sim\rho} \mathbb{E}_{m\sim \pi(\omega)} v\big(a^*_\pi(m), \omega\big).\]
 
To illustrate the implications of our model, we now consider a simpler environment with two actions;  $A=\{a, b\}$. Then
 \[a^*_\pi(m)=a\text{ if }\sum^n_{i=1}\mu_\pi(\omega_i|m)\, u_i\ge 0,\]
where $u_i=u(a, \omega_i)-u(b, \omega_i)$. We assume that the sender always prefers action $a$, which is captured by $v(a, \omega)=1$ and $v(b, \omega)=0$. Hence, the sender maximizes
 \[\sum_{m\in M} \sum_{\omega\in\Omega}\mu(\omega, m)\mathds{1}\{a^*(m)=a\}.\]
This simple environment is rich enough to nest the judge-prosecutor example of \cite{kamenica2011bayesian} and the police-driver example of \cite{kamenica2019bayesian}. Since $\min\{|\Omega|, |A|\}=2$, we will first assume that $|M|=2$ and consider the case of $|M|\ge 3$ in online \autoref{rich}.\footnote{When $f$ is not linear, the revelation principle may be violated (see \cite{de2022non}). Hence, the assumption $|M|=2$ is not without loss of generality. We show that our main findings do not change substantively when $|M|\ge 3$ (see online \autoref{rich}). }
   
To apply our model, let $S=\Omega\times M$ and let $\mu$ be a prior over $S$ determined by $\rho$ and $\pi$: $\mu(\omega_i, m)=\rho_i\, \pi_m(\omega_i)$.\footnote{Bayesian plausibility is already satisfied with this Cartesian structure.} Our model determines the conditional probability $\mu_\pi(\cdot|m)$ and the rest is standard. We assume biased Bayesian updating defined in \autoref{Signal}, where the conditional probability is given by
 \[\mu_\pi(\omega_i|m)=\frac{g\big(\rho_i\big)\,f\big(\pi_m(\omega_i)\big)}{\sum^n_{j=1}g\big(\rho_j\big)\,f\big(\pi_m(\omega_j)\big)}.\]
 We assume that $f$ and $g$ are positive valued and $f$ is strictly increasing.\footnote{Although it is not essential, for simplicity, we assume $f$ is differentiable.} Below we demonstrate how the curvature of $f$ determines the form of the optimal signal structure for the sender. We find that as it distorts prior probabilities $g$ has no qualitative impact on the optimal signal structure, whereas the optimal signal structure significantly varies with the curvature of $f$. In particular, the set of states at which the sender is fully revealing when $f$ is concave is drastically different from the set of fully reveling states when $f$ is strictly convex.

 Given this updating rule, the optimal action of the receiver is 
 
 \[a^*_\pi(m)=a\text{ iff }\sum^n_{i=1}g\big(\rho_i\big)f\big(\pi_m(\omega_i)\big)\, u_i\ge 0.\]
The sender's optimization problem is 
\[\max_{\pi\in \Pi} V(\pi)=\sum_{m\in M}\Big(\sum^n_{i=1} \rho_i\, \pi_m(\omega_i)\Big)\mathds{1}\big\{\sum^n_{i=1}g\big(\rho_i\big)f\big(\pi_m(\omega_i)\big)\, u_i\ge 0\big\}.\]
 
To simplify the exposition, we first rule out some uninteresting scenarios in which persuasion does not matter. Note that the maximum value for $V$ is $1$. To focus on the interesting cases, suppose now that we have $u_1, \ldots, u_n$ and $\rho$ such that $V=1$ cannot be achieved. This assumption implies that the fully revealing signal structure is not optimal; i.e., $\sum^n_{i=1}g\big(\rho_i\big)\, u_i<0$. Since messages $m_1$ and $m_2$ are symmetric, we will focus on signal structures such that $a^*_\pi(m_1)=a$. 

Let  $A=\{i\le n: u_i\ge 0\}$ denote the set of states in which the sender's and receiver's interests are aligned.  Then the sender's problem is simply to maximize 
\[\max_{\pi_{m_1}\in [0, 1]^{|\Omega|}}\sum^n_{i=1} \rho_i\, \pi_{m_1}(\omega_i)\text{ subject to }\sum_{i\in A}g\big(\rho_i\big)f\big(\pi_m(\omega_i)\big)\, u_i\ge \sum_{i\in A^c}g\big(\rho_i\big)f\big(\pi_m(\omega_i)\big)\, |u_i|.\]
Intuitively, the sender must optimally allocate the utility generated from the states in $A$ (i.e., $\sum_{i\in A}g\big(\rho_i\big)f\big(\pi_m(\omega_i)\big)\, u_i)$) across the states in $A^c$. Since the objective function is linear, the curvature of $f$ essentially dictates the form of the optimal signal structure. 

\begin{prps}\label{concave} Suppose either $f(x)=x$ and $\frac{g(\rho_i) u_i}{\rho_i}\neq \frac{g(\rho_j) u_j}{\rho_j}$ for any $i, j$ with $u_i, u_j<0$ or $f$ is strictly concave. For any optimal signal structure $\pi^*$, there is $\bar{\omega}\in A^c$ and non-empty $\Omega_1\supseteq A$ such that 
\[\pi^*_{m_1}(\omega_1)=1\text{ for any }\omega_1\in\Omega_1\text{ and }\pi^*_{m_1}(\omega_2)=0\text{ for any }\omega_2\in\Omega\setminus (\Omega_1\cup\{\bar{\omega}\}).\]   
\end{prps}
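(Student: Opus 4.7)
My plan is to reduce the proposition to a structural claim about $\pi^*_i := \pi^*_{m_1}(\omega_i)$, namely that $\pi^*_i = 1$ for every $i \in A$ and that at most one index in $A^c$ satisfies $\pi^*_i \in (0,1)$. Writing $h(\pi) := \sum_{i \in A} g(\rho_i) u_i f(\pi_i) - \sum_{i \in A^c} g(\rho_i)|u_i| f(\pi_i)$, the sender maximizes $\sum_i \rho_i \pi_i$ on $[0,1]^n$ subject to $h(\pi) \ge 0$. I would first argue the constraint binds: if it were slack, raising some $\pi^*_i < 1$ would strictly improve the objective while preserving feasibility, and the standing assumption $\sum_i g(\rho_i)u_i < 0$ ensures $\pi \equiv 1$ is infeasible so some such $\pi^*_i < 1$ must exist. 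For $i \in A$, raising $\pi^*_i$ simultaneously increases the objective and relaxes $h$, forcing $\pi^*_i = 1$. Once at-most-one interior index in $A^c$ is established, I can set $\Omega_1 = A \cup \{i \in A^c : \pi^*_i = 1\}$ and take $\bar\omega$ to be that interior index, or, if none exists, any element of the nonempty $A^c$ (nonempty because otherwise the full-information signal would achieve $V=1$).

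For strictly concave $f$, I would show at-most-one-interior by contradiction: assume $\pi^*_i, \pi^*_j \in (0,1)$ for distinct $i,j \in A^c$, and perturb to $(\pi^*_i + t,\, \pi^*_j + s(t))$ with $s$ determined by the implicit function theorem so $h = 0$ is maintained. Differentiating $h=0$ once yields
\[
s'(0) \;=\; -\,\frac{g(\rho_i)|u_i|\,f'(\pi^*_i)}{g(\rho_j)|u_j|\,f'(\pi^*_j)},
\]
and differentiating a second time, together with $f''<0$, forces $s''(0) > 0$. The change in objective is then $(\rho_i + \rho_j s'(0))\,t + \tfrac12 \rho_j s''(0)\,t^2 + o(t^2)$; if the first-order coefficient is nonzero I pick the sign of $t$ to make it positive, and if it vanishes the strictly positive quadratic term dominates for all small $t \neq 0$. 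Either way, optimality is violated.

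For the linear case $f(x) = x$, I would instead use an LP/KKT argument. At any optimum, with $\lambda \ge 0$ the multiplier for the binding constraint, the Lagrangian coefficient on $\pi_i$ for $i \in A^c$ is $\rho_i - \lambda g(\rho_i)|u_i|$; optimality forces $\pi^*_i = 1$ when this is positive, $\pi^*_i = 0$ when negative, and leaves $\pi^*_i$ free only when $\tfrac{g(\rho_i)|u_i|}{\rho_i} = \tfrac{1}{\lambda}$. The genericity hypothesis $\tfrac{g(\rho_i) u_i}{\rho_i} \neq \tfrac{g(\rho_j) u_j}{\rho_j}$ for $i,j$ with $u_i, u_j < 0$ is precisely the statement that the ratios $\tfrac{g(\rho_i)|u_i|}{\rho_i}$ are pairwise distinct on $A^c$, so at most one coefficient can vanish, and hence at most one $\pi^*_i$ with $i \in A^c$ is interior.

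The main obstacle is the second-order perturbation in the strictly concave case: the first-order necessary condition $\rho_i + \rho_j s'(0) = 0$ can (and typically will) hold at an optimum, so one cannot finish the argument from first derivatives alone. The sign of $s''(0)$ must be read off the curvature of the constraint surface rather than the objective, since the objective itself is linear and contributes no curvature whatsoever; this is exactly where strict concavity of $f$ is essential. If $f$ were linear, $s''(0) = 0$ and the improving perturbation would disappear, which is why the linear case requires the separate genericity hypothesis that is handled above.
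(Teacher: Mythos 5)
Your proposal is correct and follows essentially the same route as the paper: fix $\pi^*_i=1$ on $A$, then rule out two interior coordinates $i,j\in A^c$ by restricting to the one-dimensional curve $\delta_i f(\pi_i)+\delta_j f(\pi_j)=m$ and exploiting its strict convexity in the objective direction (your second-order Taylor expansion is the local version of the paper's observation that the reduced objective $\rho_i x_i+\rho_j\,f^{-1}\big((m-\delta_i f(x_i))/\delta_j\big)$ is strictly convex, hence maximized at an endpoint of the feasible interval). The only cosmetic difference is the linear case, where you invoke KKT/complementary slackness while the paper uses a direct bang-for-buck exchange argument; both rest on the same genericity of the ratios $g(\rho_i)|u_i|/\rho_i$ on $A^c$.
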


\autoref{concave} shows that the sender randomizes at no more than one state (i.e., $\bar{\omega}$) when $f$ is concave. That is because when $f$ is concave, the sender's objective function is convex. Hence, the optimal signal structure is essentially an extreme point of $[0, 1]^{\Omega}$, ignoring $\bar{\omega}$.\footnote{\cite{de2022non} show that optimal signal structures in the special case of Grether's rule with $f(x)=x$ and $g(x)=x^\alpha$ are not qualitatively different from the standard Bayesian case. Our proposition shows a similar result in this different environment.}

However, when $f$ is not concave, in particular when $f'(0)=0$, extreme points of $[0, 1]^{\Omega\setminus\{\bar{\omega}\}}$ cannot be optimal. In fact, the sender randomizes at as many states in $A^c$ as possible, depending on the total resource generated by states in $A$.

\begin{prps}\label{convex} Suppose $f'(0)=0$.\footnote{The strict convexity of $f$ is not necessary for this result.} For any optimal signal structure $\pi^*$, there is $\Omega_1\supseteq A$ such that 
\[\pi_{m_1}(\omega_1)=1\text{ for any }\omega_1\in\Omega_1\text{ and }\pi_{m_1}(\omega_2)\in (0, 1)\text{ for any }\omega_2\in\Omega\setminus \Omega_1.\]   
\end{prps}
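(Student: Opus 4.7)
The plan is to attack the sender's program directly. Writing $\pi_i$ for $\pi_{m_1}(\omega_i)$, the problem is to maximize $\sum_{i=1}^n \rho_i\, \pi_i$ over $\pi \in [0,1]^n$ subject to $h(\pi) := \sum_{i\in A} g(\rho_i) f(\pi_i) u_i - \sum_{j\in A^c} g(\rho_j) f(\pi_j) |u_j| \ge 0$. I will show that any optimum $\pi^*$ satisfies (a) $\pi^*_i = 1$ for all $i\in A$ and (b) $\pi^*_j > 0$ for all $j\in A^c$; setting $\Omega_1 := \{\omega : \pi^*_{m_1}(\omega) = 1\}$ then delivers $\Omega_1 \supseteq A$ and $\pi^*_{m_1}(\omega) \in (0,1)$ for every $\omega \notin \Omega_1$, as claimed.

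Part (a) is a one-coordinate perturbation. If $\pi^*_i < 1$ for some $i \in A$, raising $\pi_i$ by $\varepsilon > 0$ gains $\rho_i \varepsilon > 0$ in the objective, and since $f$ is strictly increasing and $u_i \ge 0$, the change in $h$ is $g(\rho_i)(f(\pi^*_i + \varepsilon) - f(\pi^*_i)) u_i \ge 0$. The perturbation is feasible and strictly improves the objective, contradicting optimality.

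Part (b) is the crux and is where $f'(0) = 0$ is decisive. Suppose for contradiction that $\pi^*_j = 0$ for some $j \in A^c$. Raising $\pi_j$ to $\varepsilon$ changes $h$ by $-g(\rho_j)(f(\varepsilon) - f(0)) |u_j|$, which is $o(\varepsilon)$ since $f'(0) = 0$, while it gains $\rho_j \varepsilon$, a strictly first-order quantity, in the objective. If $h(\pi^*) > 0$, small $\varepsilon$ retains feasibility and strictly improves the objective, a contradiction. If $h(\pi^*) = 0$ and some other $\pi^*_k > 0$ for $k \in A^c$, I compensate by lowering $\pi_k$ by $\delta := C (f(\varepsilon) - f(0))$ with $C = \frac{g(\rho_j) |u_j|}{g(\rho_k) f'(\pi^*_k) |u_k|}$, so the first-order loss in $h$ from raising $\pi_j$ is exactly offset by the first-order gain from lowering $\pi_k$. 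Since $\delta = o(\varepsilon)$, the net objective change $\rho_j \varepsilon - \rho_k \delta$ remains strictly positive for small $\varepsilon$, with higher-order terms absorbed by a slight additional shrinkage of $\delta$.

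The hard part is the residual configuration in which $\pi^*_j = 0$ for every $j \in A^c$ and $h$ is tight: no coordinate in $A^c$ is available to shrink, and lowering any $\pi^*_i$ with $i \in A$ only further tightens the constraint rather than relaxing it. Tightness here reads $f(1) \sum_{i \in A} g(\rho_i) u_i = f(0) \sum_{j \in A^c} g(\rho_j) |u_j|$, which forces $f(0) > 0$; under the natural normalization $f(0) = 0$ (consistent with $f$ strictly increasing and $f'(0) = 0$, as in the power-function examples used throughout the paper), this constraint is strictly slack at the corner and the slack case of (b) applies. Equivalently, at such a point the KKT lower-bound multiplier would be $\alpha_j = -\rho_j + \lambda g(\rho_j) f'(0) |u_j| = -\rho_j < 0$, violating dual feasibility. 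Setting $\Omega_1 := \{\omega : \pi^*_{m_1}(\omega) = 1\}$ then completes the proof.
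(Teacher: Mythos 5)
Your main argument is, at its core, the same as the paper's. The paper fixes the resource consumed by a pair $(i,j)$ at $m=\delta_i f(x^*_i)+\delta_j f(x^*_j)$ (with $\delta_i=|g(\rho_i)u_i|$), substitutes $x_j=f^{-1}\bigl((m-\delta_i f(x_i))/\delta_j\bigr)$ into the objective, and observes that the derivative at $x_i=0$ equals $\rho_i>0$ because the compensating term carries a factor $f'(0)=0$; your $\varepsilon$--$\delta$ perturbation with $\delta=O(f(\varepsilon)-f(0))=o(\varepsilon)$ is exactly this computation done by hand. (Two small points there: you want to \emph{enlarge} $\delta$ slightly, not shrink it, to keep the constraint satisfied --- or simply choose $\delta$ so that the constraint holds with equality; and you need $f'(\pi^*_k)>0$ at the coordinate you shrink, which holds for $\pi^*_k>0$ under strict convexity but is also implicitly needed when the paper differentiates $f^{-1}$.) Part (a) and the slack case are routine and correspond to the paper's remark that $x^*_i=1$ for $i\in A$ is immediate.

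The genuine gap is the residual configuration, and both of your proposed fixes fail. First, the normalization $f(0)=0$ contradicts the paper's standing assumption that $f$ is \emph{positive valued}, so you cannot appeal to it. Second, the KKT computation is not a valid contradiction at that corner: with $\pi^*_j=0$ for all $j\in A^c$, $\pi^*_i=1$ for all $i\in A$, and the constraint tight, every coordinate constraint is active and $\partial h/\partial\pi_j=-g(\rho_j)f'(0)|u_j|=0$ for $j\in A^c$, so the Mangasarian--Fromovitz constraint qualification fails (any direction with $d_j>0$ on $A^c$ and $d_i<0$ on $A$ has $\nabla h\cdot d<0$), and dual infeasibility at an optimum is then possible. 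Indeed, in the knife-edge configuration $f(1)\sum_{i\in A}\delta_i=f(0)\sum_{j\in A^c}\delta_j$ the reduced feasible set $\{x\in[0,1]^{A^c}:\sum_{j\in A^c}\delta_j f(x_j)\le M\}$ is the singleton $\{0\}$, the all-corner point \emph{is} the unique optimum, and the proposition's conclusion fails there; no local argument can rule this case out, so it must be excluded by a parameter restriction. The paper does this tersely (and itself somewhat imprecisely) with the sentence ``as long as $M>0$, there exists $x^*_j>0$,'' which is really the assumption that the constraint is slack at the origin of the reduced problem, i.e., $M>f(0)\sum_{j\in A^c}\delta_j$. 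If you state that hypothesis explicitly, your slack-case and compensated-perturbation arguments cover everything and the proof closes.
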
 

\autoref{convex} shows that for Grether's $\alpha-\beta$ rule, the optimal signal structure in the case of $\beta\le 1$ (including Bayesian updating) is qualitatively different from the case of $\beta>1$. The difference between the cases $\beta\le 1$ and $\beta>1$ is more precisely illustrated by following example. 

\begin{exm}\label{persuasion} Suppose $|\Omega|=3$. Let $\rho=(\frac{1}{7}, \frac{3}{7}, \frac{3}{7})$ and $(u_1, u_2, u_3)=(1, -\frac{1}{2}, -1)$. When $\beta\le 1$, 
\[\pi^H_{m_1}=1,\,\pi^M_{m_1}=\Big(\frac{2}{3}\Big)^\frac{1}{\beta}\in (0, 1), \text{ and }\pi^L_{m_1}=0.\]  
However, when $\beta>1$, 
\[\pi^H_{m_1}=1,\, \pi^M_{m_1}=\frac{2^\frac{1}{\beta-1}}{\Big(3(1+2^{\beta-1})\Big)^\frac{1}{\beta}}\in (0, 1), \text{ and }\pi^L_{m_1}=\frac{1}{\Big(3(1+2^{\beta-1})\Big)^\frac{1}{\beta}}\in (0, 1).\] 
\end{exm}

\section{Partial Consequentialism and Weighted \nameref{IEU}}\label{sect-wiu} 

In this section, we generalize our main result by relaxing \nameref{CON}. Following our analogy to revealed preference theory, \nameref{CON} ensures that $E$ is equivalent to the budget set $\Delta(E)$. By dropping \nameref{CON}, we allow for the DM to perceive a subjective budget set from which she may choose. For instance, this may be because the DM perceives the information as less reliable than the analyst, or the DM may have an imperfect memory and her uncertainty about which event transpired is reflected in her beliefs.  We do however impose two natural conditions on her behavior. 

\begin{defn}[{\textbf{wIU}}]\label{WIEUdef} A family of preference relations $\{\succsim_E\}_{E\in\Sigma}$ admits a \textbf{Weighted Inertial Updating} representation if there are a Bernoulli utility function $u:X\to\mathds{R}$, a prior $\mu \in \Delta(S)$, a distance function $d_\mu:\Delta(S)\to\mathds{R}$, and a weight $\gamma \in [0, 1)$ such that for each $E \in \Sigma$, the preference relation $\succsim_E$ admits a SEU representation with $(u, \mu_E)$, where
\begin{equation}\mu_E\equiv \gamma\,\mu+(1-\gamma)\,\argmin_{\pi \in \Delta(E)}  d_\mu(\pi).\end{equation}
\end{defn}

This generalization of \nameref{IEU} nests the updating rules studied in \cite{Epstein2006}, \cite{kovach2020},  and \cite{Epstein2008}. 

We first demonstrate that \nameref{IEU} representations can be generated from \nameref{WIEUdef} representations by imposing \nameref{CON}. 

\begin{prps} If a family of preference relations $\{\succsim_E\}_{E\in\Sigma}$ admits a \nameref{WIEUdef} representation and satisfies \nameref{CON}, then it also admits an \nameref{IEU} representation. 

\end{prps}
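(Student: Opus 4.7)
My plan is to reduce the problem to \autoref{repthrm} by verifying its three characterizing axioms for $\{\succsim_E\}_{E\in\Sigma}$. \nameref{SEU} is immediate from the wIU representation since each $\succsim_E$ admits an SEU representation with $(u,\mu_E)$, and \nameref{CON} is assumed; so the real task is to derive \nameref{DCoh} from the wIU structure together with \nameref{CON}.

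To do this, I would introduce an auxiliary family $\{\succsim^*_E\}_{E\in\Sigma}$, where each $\succsim^*_E$ admits the SEU representation with $(u,\pi^*_E)$ and $\pi^*_E := \arg\min_{\pi\in\Delta(E)} d_\mu(\pi)$. Since $\mu$ is the global minimizer of the distance function $d_\mu$, we have $\pi^*_S = \mu$, and by construction $\{\succsim^*_E\}$ is itself an \nameref{IEU} representation with prior $\mu$ and distance $d_\mu$. \autoref{repthrm} therefore automatically yields \nameref{DCoh} for the starred family.

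The next step is to transfer \nameref{DCoh} from $\{\succsim^*_E\}$ to $\{\succsim_E\}$. Suppose the DCoh premise holds for $\{\succsim_E\}$, i.e., $\text{sp}(\mu_{A_{i+1}}) \subseteq A_i$ for $i \le n-1$ and $\text{sp}(\mu_{A_1}) \subseteq A_n$. Because $\mu_E = \gamma\mu + (1-\gamma)\pi^*_E$ with $\gamma < 1$ and both summands nonnegative, $\text{sp}(\pi^*_E) \subseteq \text{sp}(\mu_E)$ for every $E\in\Sigma$. Hence the analogous null-event hypotheses hold for $\{\succsim^*_E\}$, and \nameref{DCoh} for that family yields $\pi^*_{A_1} = \pi^*_{A_n}$. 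It follows that $\mu_{A_1} = \gamma\mu + (1-\gamma)\pi^*_{A_1} = \gamma\mu + (1-\gamma)\pi^*_{A_n} = \mu_{A_n}$, so $\succsim_{A_1} = \succsim_{A_n}$. Thus \nameref{DCoh} holds for $\{\succsim_E\}$, and \autoref{repthrm} delivers the desired \nameref{IEU} representation.

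The step I expect to be the main obstacle is precisely the transfer of \nameref{DCoh} via the auxiliary family: one must recognize that $\{\succsim^*_E\}$ is genuinely an \nameref{IEU} representation (so that \autoref{repthrm} applies to it) and that the condition $\gamma < 1$ guarantees the support inclusion $\text{sp}(\pi^*_E) \subseteq \text{sp}(\mu_E)$ needed to carry the null-event hypotheses across. Both facts follow cleanly from the wIU structure, but they are the only nontrivial ingredients; \nameref{CON} itself enters only through the final application of \autoref{repthrm}.
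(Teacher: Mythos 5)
Your proof is correct. The paper states this proposition without an accompanying proof, so there is no official argument to match yours against, but it is worth contrasting your route with the short direct one. Under the \nameref{WIEUdef} representation, $\argmin_{\pi\in\Delta(E)}d_\mu(\pi)$ lies in $\Delta(E)$, so $\mu_E(E)=\gamma\,\mu(E)+(1-\gamma)$; since \nameref{CON} together with the SEU structure forces $\mu_E(E)=1$, one obtains $\gamma\,(1-\mu(E))=0$ for every $E\in\Sigma$. Hence either $\gamma=0$, in which case the \nameref{WIEUdef} representation literally is an \nameref{IEU} representation, or $\mu(E)=1$ for every $E\in\Sigma$, in which case $\mu\in\Delta(E)$ and therefore $\argmin_{\pi\in\Delta(E)}d_\mu(\pi)=\mu=\mu_E$, so the same triple $(u,\mu,d_\mu)$ again serves. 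This computation keeps the original distance function and shows that \nameref{CON} essentially annihilates the weight $\gamma$. Your argument instead verifies the axioms of \autoref{repthrm}, and every step checks out: the auxiliary family $\{\succsim^*_E\}$ is genuinely an \nameref{IEU} family (its prior is $\pi^*_S=\mu$ because $\mu$ globally minimizes $d_\mu$), the inclusion $\text{sp}(\pi^*_E)\subseteq\text{sp}(\mu_E)$ does follow from $\gamma<1$, and the chain $\pi^*_{A_1}=\pi^*_{A_n}\Rightarrow\mu_{A_1}=\mu_{A_n}\Rightarrow\;\succsim_{A_1}=\succsim_{A_n}$ is sound. What your route buys is a strictly stronger observation: any \nameref{WIEUdef} family satisfies \nameref{SEU} and \nameref{DCoh} with no help from \nameref{CON}, so \nameref{CON} is the only axiom of \autoref{repthrm} that weighting can violate. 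The cost is that you pass through Matzkin's theorem to manufacture a fresh distance function rather than noticing that $d_\mu$ itself already works, and you never confront what \nameref{CON} actually does to the parameter $\gamma$.
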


To characterize \nameref{WIEUdef}, we need to weaken \nameref{DCoh} and \nameref{CON} to accommodate the DM's partial reaction to information. While our DM does not fully incorporate the informational content of the event $A$, her belief in $A$ increases and, consequently, she necessarily gives lower credence to $S\setminus A$ and any $E \subseteq S \setminus A$. 

We introduce the following definition to capture the DM's subjective perception of events that become relatively less likely after $A$. 

\begin{defn}[Unfavored Event] We say $E$ is $\succsim_A$-unfavored if for any $E'\subseteq S$ and $p, q\in \Delta(X)$, 
\[p\,E'\,w\sim q\,E\,w\text{ implies }p\,E'\,w\succsim_A q\,E\,w,\]
with at least one strict inequality for some $E'$. We then say $E$ is a $\succsim_A$-favored event if $E^c$ is $\succsim_A$-unfavored.
\end{defn}

Similar to how \nameref{DCoh} ensures a consistent reaction to null events, \nameref{CDCoh} ensures a consistent reaction to favored events.

\begin{axm}[\textbf{Partial Dynamic Coherence}]\label{CDCoh} For any $A_1, \ldots, A_n\subseteq S$, if $A_i$ is $\succsim_{A_{i+1}}$-favored for each $i\le n-1$ and $A_n$ is $\succsim_{A_1}$-favored, then $\succsim_{A_1}=\succsim_{A_{n}}$.
\end{axm}


Next, we require that her subjective belief in $E$ weakly increases after she is told that $E$ has occurred. While \nameref{CON} demands that the DM is convinced of $E$, our novel axiom, \nameref{PC}, only demands that she puts more stock in $E$. 

\begin{axm}[\textbf{Partial Consequentialism}]\label{PC} For any $E\subseteq S$, $E$ is $\succsim_E$-favored.
\end{axm}

Finally, we require a condition to ensure a consistent reaction to all events. That is, the following condition guarantees that $\gamma$ is event independent.

\begin{axm}[\textbf{Relative Tradeoff Consistency}]\label{RTC} For any $A, B\in\Sigma$, $p, q, r\in\Delta$, and $\alpha\in (0, 1)$, \[\text{If }w\,A\,q\sim p\text{ and }w A\,q\sim_A \alpha\,p+(1-\alpha)w,\text{ then }\] 
\[w\,B\,q\sim r\text{ implies }w\,B\,q\sim_B \alpha\,r+(1-\alpha)w.\] 
\end{axm}

\begin{thm}\label{WIEU} Suppose $\succsim$ has a full-support. The following are equivalent.

\begin{itemize} 
\item[(i)] A family of preference relations $\{\succsim_E\}_{E\in\Sigma}$ admits a \textbf{\nameref{WIEUdef} representation}. 
\item[(ii)] It satisfies \nameref{SEU}, \nameref{CDCoh}, \nameref{PC}, and \nameref{RTC}. 
\item[(iii)] It admits a \textbf{\nameref{WIEUdef} representation with respect to a continuous, strictly convex distance function}. 
\end{itemize}
\end{thm}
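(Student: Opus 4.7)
The plan is to reduce the wIU characterization to the IU theorem (\autoref{repthrm}) by extracting from each conditional SEU belief $\mu_E$ a weight $\gamma$ and an ``underlying'' belief $\pi_E \in \Delta(E)$, then applying \autoref{repthrm} to the derived system $\{(u,\pi_E)\}_{E \in \Sigma}$. The directions (i)$\Leftrightarrow$(iii) follow exactly as in \autoref{repthrm}, and (i)$\Rightarrow$(ii) is a routine verification of the axioms from the representation. The substantive work is (ii)$\Rightarrow$(iii).

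\nameref{SEU} supplies a Bernoulli utility $u$, a full-support prior $\mu$, and a conditional belief $\mu_E \in \Delta(S)$ for each $E$. I first use \nameref{PC} to force the state-level ratios $\mu_E(s)/\mu(s)$ to coincide on $E^c$. Unpacking the SEU inequalities implicit in the assertion that $E^c$ is $\succsim_E$-unfavored yields $\mu_E(E')/\mu(E') \ge \mu_E(E^c)/\mu(E^c)$ for every $E'$ with $\mu(E')>0$. Applied to $E' = \{s\}$ with $s \in E^c$, this gives $\mu_E(s)/\mu(s) \ge \mu_E(E^c)/\mu(E^c)$; but $\mu_E(E^c)/\mu(E^c)$ is itself the $\mu$-weighted average of these ratios, so all of them must equal a common value $\gamma_E := \mu_E(E^c)/\mu(E^c) \in [0,1)$. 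Next, \nameref{RTC} forces event-independence of this weight: computing the SEU values of $w\,A\,q \sim p$ and $w\,A\,q \sim_A \alpha\,p + (1-\alpha)\,w$ reveals $\alpha = \mu_A(A^c)/\mu(A^c) = \gamma_A$, and the conclusion of \nameref{RTC} analogously yields $\alpha = \gamma_B$, so $\gamma_A = \gamma_B$. Call the common value $\gamma$ and set $\pi_E := (\mu_E - \gamma\,\mu)/(1-\gamma)$; by construction $\pi_E \in \Delta(E)$.

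Having produced $\{\pi_E\}_{E \in \Sigma}$, I define virtual preferences $\{\succsim^{\prime}_E\}$ with SEU representations $(u, \pi_E)$ and verify that \nameref{CDCoh} for $\{\succsim_E\}$ translates into \nameref{DCoh} for $\{\succsim^{\prime}_E\}$. The key equivalence is that $A_i$ is $\succsim_{A_{i+1}}$-favored iff $\pi_{A_{i+1}} \in \Delta(A_i)$, i.e., iff $S \setminus A_i$ is $\succsim^{\prime}_{A_{i+1}}$-null: the unfavored inequalities pin $\mu_{A_{i+1}}(A_i^c)/\mu(A_i^c)$ to the minimum state-level ratio $\gamma$, which happens precisely when $\pi_{A_{i+1}}(A_i^c) = 0$. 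Since $\pi_E \in \Delta(E)$ supplies the analogue of \nameref{CON} for $\{\succsim^{\prime}_E\}$ and \nameref{SEU} is inherited, \autoref{repthrm} applied to $\{\succsim^{\prime}_E\}$ produces a continuous, strictly convex distance function $d_\mu$ such that $\pi_E = \argmin_{\pi \in \Delta(E)} d_\mu(\pi)$. Reconstituting $\mu_E = \gamma\,\mu + (1-\gamma)\,\pi_E$ delivers the wIU representation with continuous, strictly convex $d_\mu$.

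The principal obstacle will be the first step: coaxing \nameref{PC} into forcing proportional state-level discounts across all of $E^c$, and then combining with \nameref{RTC} to pin down a single weight $\gamma$. The full-support hypothesis on $\succsim$ is essential both for the ratios $\mu_E(s)/\mu(s)$ to be well defined on $E^c$ and to rule out degenerate cases (e.g., $\gamma_E = 1$ for $E \neq S$) that would break the proposed reduction. Once $\gamma$ and the $\pi_E$'s are in hand, the translation of \nameref{CDCoh} to \nameref{DCoh} and the invocation of \autoref{repthrm} are essentially bookkeeping.
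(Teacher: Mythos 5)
Your proposal is correct and follows essentially the same route as the paper's proof: it uses \nameref{PC} to extract the proportional discount on $E^c$ (your $\gamma_E$ is the paper's $\delta(E)=\min_{E'}\mu_E(E')/\mu(E')$), uses \nameref{RTC} to make it event-independent, and reduces the residual beliefs $\pi_E=(\mu_E-\gamma\mu)/(1-\gamma)\in\Delta(E)$ to the Afriat/Matzkin machinery, with \nameref{CDCoh} translating into SARP exactly as \nameref{DCoh} does in \autoref{repthrm}. The only cosmetic difference is that you invoke \autoref{repthrm} on a ``virtual'' family of preferences while the paper reapplies Matzkin's theorem directly to the data set $\{(\mu^*_A,\Delta(A))\}_{A\in\Sigma}$.
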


\section{Related Literature}\label{literature}

A few papers have studied the idea of distance minimization and how it relates to belief updating.   \cite{Perea2009} axiomatized \emph{imaging} rules, which are minimum distance rules utilizing Euclidean distance. Under imaging, for each $E \subseteq S$ a posterior $\pi$ is selected that minimizes $d_\mu(\pi)=\|\phi(\mu)-\phi(\pi)\|$, where $\pi \in \Delta(E)$ and $\phi$ is an affine function. This is a special case of the \nameref{IEU}. More recently, \cite{Basu_2018} studies AGM \citep{Alchourron1985} belief revision. Within this setting, he establishes an equivalence between lexicographic updating rules and updating rules that are AGM-consistent, Bayesian, and weak path independent. He then turns to  minimum distance updating rules and shows that every support-dependent lexicographic updating rule admits a minimum distance representation. In contrast, we allow for non-Bayesian updating. \cite{zhao2020} and \cite{DKTgeneral} both study distance minimization ``general information;'' information is a subset $I$ of $\Delta(S)$ rather than an event. This more general notion of information requires significantly different axioms. Moreover, \cite{zhao2020} focuses on Bayes' rule. 
 
 There is a large literature in experimental economics and psychology documenting various belief biases, and excellent surveys can be found in \cite{Camerer1995} and \cite{benjamin2019}. There is also growing number of papers taking axiomatic approaches to studying forms of non-Bayesian updating.\footnote{For behavioral models of non-Bayesian updating, see, for example \cite{barberis1998model}; \cite{rabin1999}; \cite{mullainathan2002memory}; \cite{rabin2002inference}; \cite{mullainathan2008coarse}; \cite{gennaioli2010comes}; and \cite{bordalo2016stereotypes}.}  Of course, \cite{ortoleva2012} is the most closely related among these, and has already been discussed in detail. Other papers include \cite{Suleymanov2021}, which studies deviations from Bayesian updating caused by ambiguity; \cite{jakobsen2022}, which studies a ``nearly Bayesian'' updater that selects between subjectively plausible posteriors; \cite{Epstein2006}  and \cite{kovach2020}, both of which study a prior-biased updating rule in which posterior beliefs are a convex combination of the prior and the Bayesian posterior; and \cite{Epstein2008}, which extends \cite{Epstein2006} to an infinite horizon setting. \cite{KWZ2021} studies a rule that also involves a convex combination between prior beliefs and a ``recommended belief,'' but does so in the context of general information (i.e., subsets of $\Delta(S))$ so it is not directly comparable. The updating rule in \cite{Epstein2006}, \cite{kovach2020},  and \cite{Epstein2008} is a special case of Weighted \nameref{IEU} characterized in \autoref{sect-wiu}. 

Our paper also contributes to a growing literature applying models of non-standard belief updating rules to games of strategic information transmission. Recent contributions in this are include \cite{galperti2019persuasion}, \cite{de2022non}, and \cite{lee2023cheap}. 

As we carefully discussed in \autoref{zero}, updating under zero-probability events is studied in \cite{Myerson1886a, Myerson1886b} and \cite{ortoleva2012}. Another well-known approach to dealing with null events is the Lexicographic Probability System (LPS) of \cite{Blume1991}.  While LPS also involves a collection of probability distributions, LPS utilizes the entire collection of distributions in the evaluation process via a lexicographic ordering. Consequently, a DM described by LPS will violate Archimedean Continuity, (see \autoref{SEU}(ii)) of the initial preference. Further, LPS replaces (Savage) null-events with ``infinitely more likely than,'' so that null-events are effectively precluded. While LPS necessarily deviates from SEU, there is a mathematical equivalence between conditional probabilities generated by LPS and CPS (e.g., see \cite{brandenburger2006notes}). Hence, our results further clarify the connections between HT, CPS, and LPS.

\appendix

\section{Proofs}

\subsection{Proof of \autoref{prp:BD}}

Take any non-null $E\in\Sigma$. Let $\text{sp}(\mu)=A\cup C$ and $E=B\cup C$ where $\text{sp}(\mu)\cap E=C$.    We then solve the following optimization problem:
\[\max_{\pi\in\Delta(E)} \sum^n_{i=1}\mu_i\, \sigma\left(\frac{\pi_i}{\mu_i}\right)=\sum_{i\in A\cup C}\mu_i\, \sigma\left(\frac{\pi_i}{\mu_i}\right)=\sum_{i\in C}\mu_i\, \sigma\left(\frac{\pi_i}{\mu_i}\right)+\mu(A)\, \sigma(0).\]
Hence we want to maximize $f((\pi_i)_{i\in C})=\sum_{i\in C}\mu_i\, \sigma\left(\frac{\pi_i}{\mu_i}\right)$ subject to the constraint $\sum_{i\in C} \pi_i=1-\pi(A)$. Let us first fix $\alpha=1-\pi(A)$ and $C'=\{i\in C|\pi_i>0\}$. Then we need to maximize 
\[\sum_{i\in C'}\mu_i\, \sigma\left(\frac{\pi_i}{\mu_i}\right)-\lambda(\sum_{i\in C'} \pi_i-\alpha).\]
The first order condition gives $\sigma'(\frac{\pi_i}{\mu_i})=\lambda$ for each $i\in C'$ (Since $\sigma$ is strictly concave, the FOC is sufficient). Hence, $\pi_i=\mu_i\,c'^{-1}(\lambda)$. After finding $\lambda$ from the constraint $\sum_{i\in C'} \pi_i=\alpha$, we have $\pi_i=\alpha\,\frac{\mu_i}{\mu(C')}$. If we calculate the objective function at the above values:
\[f((\pi_i)_{i\in C})=\mu(C')\, \sigma\left(\frac{\alpha}{\mu(C')}\right)+\mu(C\setminus C')\sigma(0)\]
We need to find the optimal $\alpha$ and $C'$. Let us prove that $\mu(C)>\mu(C')$ implies
\[\mu(C)\, \sigma\left(\frac{\alpha}{\mu(C)}\right)>\mu(C')\, \sigma\left(\frac{\alpha}{\mu(C')}\right)+\mu(C\setminus C')\sigma(0);\]
equivalently,
\[\mu(C)\, \big(\sigma\left(\frac{\alpha}{\mu(C)}\right)-\sigma(0)\big)>\mu(C')\,\big(\sigma\left(\frac{\alpha}{\mu(C')}\right)-\sigma(0)).\]
To obtain the above inequality, it is sufficient to show that $x(\sigma(\frac{\alpha}{x})-\sigma(0))$ is strictly increasing; i.e., $(x(\sigma(\frac{\alpha}{x})-\sigma(0)))'=\sigma(\frac{\alpha}{x})-\sigma(0)-\frac{\alpha}{x}\,\sigma'(\frac{\alpha}{x})>0$. The inequality $\sigma(\frac{\alpha}{x})-\sigma(0)>\frac{\alpha}{x}\,\sigma'(\frac{\alpha}{x})$ holds since $\sigma$ is strictly concave. Hence, $f$ is maximized when $C'=C$.

Since $\sigma$ is strictly increasing, we also have $\mu(C)\, \sigma\left(\frac{1}{\mu(C)}\right)>\mu(C)\, \sigma\left(\frac{\alpha}{\mu(C)}\right)$ when $1>\alpha$. Hence, $f$ is maximized when $\alpha=1$ and $C'=C$. In other words, $\pi_i=\frac{\mu_i}{\mu(C)}$; i.e., $\mu_E=\text{BU}(\mu, E)$.

\subsection{Lemma 1}

The following result will be useful.

\begin{lem} For any $\mu, \pi\in\Delta(S)$, $-\sigma(0)\ge \beta^\sigma(\mu, \pi) \ge -\sigma(1)$. 
\end{lem}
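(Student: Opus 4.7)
The plan is to establish the two inequalities separately by exploiting the two key properties of $\sigma$: monotonicity gives the lower bound on $\beta^\sigma$, and concavity (via Jensen) gives the upper bound on $\beta^\sigma$. Throughout, I would take the convention (as is implicit in the definition of Bayesian divergence) that the sum in $\beta^\sigma(\mu,\pi) = -\sum_{i=1}^n \mu_i\, \sigma(\pi_i/\mu_i)$ ranges effectively over $\text{sp}(\mu)$, since the term $\mu_i\,\sigma(\pi_i/\mu_i)$ vanishes when $\mu_i=0$.

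For the inequality $\beta^\sigma(\mu,\pi)\le -\sigma(0)$, equivalently $\sum_{i\in\text{sp}(\mu)} \mu_i\,\sigma(\pi_i/\mu_i)\ge \sigma(0)$, I would simply note that $\pi_i/\mu_i\ge 0$ for every $i\in\text{sp}(\mu)$, so strict monotonicity of $\sigma$ on $\mathbb{R}_+$ yields $\sigma(\pi_i/\mu_i)\ge \sigma(0)$. Averaging with weights $\mu_i$ summing to $1$ over $\text{sp}(\mu)$ gives $\sum_i \mu_i\,\sigma(\pi_i/\mu_i)\ge \sigma(0)$, which rearranges to the desired bound.

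For the inequality $\beta^\sigma(\mu,\pi)\ge -\sigma(1)$, equivalently $\sum_{i\in\text{sp}(\mu)}\mu_i\,\sigma(\pi_i/\mu_i)\le \sigma(1)$, I would apply Jensen's inequality to the concave function $\sigma$ with weights $(\mu_i)_{i\in\text{sp}(\mu)}$:
\[
\sum_{i\in \text{sp}(\mu)} \mu_i\, \sigma\!\left(\tfrac{\pi_i}{\mu_i}\right) \;\le\; \sigma\!\left(\sum_{i\in \text{sp}(\mu)} \mu_i \cdot \tfrac{\pi_i}{\mu_i}\right) \;=\; \sigma\!\left(\sum_{i\in \text{sp}(\mu)} \pi_i\right) \;\le\; \sigma(1),
\]
where the last step uses monotonicity of $\sigma$ together with $\sum_{i\in\text{sp}(\mu)}\pi_i\le 1$.

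There is no real obstacle here; the only subtlety is the bookkeeping around states outside $\text{sp}(\mu)$, which is dispatched by restricting sums to $\text{sp}(\mu)$ under the standard convention. Combining the two bounds yields $-\sigma(0)\ge \beta^\sigma(\mu,\pi)\ge -\sigma(1)$.
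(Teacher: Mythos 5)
Your proof is correct, and while the first inequality ($\beta^\sigma(\mu,\pi)\le-\sigma(0)$ via monotonicity) matches the paper exactly, your argument for the second inequality takes a genuinely different and more direct route. The paper obtains $\beta^\sigma(\mu,\pi)\ge-\sigma(1)$ by recycling the optimization machinery from Proposition 1: it notes that $-\beta^\sigma(\mu,\pi)$ is bounded above by the value at the constrained maximizer, namely $f(A)=\mu(A)\,\sigma\bigl(\tfrac{1}{\mu(A)}\bigr)+(1-\mu(A))\,\sigma(0)$ with $A=\text{sp}(\mu)\cap\text{sp}(\pi)$, and then invokes the fact (established inside the proof of Proposition 1) that $x\,\sigma(\tfrac1x)+(1-x)\,\sigma(0)$ is increasing on $(0,1)$, so that $f(A)\le\sigma(1)$. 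You instead apply Jensen's inequality directly to the concave $\sigma$ with weights $(\mu_i)_{i\in\text{sp}(\mu)}$, getting $\sum_i\mu_i\,\sigma(\pi_i/\mu_i)\le\sigma\bigl(\sum_{i\in\text{sp}(\mu)}\pi_i\bigr)\le\sigma(1)$. Your route is self-contained and more elementary --- it does not depend on Proposition 1 at all, and it handles the edge case $\text{sp}(\mu)\cap\text{sp}(\pi)=\emptyset$ seamlessly (the Jensen argument just evaluates $\sigma$ at $0\le1$), whereas the paper's formula $f(A)$ is somewhat delicate there. What the paper's approach buys is economy within its own narrative: the monotonicity of $x\,\sigma(\tfrac1x)+(1-x)\,\sigma(0)$ was already proved, and the bound is read off as a one-line corollary of the earlier optimization. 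Your bookkeeping convention restricting the sum to $\text{sp}(\mu)$ is the same one the paper uses implicitly, so there is no gap on that front.
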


\begin{proof}[Proof of Lemma 1]
Since $\sigma$ is strictly increasing, it is immediate that $\beta^\sigma(\mu, \pi)\le -\sigma(0)$. For any $C\in\Sigma$, let 
\[f(C)=\mu(C)\,\sigma\big(\frac{1}{\mu(C)}\big)+(1-\mu(C))\,\sigma(0).\]
As we showed in the proof of Proposition 1, $x\,\sigma(\frac{1}{x})+(1-x)\,\sigma(0)$ is strictly increasing when $x\in (0, 1)$. Hence we have, $\sigma(1)\ge f(C)$. Let $A=\text{sp}(\mu)\bigcap \text{sp}(\pi)$. By Proposition 1, $f(A)\ge -\beta^\sigma(\mu, \pi)$. Hence, $\beta^\sigma(\mu, \pi)\ge -\sigma(1)$.
\end{proof}



\subsection{Proof of Proposition \ref{os2}}

We first consider the scenario where $E$ is a null-event. Then for any $\pi\in\Delta(E)$, we have 
\[d_\mu(\pi)=\beta^\sigma(\mu^*, \pi).\]
Then by Proposition 1, we have $\mu_E=\text{BU}(\mu^*, E)$. Suppose now $E$ is non-null. Let $\text{sp}(\mu)=A\cup C$ and $E=B\cup C$ where $\text{sp}(\mu)\cap E=C$. 

\[d_\mu(\pi)=
\begin{cases}
\beta^\sigma(\mu, \pi)&\text{ if }\pi\in\Delta(E)\setminus \Delta(B),\\
\beta^\sigma(\mu^*, \pi)+\sigma(1)+|\sigma(0)|&\text{ if }\pi\in\Delta(B).
\end{cases}\]
Let $\mu^1=\text{BU}(\mu, E)$ and $\mu^2=\text{BU}(\mu, B)$. 
By Proposition 1, $\mu^1$ maximizes $\beta^\sigma(\mu, \pi)$ subject to the constraint $\pi\in\Delta(E)$. Again, by Proposition 1, $\mu^2$ maximizes $\beta^\sigma(\mu^*, \pi)$ subject to the constraint $\pi\in\Delta(B)$. Hence, to show that $\mu_E=\mu^1$, it is sufficient to prove that $d_\mu(\mu^1)<d_\mu(\mu^2)$; equivalently,
\[d_\mu(\mu^1)=\beta^\sigma(\mu, \mu^1)<d_\mu(\mu^2)=\beta^\sigma(\mu^*, \mu^2)+\sigma(1)+|\sigma(0)|.\]
The above inequality is implied by Lemma 1.

\subsection{Proof of \autoref{repthrm}}

Note that (iii) trivially implies (i). Let us first show that (i) implies (ii). Suppose $\{\succsim_E\}$ admits an \nameref{IEU} representation with respect to $(\mu, u, d_\mu)$. The \nameref{IEU} representation indeed satisfies \nameref{SEU}. We now prove the necessity of Consequentialism and Dynamic Coherence.

\medskip
\noindent\textbf{Consequentialism.} Take any $E\in\Sigma$ and $f, g\in F$ such that $f(s)=g(s)$ for all $s\in E$. Since $\mu_E(E)=1$ and $f(s)=g(s)$ for all $s\in E$, we have  
\[\sum_{s\in S} \mu_E(s)f(s)=\sum_{s\in E} \mu_E(s)f(s)=\sum_{s\in S} \mu_E(s)g(s)=\sum_{s\in E} \mu_E(s)g(s);\] 
i.e., $f\sim_{E} g$.

\medskip
\noindent\textbf{Dynamic Coherence.} Take any $A_1, \ldots, A_n\subseteq S$ such that $S\setminus A_i$ is $\succsim_{A_{i+1}}$-null for each $i\le n-1$ and $S\setminus A_n$ is $\succsim_{A_1}$-null. Equivalently, $\mu_{A_{i+1}}(A_i)=1$ for each $i\le n-1$ and $\mu_{A_1}(A_n)=1$. Since $\mu_{A_{i+1}}\in\Delta(A_i)$ and $\mu_{A_i}=\arg\min_{\pi\in\Delta(A_i)}d_\mu(\pi)$, $d_\mu(\mu_{A_i})\le d_\mu(\mu_{A_{i+1}})$. Similarly, we have $d_\mu(\mu_{A_n})\le d_\mu(\mu_{A_{1}})$. Therefore, we have 
\[d_\mu(\mu_{A_1})\le d_\mu(\mu_{A_{2}})\le \ldots \le d_\mu(\mu_{A_{n}})\le d_\mu(\mu_{A_{1}});\]
i.e., $d_\mu(\mu_{A_1})=d_\mu(\mu_{A_{n}})$. Since $\mu_{A_n}$ is the unique minimizer of $d_\mu$ in $\Delta(A_n)$ and $\mu_{A_1}\in\Delta(A_n)$, $d_\mu(\mu_{A_1})=d_\mu(\mu_{A_{n}})$ implies that $\mu_{A_1}=\mu_{A_{n}}$; i.e., $\succsim_{A_1}=\succsim_{A_{n}}$. 

\bigskip
Let us now show that (ii) implies (iii). Suppose $\{\succsim_E\}_{E\in\Sigma}$ satisfies \nameref{SEU}, \nameref{CON}, and \nameref{DCoh}. Since $\succsim$ satisfies SEU postulates, there is $(\mu, u)$ such that $\succsim$ has a SEU representation with  $(\mu, u)$. Since $\succsim_E$ satisfies SEU postulates, there is $(\mu_E, u_E)$ such that $\succsim_E$ has a SEU representation with  $(\mu_E, u_E)$. By Invariant Risk Preference, $u_E(p)\ge u_E(q)$ and $u(p)\ge u(q)$ for any $p, q\in\Delta(X)$. Without loss of generality, let us assume that $u_E=u$. Hence, $\succsim_E$ has a SEU representation with  $(\mu_E, u)$.

Let us now discuss the implications of \nameref{CON}. Take any $E \in \Sigma$ and any $f,g\in \F$ and $p, q\in\Delta(X)$ such that $p\succ q$ and $f(s) =g(s)=p$ for all $s \in E$ and $f(s)=p$ and $g(s)=q$ for any $s\in E^c$. By \nameref{CON}, we have $f\sim_{E} g$; equivalently, 
\[\sum_{s\in S} \mu_E(s)f(s)=u(p)=\sum_{s\in E} \mu_E(s)g(s)=\mu_E(E)u(p)+(1-\mu_E(E))u(q).\] 
In other words, we have $\mu_E(E)=1$; i.e.,  $\mu_E\in \Delta(E)$.

\bigskip
\noindent\textbf{Afriat's theorem for general budget sets.} To obtain the IU representation, we use an extension of Afriat's theorem (\cite{afriat1967construction}) for general budget sets due to \cite{matzkin1991axioms}. To state Afriat's theorem for general budget sets, some notation is necessary. Let $Z$ be a convex, bounded subset of $\mathbb{R}^n_{+}$. Let $\mathscr{D}=(\mathbf{x}^t, B^t)_{t\in T}$ be a data set where $\mathbf{x}^t\in B^t$ is the observed consumption bundle that is chosen from the budget set $B^t\subset Z$ at observation $t\in T$. We say that $(\mathbf{x}^t, B^t)$ is a co-convex subset of $Z$ if the following three conditions hold: (i) $Z\setminus B^t$ is open and convex; (ii) for any $\mathbf{e}\ge 0$ and $\mathbf{x}\in Z\setminus B^t$, $\mathbf{x}+\mathbf{e}\in Z$ implies $\mathbf{x}+\mathbf{e}\in Z\setminus B^t$; and (iii) for any $\mathbf{e}>0$, $\mathbf{x}^t+\mathbf{e}\in Z$ implies $\mathbf{x}^t+\mathbf{e}\in Z\setminus B^t$.

Let us now define the following revealed preference relation on $\{\mathbf{x}^t\}_{t\in T}$. We say $\mathbf{x}^t$ is revealed preferred to $\mathbf{x}^s$, denoted by $\mathbf{x}^t\succsim_R \mathbf{x}^s$ if $\mathbf{x}^s\in B^t$. We say $\mathbf{x}^t$ is strictly revealed preferred to $\mathbf{x}^s$, denoted by $\mathbf{x}^t\succ_R \mathbf{x}^s$ if $\mathbf{x}^s\in B^t$ and $\mathbf{x}^t\neq \mathbf{x}^s$. Finally, we say the data set $\mathscr{D}=(\mathbf{x}^t, B^t)_{t\in T}$ satisfies the Strong Axiom of Revealed Preferences (SARP) if $\succsim_R$ is acyclic; i.e., there is no sequence $\mathbf{x}^{t_1}, \mathbf{x}^{t_2}, \ldots, \mathbf{x}^{t_L}$ such that $\mathbf{x}^{t_l}\succsim_R \mathbf{x}^{t_{l+1}}$ for each $l\le L-1$ and $\mathbf{x}^{t_L}\succ_R\mathbf{x}^{t_1}$.

\medskip
\noindent\textbf{Theorem 1 of Matzkin (1991).} Suppose for each $t\in T$, $(\mathbf{x}^t, B^t)$ is a co-convex subset of $Z$. Then the data set $\mathscr{D}=(\mathbf{x}^t, g^t)_{t\in T}$ satisfies SARP if and only if there is a strictly increasing, continuous, strictly concave utility function $u:Z\to\mathbb{R}$ such that for any $t\in T$,
\[u(\mathbf{x}^t)>u(\mathbf{x})\text{ for any }\mathbf{x}\in B^t\setminus\{\mathbf{x}^t\}.\]

To apply the above theorem, let us arbitrarily label the set of all events: $\Sigma=\{E_t\}_{t\in T}$. Then let $Z=\Delta(S)$ and $\mathbf{x}^t=\mu_{E_t}$ and $B^t=\Delta(E_t)$ for each $t\in T$. Let $\mathscr{D}=(\mathbf{x}^t, B^t)_{t\in T}$. 

Note that $Z$ is a convex, bounded subset of $\mathbb{R}^n_{+}$. Let us show that $(\mathbf{x}^t, B^t)$ is a co-convex subset of $Z$. First, $Z\setminus B^t$ is open and convex in $Z$. Second, for any $\mathbf{x}\in Z$ and $\mathbf{e}\ge 0$, $\mathbf{x}+\mathbf{e}\in Z$ implies $\mathbf{e}=0$. Hence, $(ii)$ and $(iii)$ of co-convexity are trivially satisfied.

Let us now show that \nameref{DCoh} implies that $\mathscr{D}=(\mathbf{x}^t, B^t)_{t\in T}$ satisfies SARP. Take any sequence $\mathbf{x}^{t_1}, \mathbf{x}^{t_2}, \ldots, \mathbf{x}^{t_L}$ such that $\mathbf{x}^{t_l}\succsim_R \mathbf{x}^{t_{l+1}}$ for each $l\le L-1$ and $x^{t_L}\succsim_R\mathbf{x}^{t_1}$. To prove SARP, we shall show that $\mathbf{x}^{t_L}=\mathbf{x}^{t_1}$. By definition of the revealed preference relation $\succsim_R$, $\mathbf{x}^{t_l}\succsim_R \mathbf{x}^{t_{l+1}}$ is equivalent to $\mathbf{x}^{t_{l+1}}\in \Delta(E^{t_l})$. In other words, $\mu_{E_{t_{l+1}}}\in\Delta(E_{t_l})$ for each $l\le L-1$. Similarly, $\mu_{E_{t_1}}\in\Delta(E_{t_L})$. 

Note that $\mu_{E_{t_{l+1}}}\in\Delta(E_{t_l})$ implies $\mu_{E_{t_{l+1}}}(E_{t_l})=1$; equivalently, $\mu_{E_{t_{l+1}}}(S\setminus E_{t_l})=0$. In other words, $S\setminus E_{t_l}$ is $\succsim_{E_{t_{l+1}}}$-null for each $l\le L-1$. Similarly, $S\setminus E_{t_L}$ is $\succsim_{E_{t_{1}}}$-null. By \nameref{DCoh}, $\succsim_{E_{t_1}}=\succsim_{E_{t_L}}$; equivalently, $\mu_{E_{t_1}}=\mu_{E_{t_L}}$. In other words, $\mathbf{x}^{t_1}=\mathbf{x}^{t_L}$.

Since $\mathscr{D}=(\mathbf{x}^t, B^t)_{t\in T}$ satisfies SARP, by Theorem 1 of Matzkin (1991), 
there is a strictly increasing, continuous, strictly concave utility function $u:Z\to\mathbb{R}$ such that for any $t\in T$,
\[u(\mathbf{x}^t)>u(\mathbf{x})\text{ for any }\mathbf{x}\in B^t\setminus\{\mathbf{x}^t\}.\]
Let $d_\mu=-u$. Then since $B^t=\Delta(E_t)$ and $\mathbf{x}^t=\mu_{E_t}$, 
\[\mu_{E_t}=\arg\min_{\pi\in \Delta(E_t)} d_\mu(\pi).\]
Finally, note that $d_\mu$ is continuous and strictly convex.

\subsection{Proof of \autoref{prp:DC}}

This follows directly from existing results on Dynamic Consistency. For example, see \cite{Ghirardato2002}.

\subsection{Proof of \autoref{prp:d-BD}}


We start by constructing a distortion $\delta_E$ for an arbitrary non-null event $E$. Without loss, suppose $|E| \ge 2$. Fix $s^* \in E$ with $\mu_E(s^*)>0$. For all $s \in E$, let $\delta_E(\mu(s))=\frac{\mu_E(s)}{\mu_E(s^*)}$. Consider any $s_1,s_2 \in E$ such that $\mu(s_1)=\mu(s_2)$.  Then by \nameref{Consistency} it follows that $\mu_E(s_1)=\mu_E(s_2)$, and by construction of $\delta_E$, it follows that $\delta_E(\mu(s_1))=\frac{\mu_E(s_1)}{\mu_E(s^*)}= \frac{\mu_E(s_2)}{\mu_E(s^*)}=\delta_E(\mu(s_2))$, hence $\delta_E$ is well-defined. Finally, note that for any $s,s' \in E$, $\frac{\delta_E(\mu(s'))}{\delta_E(\mu(s))} =  \frac{\mu_E(s')}{\mu_E(s)}$. Summing over $s' \in E$ and using $\sum_{s' \in E}\mu_E(s')=1$ yields $\mu_E(s)=\frac{\delta_E(\mu(s))}{\sum_{s' \in E}\delta_E(\mu(s'))}$, hence $\mu_E=BU(\delta_E(\mu), E)$.

Next, we use \nameref{IIA} to show that $\delta_E$ is in fact independent of $E$. Fix any $E_1,E_2$ with $s,s' \in E_1 \cap E_2$. Consider some $p,q$ such that $p\{s\}r \sim_{E_1} q\{s'\}r$. It is without loss to suppose $u(r)=0$, and hence $u(p)\mu_{E_1}(s) = u(q) \mu_{E_1}(s')$. By the previous result, it follows that \[u(p)\frac{\delta_{E_1}(\mu(s))}{\sum_{\tilde{s} \in E_1}\delta_{E_1}(\mu(\tilde{s}))} = u(q) \frac{\delta_{E_1}(\mu(s'))}{\sum_{\tilde{s} \in E_1}\delta_{E_1}(\mu(\tilde{s}))},\]
and so \[\frac{\delta_{E_1}(\mu(s))}{\delta_{E_1}(\mu(s'))}=\frac{u(q)}{u(p)}.\]
By applying \nameref{IIA}, it follows that $p\{s\}r \sim_{E_2} q\{s'\}r$, and hence 
\[\frac{\delta_{E_1}(\mu(s))}{\delta_{E_1}(\mu(s'))}=\frac{u(q)}{u(p)} = \frac{\delta_{E_2}(\mu(s))}{\delta_{E_2}(\mu(s'))}.\]

Hence there exists a $\delta:[0, 1]\to \mathbb{R}_{+}$ such that for any non-null $E$, $\mu_E=BU(\delta(\mu), E)$. Finally, since $\delta$ is clearly only unique up to a scalar, it is without loss to suppose that $\delta:[0, 1]\to [0,1]$.\footnote{It is clear from our proof that \nameref{Consistency} and \nameref{IIA} can be imposed for null-events and we obtain Distorted Bayesian with $\delta>0$.}

\subsection{Proof of \autoref{prp:mon}}

It is clear that \nameref{Mon} implies \nameref{Consistency}, and by the previous result we have some $\delta:[0, 1]\to [0, 1]$ such that $\mu_E=BU(\delta(\mu), E)$ for any non-null $E$. Consider any $E$ and $s,s' \in E$ and suppose $\mu(s) > \mu(s)$. Then from \nameref{Mon}, if $x \succ y$ if follows that $x \{s\} y\succ x \{s'\} y$ and thus $x \{s\} y\succ_{E} x \{s'\} y$, which implies $\mu_E(s) > \mu_E(s')$. From here it is immediate that $\delta(\mu(s)) > \delta(\mu(s'))$. Since $\delta$ is arbitrary outside of $\{\mu(s)\}_{s \in S}$, it can extended to $[0,1]$ so that $\delta$ is strictly increasing.

\subsection{Proof of \autoref{CPSthrm}}

Necessity of the axioms is trivial, so we only prove sufficiency. By \nameref{SEU}, there are $U$ and $\{\mu_E\}_{E\in\Sigma}$ such that for any $E\in\Sigma$, $\succsim_{E}$ admits a SEU representation with $(\mu_E, u)$; for all $f,g \in \cal F$:
\[f \succsim_{E} g \quad \text{if and only if} \quad \sum_{s\in E}  U\big(f(s)\big)\mu_E(s) ~ \geq~ \sum_{\omega \in S_E}  U\big(g(s)\big) \mu_E(s).\]
By \nameref{CON}, $\mu_E(E)=1$. We now shall show Equation \ref{CPSeq}. Take any $E\in\Sigma$ and $A\subseteq E$. Let $A \in \Sigma$ be a $\succsim_{E}$-non-null event; i.e., $\mu_E(A)>0$. Consider acts $f_Ah$ and $g_Ah$ such that  $f_Ah \succsim_{E} g_Ah$; i.e.,
\begin{eqnarray}
\sum_{s \in S}  u\big(f_Ah(s)\big)\mu_E(s) ~ \geq~ \sum_{s \in S}  u\big(g_Ah(s)\big) \mu_E(s).
\end{eqnarray}
By \nameref{CON}, 
\begin{eqnarray}
\sum_{s \in E}  u\big(f_Ah(s)\big)\mu_E(s) ~ \geq~ \sum_{s \in E}  u\big(g_Ah(s)\big) \mu_E(s),
\end{eqnarray}
or equivalently, 
\begin{eqnarray}\label{EQ-1-BU}
\sum_{s \in A}  u\big(f(s)\big)\mu_E(s) ~ \geq~ \sum_{s \in A}  u\big(g(s)\big) \mu_E(s).
\end{eqnarray}
By \nameref{CC}, $f\succsim_{A} g$; i.e.,
\begin{eqnarray}\label{EQ-2-BU}
\sum_{s \in A}  u\big(f(s)\big)\mu_A(s) ~ \geq~ \sum_{s \in A}  u\big(g(s)\big) \mu_A(s).
\end{eqnarray}
Since Equations \eqref{EQ-1-BU} hold for any $f, g$, we have Bayesian updating
\begin{eqnarray}
\mu_A(s)=\frac{\mu_E(s)}{\mu_E(A)}~\text{for each}~s\in A.
\end{eqnarray}

Finally, if $\mu_E(A)=0$, then we have $\mu_E(s)=0$ for any $s\in A$. Hence, Equation \ref{CPSeq} holds. 

\subsection{Proof of \autoref{CPS=MD}}

Suppose $\{\succsim_{E}\}_{E\in \Sigma}$ admits a CPS representation. To prove the first part of this proposition, we construct $\mu^0, \ldots, \mu^K\in\Delta(S)$ inductively. Let $S_0=S$, and consider $\succsim_{S_0}$. We let $\mu^0=\mu_{S_0}$, and if $\mu^0$ has full support, stop. Otherwise, let $S_1$ denote the set of all $\succsim_{S_0}$-null states, and let $\mu^1=\mu_{S_1}$. By \nameref{CON}, $\mu^1(S_0\setminus S_1)=0$. If $\text{sp}(\mu^1)=S_1$, stop. Otherwise, let $S_2$ denote the set of all $\succsim_{S_1}$-null states, and $\mu^2=\mu_{S_2}$. We proceed in this fashion until we reach a $K$ such that $\text{sp}(\mu^K)=S_K$. Since $S$ is finite, we must eventually stop. Note that we have constructed $\mu^0, \ldots, \mu^K\in\Delta(S)$ such that $\text{sp}(\mu^0), \ldots, \text{sp}(\mu^K)$ is a partition of $S$. We now shall prove that for any $E\in\Sigma$, $\mu_{E}=\text{BU}(\mu^{k^*}, E)$ where $k^*=\min\{k \mid \mu^k(E) > 0\}$.

Since $k^*=\min\{k \mid \mu^k(E) > 0\}$, $E\subseteq S_{k^*}=\bigcup_{k\ge k^*} \text{sp}(\mu^k)$. By the construction, $\mu^{k^*}=\mu_{S_{k^*}}$. Hence, $\mu_{S_{k^*}}(E)>0$. Then by Equation \ref{CPSeq}, for any $s\in E$, $\mu_E(s)=\frac{\mu^{k^*}(s)}{\mu^{k^*}(E)}$;
equivalently, $\mu_{E}=\text{BU}(\mu^{k^*}, E)$.

We now shall show that $\{\succsim_{E}\}_{E\in \Sigma}$ has an \nameref{IEU} representation with respect to the following distance function: 
\[d_\mu(\pi)=\beta^\sigma(\mu^{k^*}, \pi)+k^*\,\big(\sigma(1)+|\sigma(0)|\big),\]
where $k^*=\min\{k \mid \mu^k( \text{sp}(\pi))>0\}$. It is enough to show that for any $E\in\Sigma$, 
\[\mu_E=\argmin_{\pi\in \Delta(E)} d_\mu(\pi).\]
Take any $E$ and let $k^*=\min\{k \mid \mu^k(E) > 0\}.$ Note that for any $\pi\in\Delta(E)$, $\min\{k \mid \mu^k( \text{sp}(\pi))>0\}\ge k^*$. Hence, $\Delta_{k^*}, \ldots, \Delta_{K}$ be the partition of $\Delta(E)$ such that for any $\pi\in\Delta(E)$, $\pi\in \Delta_l$ if and only if $l=\min\{k \mid \mu^k( \text{sp}(\pi))>0\}$. Let $\rho^l=\arg\min_{\pi\in\Delta_l} d_{\mu}(\pi)$. By Proposition 1, $\rho^{k^*}=BU(\mu^{k^*}, E)$. Take any $l>k^*$. We shall show $d_\mu(\rho^{k^*})<d_\mu(\rho^{l})$; equivalently,
\[\beta^\sigma(\mu^{k^*}, \rho^{k^*})-\beta^\sigma(\mu^{l}, \rho^{l})<(l-k^*)\big(\sigma(1)+|\sigma(0)|\big).\] 
The above inequality is implied by Lemma 1.

\subsection{Proof of \autoref{CPS-HTE}}

See the proof of \autoref{eCPS} as this corollary is a special case of \autoref{eCPS}  when $\epsilon=0$. Alternatively, Proposition 3 and Theorem 2 also imply this corollary.

\subsection{Proof of \autoref{eCPS}}

Let $\{\succsim_E\}$ be a family of preference relations with an $\epsilon$-CPS representation for some $\epsilon\in [0, 1)$. Then, there are probability distributions $\mu_0, \ldots, \mu_K$ such that 
\[\mu_E=\text{BU}(\mu_{k^*}, E)\text{ where }k^*=\min\{k\le K \mid \mu_k(E)>\epsilon\}\]
 for every $E\in\Sigma$. Let $\Sigma_0, \ldots, \Sigma_K$ be a partition of $\Sigma$ such that for each $k$, $\Sigma_k$ is the collection of events for which the prior $\mu_k$ is used for updating: \[\Sigma_k=\{E\in \Sigma \mid k=\min\{\tilde{k}\le K \mid \mu_{\tilde{k}}(E)>\epsilon\}\}.\] Throughout this proof, we assume that for any $k\le K$, $E_k$ is an element of $\Sigma_k$. Take $\overline{\rho}_0, \underline{\rho}_0, \ldots, \overline{\rho}_K, \underline{\rho}_K$ with \[\overline{\rho}_0>\underline{\rho}_0>\overline{\rho}_1>\underline{\rho}_1>\ldots>\overline{\rho}_K>\underline{\rho}_K>\delta\,\overline{\rho}_0>0\]
 and $\underline{\rho}_k>\overline{\rho}_k\,\mu^{E'_k}(E_k)$ for any $E_k, E'_k$ with $\mu^{E'_k}(E_k)<1$. 
 
 Let $\mu^E_k=\text{BU}(\mu_k, E)$ for any $E\in \Sigma$. Let $\rho$ be an element of $\Delta(\{\mu^{E_k}_k\}_{k\le K, E_k\in \Sigma_k})$ such that (i) $\rho(\mu ^{E_k}_k)\in (\underline{\rho}_k, \overline{\rho}_k)$ for any $k\le K$ and (ii) $\rho(\mu ^{E_k}_k)>\rho(\mu ^{E'_k}_k)$ if $\mu ^{E_k}_k\neq \mu ^{E'_k}_k$ and $\mu ^{E'_k}_k(E_k)=1$. 
 
 Let us first show that there is $\rho$ that satisfies (ii). Let $\mu ^{E_k}_k\succ^* \mu ^{E'_k}_k$ if $\mu ^{E_k}_k\neq \mu ^{E'_k}_k$ and $\mu ^{E'_k}_k(E_k)=1$. It is enough to show that $\succ^*$ is acyclic. To show acyclicity, suppose that there are $E^1_k, \ldots, E^T_k$ such that $\mu ^{E^t_k}_k(E^{t+1}_k)=1$ for each $t\le T-1$ and $\mu ^{E^T_k}_k(E^1_k)=1$. Note that  $\mu ^{E'_k}_k(E_k)=1$ is equivalent to $\text{sp}(\mu_k)\cap E'_k\subseteq E_k$. Hence, $\mu ^{E'_k}_k(E_k)=1$ implies $\text{sp}(\mu_k)\cap E'_k\subseteq \text{sp}(\mu_k)\cap E_k$. Then, $\mu ^{E^t_k}_k(E^{t+1}_k)=1$ implies $\text{sp}(\mu_k)\cap E^{t}_k\subseteq \text{sp}(\mu_k)\cap E^{t+1}_k$ and $\mu ^{E^T_k}_k(E^1_k)=1$ implies  $\text{sp}(\mu_k)\cap E^T_k\subseteq \text{sp}(\mu_k)\cap E^1_k$. Hence,  $\text{sp}(\mu_k)\cap E^t_k=\text{sp}(\mu_k)\cap E^{t'}_k$ for any $t, t'$; i.e., $\mu ^{E^t_k}_k=\mu ^{E^{t'}_k}_k$.    
 
 We now show that $\{\succsim_E\}$ has a HT representation with $(\rho, \delta)$ when $\delta$ is large enough. Hence, we shall show that for any $E_k$, 
 
 \[\rho(\mu^{E_k}_k)\mu^{E_k}_k(E_k)=\rho(\mu^{E_k}_k)>\rho(\mu^{E_j}_j)\mu^{E_j}_j(E_k)\text{ for any }j\neq k.\]
For any $j>k$, the above holds since $\rho(\mu^{E_k}_k)>\rho(\mu^{E_j}_j)$. Suppose now $j<k$. In this case, $\mu_j(E_k)\le \epsilon$ since $k$ is the lowest index such that $\mu_k(E_k)>\epsilon$. Then, $\mu^{E_j}_j(E_k)=\text{BU}(\mu_j, E_j)(E_k)=\frac{\mu_j(E_k\cap E_j)}{\mu_j(E_j)}$. Since $\mu_j(E_k)\le \epsilon$ and $\mu_j(E_j)>\epsilon$, there is a large enough $\delta\in [0, 1)$ such that $\mu^{E_j}_j(E_k)\le \delta$. Hence, by the construction of $\rho$, 
 \[\rho(\mu^{E_k}_k)>\delta \rho(\mu^{E_j}_j)\ge \rho(\mu^{E_j}_j)\mu^{E_j}_j(E_k).\]
We finally show that the HT representation correctly chooses $\mu^{E_k}_k$ among $\{\mu^{E'_k}_k\}_{E'_k}$ for each $E_k$. When $\mu^{E'_k}_k(E_k)<1$, we have 
 \[\rho(\mu^{E_k}_k)\mu^{E_k}_k(E_k)=\rho(\mu^{E_k}_k)>\underline{\rho}_k>\overline{\rho}_k\mu^{E'_k}_j(E_k)>\rho(\mu^{E'_k}_k)\mu^{E'_k}_j(E_k).\]
When $\mu^{E'_k}_k(E_k)=1$ and $\mu^{E_k}_k\neq \mu^{E'_k}_k$, \[\rho(\mu^{E_k}_k)\mu^{E_k}_k(E_k)=\rho(\mu^{E_k}_k)>\rho(\mu^{E'_k}_k)=\rho(\mu^{E'_k}_k)\mu^{E'_k}_j(E_k).\]

It is immediate from the above construction of $\delta$, $\delta=0$ whenever $\epsilon=0$. 

\subsection{Proof of \autoref{concave}}

Let $B=\Omega\setminus A$.  Let $x_i=\pi_{m_1}(\omega_i)$ and $\delta_i=|g\big(\rho_i\big)\, u_i|$. The sender's problem reduces to 
\[\max_{\textbf{x}\in [0, 1]^n} \sum^n_{i=1} \rho_i\, x_i\text{ subject to }\sum^n_{i\in A}\delta_i\,f(x_i)\ge \sum^n_{i\in B}\delta_i\,f(x_i).\]
It is immediate that $x^*_i=1$ whenever $i \in A$. Let $M=\sum_{i\in A}\delta_i\,f(1)$. Then \[\max_{x_i\in [0, 1]} \sum_{i\in B} \rho_i\, x_i\text{ subject to }M\ge \sum_{i\in B}\delta_i\,f(x_i).\] 

\smallskip
\noindent\textbf{Case 1.} $f(x)=x$ and $\frac{g(\rho_i) u_i}{\rho_i}\neq \frac{g(\rho_j) u_j}{\rho_j}$ for any $i, j$ with $u_i, u_j<0$

Note that when $\frac{\rho_i}{\delta_i}>\frac{\rho_j}{\delta_j}$, we cannot have $1>x^*_i$ and $x^*_j>0$. The optimal signal structure takes a form \[x^*_{i_1}=\ldots=x^*_{i_k}=1>x^*_{i_{k+1}}=\frac{M-\sum^k_{s=1} \delta_{i_s}}{\delta_{i_{k+1}}}\ge x^*_{i_{k+2}}=x^*_{i_{|B|}}=0,\] where $\frac{\rho_{i_1}}{\delta_{i_1}}> \ldots > \frac{\rho_{i_k}}{\delta_{i_k}}> \ldots > \frac{\rho_{i_{|B|}}}{\delta_{i_{|B|}}}$. 

\medskip
\noindent\textbf{Case 2.} $f$ is strictly concave.

Let us show that for any $i, j$, we cannot have $x^*_i, x^*_j\in (0, 1)$. Take any $i, j$ and let $\delta_i f(x^*_i)+\delta_j f(x^*_j)=m$. Then $x^*_i, x^*_j$ must be the solution to the following maximization problem 
\[\max_{x_i, x_j\in [0, 1]} \rho_i x_i+\rho_j x_j\text{ subject to }\delta_i f(x_i)+\delta_j f(x_j)=m.\]
From the constraint, we have $x_j=f^{-1}\big(\frac{m-\delta_i f(x_i)}{\delta_j}\big)$. Hence, the above maximization problem reduces to 
\[\max_{x_i\in [a_1, a_2]} \rho_i x_i+\rho_j\, f^{-1}\big(\frac{m-\delta_i f(x_i)}{\delta_j}\big),\]
where $a_1=\max\{0, f^{-1}\big(\frac{m-\delta_j f(1)}{\delta_i}\big)\}$ and $a_2=\min\{1, f^{-1}\big(\frac{m-\delta_j f(0)}{\delta_i}\big)\}$. The objective function is strictly convex since $f$ is strictly concave and $f$ is increasing. Hence, either $x^*_i=a_1$ or $x^*_i=a_2$. Note that $x^*_i=a_1$ means that either $x^*_i=0$ or $x^*_j=1$ and $x^*_i=a_2$ means that either $x^*_i=1$ or $x^*_j=0$. Hence, the optimal signal structure takes a form \[x^*_{i_1}=\ldots=x^*_{i_k}=1>x^*_{i_{k+1}}=\frac{M-\sum^k_{s=1}\delta_{i_s}f(1)-\sum^{|B|}_{s=k+2}\delta_{i_s} f(0)}{\delta_{i_{k+1}}}\ge x^*_{i_{k+2}}=x^*_{i_{|B|}}=0,\]
where $\{i_1, \ldots, i_{|B|}\}$ is a permutation of $B$.

 \subsection{Proof of \autoref{convex}}

Similar to the argument in the proof of \autoref{concave}, we need to solve \[\max_{x_i\in [0, 1]} \sum_{i\in B} \rho_i\, x_i\text{ subject to }M\ge \sum_{i\in B}\delta_i\,f(x_i),\]
where $M=\sum_{i\in A}\delta_i\,f(1)$. As long as $M>0$, there exists $x^*_j>0$. Take any $i\neq j$. Let us show that $x^*_i=0$. Let $\delta_i f(x^*_i)+\delta_j f(x^*_j)=m$. Then $x^*_i, x^*_j$ must be the solution to the following maximization problem 
\[\max_{x_i, x_j\in [0, 1]} \rho_i x_i+\rho_j x_j\text{ subject to }\delta_i f(x_i)+\delta_j f(x_j)=m.\]
From the constraint, we have $x_j=f^{-1}\big(\frac{m-\delta_i f(x_i)}{\delta_j}\big)$. Hence, the above maximization problem reduces to 
\[\max_{x_i\in [a_1, a_2]} \rho_i x_i+\rho_j\, f^{-1}\big(\frac{m-\delta_i f(x_i)}{\delta_j}\big),\]
where $a_1=\max\{0, f^{-1}\big(\frac{m-\delta_j f(1)}{\delta_i}\big)\}$ and $a_2=\min\{1, f^{-1}\big(\frac{m-\delta_j f(0)}{\delta_i}\big)\}$. Since $f'(0)=0$, $(\rho_i x_i+\rho_j\, f^{-1}\big(\frac{m-\delta_i f(x_i)}{\delta_j}\big))'_{x_i}|_{x_i=0}=\rho_i>0$. Hence $x^*_i=0$ cannot be optimal solution. Hence, $x^*_i>0$.

\subsection{Proof of \autoref{WIEU}}

$(ii)\Rightarrow(iii)$. Take any $A\subset S$. Since $\succsim$ and $\succsim_A$ have SEU representations with respect to $(u, \mu)$ and $(u, \mu_A)$, $E$ is $\succsim_A$-unfavored if for any $E'\subseteq S$ and $p, q\in \Delta(X)$, for any $u(p)\mu(E')=u(q)\mu(E)$ implies $u(p)\mu_A(E')\ge u(q)\mu_A(E)$, with at least one strict inequality for some $E'$. Note that when $\mu=\mu_A$, there is no $\succsim_A$-unfavored event since $u(p)\mu(E')=u(q)\mu(E)$ implies $u(p)\mu_A(E')=u(q)\mu_A(E)$ for every $E'$ and $p, q$. However, by Partial Consequentialism, $A^c$ is $\succsim$-unfavored. Hence, $\mu\neq \mu_A$.

If $E$ is $\succsim_A$-unfavored, then $\delta(A)= \frac{\mu_A(E)}{\mu(E)}$ where $\delta(A)=\min_{E'}\frac{\mu_A(E')}{\mu(E')}$. Since $\mu\neq \mu_A$, $\delta(E)<1$. Therefore,
\[E\text{ is }\succsim_A\text{-unfavored iff }\frac{\mu_A(E)}{\mu(E)}=\delta(A).\]

Consider the vector $\mu^*_A=\frac{\mu_A-\delta(A)\mu}{1-\delta(A)}$. For each $s\in S$, since 
$\frac{\mu_A(s)}{\mu(s)}\ge\delta(A)$, $\mu^*_A(s)=\frac{\mu_A(s)-\delta(A)\mu(s)}{1-\delta(A)}\ge 0$. Moreover, $\sum_{s\in S}\mu^*_A(s)=\sum_{s\in S}\frac{\mu_A(s)-\delta(A)\mu(s)}{1-\delta(A)}=1$. Hence, $\mu^*_A\in \Delta(S)$ and \[\mu_A=\delta(A)\,\mu+(1-\delta(A))\mu^*_A.\]
Note that $E$ is $\succsim_A$-unfavored iff $\mu_A(E)=\delta(A)\,\mu(E)$ iff $\mu^*_A(E)=0$. Then by Partial Consequentialism, $A^c$ is $\succsim_A$-unfavored iff $\mu^*_A(A^c)=0$. Hence, $\mu^*_A\in\Delta(A)$.
We now shall show that there is a function $d$ that $\mu^*_A=\arg\min_{\pi\in\Delta(A)} d_\mu(\pi)$. 

We now essentially repeat the part of Theorem 1 for the data set $\mathcal{D}^*=\{(\mu^*_A, \Delta(A))\}_{A\in\Sigma}$ where $\mu^*_S=\mu$. To apply the aforementioned generalization of Afriat's theorem for general budget sets, we first define the following revealed preference relation.
We say that $\mu^*_A$ is strictly revealed preferred to $\mu^*_B$, denoted by $\mu^*_A R^* \mu^*_B$, if $\mu^*_B\in \Delta(A)$ and $\mu^*_A\neq \mu^*_B$. First, note that $\mu^*_S R^* \mu^*_A$ for any $A\in\Sigma\setminus \{S\}$. Second, $\neg \mu^*_A R^* \mu^*_S$ since $\mu\not\in \Delta(A)$. Third, for any $A, B\in\Sigma\setminus \{S\}$, $\mu^*_A R^* \mu^*_B$ implies that $A$ is $\succsim_B$-favored. Hence, Partial Dynamic Coherence is equivalent to the acyclicity of $R^*$.

By the arguments provided in the proof of Theorem 1, $(\mu^*_A, \Delta(A))$ is co-convex. Since $\mathcal{D}^*$ satisfies SARP, by Theorem 1 of Matzkin (1991), there is a strictly increasing, continuous, strictly concave utility function $u:\Delta(S)\to\mathbb{R}$ such that for any $A\in\Sigma$,
\[\mu^*_{A}=\arg\max_{\pi\in \Delta(A)} u(\pi).\]
Let $d_\mu=-u$ and note that $\mu$ is the global minimizer of $d_\mu$ by the previous equation. Moreover, 
\[\mu^*_{A}=\arg\min_{\pi\in \Delta(A)} d_\mu(\pi).\]
Finally, note that $d_\mu$ is continuous and strictly convex. To sum up, we have 
\[\mu_A=\delta(A)\,\mu+(1-\delta(A))\arg\min_{\pi\in \Delta(A)} d_\mu(\pi)\]
for any $A\subseteq S$. We now shall show that $\delta(A)=\delta(B)$. 

Take any $A, B\in\Sigma\setminus S$. There are $p, q, r$ such that $\mu(A^c)u(q)=u(p)$ and $\mu(B^c)u(q)=u(r)$; equivalently, $w\,A\,q\sim p$ and $w\,B\,q\sim_B r$. Since $\mu_A(A^c)=\delta(A)\,\mu(A^c)$, we have $\delta(A)\mu(A^c)u(q)=\mu_A(A^c)u(q)=\delta(A)\, u(p)$; equivalently, $w\,A\,q\sim_A \delta(A)\,p+(1-\delta(A))w$. By \nameref{RTC}, we have $w\,B\,q\sim \delta(A)\,r+(1-\delta(A))w$; equivalently, $\mu_B(B^c)u(q)=\delta(B)\,\mu(B^c)u(q)=\delta(A)u(r)=\delta(A)\mu(B^c)u(q)$. Hence, $\delta(A)=\delta(B)=\delta$. Finally, we set $\delta(S)=\delta$ and obtain a Weighted  \nameref{IEU} representation.

$(i)\Rightarrow(ii)$. SEU postulates are trivially satisfied. Since $\mu$ has full-support, $\mu_A\neq \mu$ for any $A\subset S$. We now shall prove the necessity of the other three axioms. By the argument above, $E$ is $\succsim_A$-unfavored iff $\mu^*_A(E)=0$ where $\mu^*_A=\arg\min_{\pi\in \Delta(A)} d_\mu(\pi)$. Equivalently, $E$ is $\succsim_A$-favored iff $\mu^*_A(E)=1$. 

Partial Consequentialism is satisfied because $A$ is $\succsim_A$-favored; i.e., $\mu^*_A\in\Delta(A)$. 

To prove Partial Dynamic Coherence, take any $A_1, \ldots, A_n\subseteq S$ such that $A_i$ is $\succsim_{A_{i+1}}$-favored for each $i\le n-1$ and $A_n$ is $\succsim_{A_1}$-favored. In other words, $\mu^*_{A_{i+1}}(A_i)=1$ for each $i\le n-1$ and $\mu^*_{A_1}(A_n)=1$. Note that $\mu^*_{A_{i+1}}(A_i)=1$ means that $\mu^*_{A_{i+1}}\in \Delta(A_{i})$. Since $\mu^*_{A_i}$ is the unique minimizer of $d_\mu$ in $\Delta(A_{i})$, we have $d_\mu(\mu^*_{A_i})\le d_\mu(\mu^*_{A_{i+1}})$, the inequality is strict when $\mu^*_{A_i}\neq \mu^*_{A_{i+1}}$. We will obtain a contradiction if there is at least one strict inequality. Hence, $\mu^*_{A_1}=\ldots=\mu^*_{A_n}$, which implies $\succsim_{A_1}=\succsim_{A_{n}}$.

To prove Relative Tradeoff Consistency, take any $A, B\in\Sigma$, $p, q\in\Delta$, and $\alpha\in (0, 1)$ such that 
\[w\,A\,q\sim p\text{ and } w A\,q\sim_A \alpha\,p+(1-\alpha)w;\]
equivalently, 
$\mu(A^c)u(q)=u(p)$ and $\mu_A(A^c)u(q)=\alpha u(p)$. Since $\mu_A(A^c)=\delta\,\mu(A)$, we have $\alpha=\delta$.
Take any $r$ such that $w\,B\,q\sim r$; equivalently, $\mu(B^c)u(q)=u(r)$. Since $\mu_B(B^c)=\alpha\,\mu(B)$, we have $\mu_B(B^c)u(q)=\alpha u(r)$; equivalently, $w\,B\,q\sim_B \alpha\,r+(1-\alpha)w$.

\bibliographystyle{ecta}
\bibliography{econref}

\pagebreak
\section{Bayesian Persuasion: A Richer Message Space - Online Publication Only}\label{rich}

When $f$ is not linear, the revelation principle may be violated (see \cite{de2022non}). Hence, the assumption $|M|=2$ is not without loss of generality. We show that the conclusions of the previous section do not change substantively when $|M|\ge 3$.

Suppose  $|M|\ge 3$ and $a^*_{\pi}(m_s)=a$ holds for at most $k\in [2, |M|-1]$ distinct messages $m_s$. We assume $f$ is continuous. The sender's problem reduces to
\[\max_{\pi} \sum^k_{s=1}\big(\sum^n_{i=1} \rho_i\, \pi_{m_s}(\omega_i)\big)\]
\[\text{subject to }\sum_{i\in A}\delta_i\,f\big(\pi_{m_s}(\omega_i)\big)\ge \sum^n_{i\in A^c}\delta_i\, f\big(\pi_{m_s}(\omega_i)\big)\,\text{ for each }s\le k.\]

We show that the optimal signal structures in this case are similar to ones we obtained in Propositions 7 and 8. 

\begin{prps}\label{concave-rich} Suppose $f$ is strictly concave. For any optimal signal structure $\pi^*$, there is $\bar{\omega}\in A^c$ such that 
\[\pi^*_{m_s}(\omega_1)=\frac{1}{k}\text{ for any }\omega_1\in A\text{ and }s\le k\text{, and }\]
\[\pi^*_{m_s}(\omega_2)\in \{0, 1\}\text{ for any }\omega_2\in A^c\setminus\{\bar{\omega}\}\text{ and }s\le k.\]   
\end{prps}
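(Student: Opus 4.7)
The plan is to extend the argument from Proposition 7 (the two-message, strictly concave case) by establishing three properties of any optimum in sequence: full utilization of aligned states across the $a$-inducing messages, uniform allocation of each aligned state across those $k$ messages, and at most one interior misaligned state globally.

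First, I would show that $\sum_{s=1}^{k}\pi^*_{m_s}(\omega_1)=1$ for every $\omega_1\in A$, by the same reallocation argument used in the $|M|=2$ case. If the sum were strictly less than one, shifting $\epsilon$ of probability mass from some non-$a$-inducing message to any $m_s$ with $s\le k$ would strictly raise the objective by $\rho_1\epsilon$ and, because $\omega_1\in A$, strictly relax the constraint at $m_s$ (since $\delta_1 f(\pi_{m_s}(\omega_1))$ strictly increases), contradicting optimality.

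The key step is uniform allocation. Suppose for contradiction that $\pi^*_{m_{s_1}}(\omega_i)\neq\pi^*_{m_{s_2}}(\omega_i)$ for some $\omega_i\in A$ and $s_1\neq s_2$. Replacing both allocations by their common average $\bar\alpha=\frac{1}{2}(\pi^*_{m_{s_1}}(\omega_i)+\pi^*_{m_{s_2}}(\omega_i))$ strictly raises the sum of the left-hand sides of the two affected constraints by $\delta_i\bigl[2f(\bar\alpha)-f(\pi^*_{m_{s_1}}(\omega_i))-f(\pi^*_{m_{s_2}}(\omega_i))\bigr]>0$, by strict concavity of $f$. I would then convert this strictly positive excess slack into a strict improvement of the objective, by jointly adjusting the bad-state probabilities at $m_{s_1}$ and $m_{s_2}$ to maintain per-message feasibility while increasing $\pi_{m_s}(\omega_2)$ for some $\omega_2\in A^c$ whose total allocation is below one. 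Once the uniform claim is established, each $m_s$ has the common budget $M=\sum_{i\in A}\delta_i f(1/k)$, so the within-message problem is exactly that of Proposition 7 and forces at most one interior $\beta_j^s$ per message. A further swap argument --- transferring the interior mass of $\omega_{j_2}$ from its message $s_2$ into $s_1$, where $\omega_{j_1}$ is also interior --- collapses two candidate $\bar\omega$'s into the same message, where Proposition 7 applies directly to move one of them to a corner.

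The main obstacle will be making the compensation in the uniform-allocation step rigorous: after symmetrizing aligned allocations across $s_1$ and $s_2$, one per-message constraint tightens and the other loosens, so the global gain in slack must be carefully distributed through bad-state adjustments. A clean alternative I would also pursue is to derive the uniform conclusion directly from first-order conditions of the Lagrangian: at any interior optimum, $\lambda_s f'(\pi^*_{m_s}(\omega_i))$ must be constant in $s$ for each interior aligned allocation, and a separate exchange argument on bad-state mass across messages pins down $\lambda_{s_1}=\lambda_{s_2}$, so that strict monotonicity of $f'$ yields $\pi^*_{m_{s_1}}(\omega_i)=\pi^*_{m_{s_2}}(\omega_i)=1/k$.
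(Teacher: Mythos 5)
Your overall strategy --- working directly with the $k$ per-message constraints and forcing uniformity of the aligned allocations by an averaging/exchange argument --- is a genuinely different route from the paper's, but it stalls exactly where you say it does, and that step is the heart of the proof. After you replace $\pi^*_{m_{s_1}}(\omega_i)$ and $\pi^*_{m_{s_2}}(\omega_i)$ by their average, the constraint at one of the two messages tightens while the other loosens, and the objective is unchanged. To convert the aggregate gain in slack into a strict improvement you must free slack at the tightened message by cutting some bad-state probability there (losing roughly $\rho_j/(\delta_j f'(x_j))$ of objective per unit of slack freed) and spend the extra slack at the other message (gaining roughly $\rho_{j'}/(\delta_{j'} f'(x_{j'}))$ per unit). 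These exchange rates differ across messages and states, so the strict positivity of $2f(\bar\alpha)-f(\alpha_1)-f(\alpha_2)$ does not by itself yield a strict improvement; the compensation could lose more than it gains. The same difficulty recurs in your final swap meant to collapse the (up to $k$) per-message interior bad states into a single global $\bar{\omega}$: moving $\omega_{j_2}$'s interior mass from $m_{s_2}$ to $m_{s_1}$ leaves the objective and the total $f$-cost unchanged but can violate the constraint at $m_{s_1}$. Your Lagrangian alternative has the analogous hole: the first-order conditions only give that $\lambda_s f'(\pi^*_{m_s}(\omega_i))$ is constant in $s$, and $\lambda_{s_1}=\lambda_{s_2}$ is precisely what needs to be proved.

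The paper's proof avoids all of this with one move you are missing: sum the $k$ per-message constraints into a single aggregate constraint and analyze the relaxed problem, which separates state by state. For an aligned state, strict concavity implies the uniform split $1/k$ maximizes $\sum_{s\le k} f\big(\pi_{m_s}(\omega_i)\big)$ subject to $\sum_{s\le k}\pi_{m_s}(\omega_i)=1$, hence maximizes the available budget $M=k\,f(1/k)\sum_{i\in A}\delta_i$ (and any other split is strictly worse, since under the no-full-persuasion assumption the constraint binds and extra budget is strictly valuable). For a bad state, strict concavity implies that concentrating its mass on a single message minimizes $\sum_{s\le k} f\big(\pi_{m_s}(\omega_i)\big)$ for a given total, so each bad state reduces to a single scalar $x_i$ and the allocation of $M$ across bad states becomes literally the two-message problem of Proposition 7, which delivers the single interior $\bar{\omega}$. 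I recommend rebuilding your argument around that aggregation step; the delicate exchange arguments then disappear (the only point left to check, which the aggregation argument leaves implicit, is that the relaxed optimum can be arranged to satisfy the per-message constraints).
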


\autoref{concave-rich} shows that, when $f$ is strictly concave, the sender randomizes at states in $A$ and never randomizes at states in $A^c\setminus{\bar{\omega}}$. In contrast, when $f$ is strictly convex, the sender never randomizes at states in $A$, but instead randomizes at states in $A^c$.  This is shown in \autoref{convex-rich} below.

\begin{prps}\label{convex-rich} Suppose $f$ is strictly convex and $f'(0)=0$. For any optimal signal structure $\pi^*$, 
\[\pi^*_{m_s}(\omega_1)\in \{0, 1\}\text{ for any }\omega_1\in A\text{ and }s\le k\]
\[\pi^*_{m_1}(\omega_2)=\pi^*_{m_s}(\omega_2)\in (0, 1)\text{ for any }\omega_2\in A^c\text{ and }s\le k.\]   
\end{prps}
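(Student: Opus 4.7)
The plan is to mirror the proof of Proposition~\ref{convex}, lifted to the multi-message setting, using the same two mechanisms: strict convexity of $f$ favors concentrating $A$-state mass on a single good message (as concentration strictly maximizes $\sum_s f(x^s_i)$ for fixed $\sum_s x^s_i$) and spreading $A^c$-state mass equally across good messages (as spreading strictly minimizes the peak RHS contribution per constraint); and $f'(0)=0$ forces interior values for $A^c$ states on good messages. Throughout I write $x^s_i:=\pi^*_{m_s}(\omega_i)$.

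First, for $i\in A$ I show $\sum_{s\le k}x^s_i=1$: if $x^{s_0}_i>0$ for some bad message $s_0>k$, transferring $\epsilon$ of this mass to any good message $s^*\le k$ strictly raises the objective by $\rho_i\epsilon$ while easing both affected constraints (the good one at $s^*$ gains LHS, the bad one at $s_0$ loses LHS), contradicting optimality. Second, for $i\in A^c$ and distinct good messages $s,s'$, suppose $x^s_i\ne x^{s'}_i$; replacing both with $\bar{x}=(x^s_i+x^{s'}_i)/2$ strictly decreases the aggregate RHS across constraints $s$ and $s'$, since strict convexity gives $f(x^s_i)+f(x^{s'}_i)>2f(\bar{x})$. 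A compensating rebalancing of $A$-state mass between $s$ and $s'$ (objective-neutral, since the objective depends only on $\sum_s x^s_i$) restores per-constraint feasibility, and the residual aggregate slack is absorbed by raising some $x^{s''}_j$ for $j\in A^c$, strictly improving the objective.

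Third, for $i\in A$ and $s\le k$, suppose $x^s_i,x^{s'}_i\in(0,1)$ for distinct $s,s'$ (which must exist by the first step if $x^s_i$ is interior). Concentrating to $(x^s_i+x^{s'}_i,0)$ changes LHS at $s$ by $+\delta_i[f(x^s_i+x^{s'}_i)-f(x^s_i)]$ and at $s'$ by $-\delta_i[f(x^{s'}_i)-f(0)]$; strict convexity gives $f(x^s_i+x^{s'}_i)-f(x^s_i)>f(x^{s'}_i)-f(0)$, so the aggregate LHS strictly increases across the two constraints. Having already symmetrized $A^c$-mass between $s$ and $s'$, the per-unit slopes $f'$ of the restoration move at $s'$ and the expansion move at $s$ coincide, so the aggregate slack delivers a strict net gain for $A^c$-mass on a good message. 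Finally, for $i\in A^c$ write $c_i=x^s_i$ (independent of $s\le k$ by Step 2). Since $\sum_s x^s_i\le 1$, $c_i\le 1/k<1$ (as $k\ge 2$). If $c_i=0$, raising it to $\epsilon$ inflates each RHS by $\delta_i f(\epsilon)=o(\epsilon)$ by $f'(0)=0$, while the objective rises first-order by $\rho_i k\epsilon$; the gain dominates for small $\epsilon$, contradicting optimality.

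The main obstacle is in the symmetrization and concentration steps: each local modification perturbs individual constraints in opposite directions, so one must carefully rebalance mass across good messages to preserve per-constraint feasibility while extracting a strict objective improvement from the aggregate slack produced by strict convexity. The two key observations that make this work are (i) shifting $A$-state mass between good messages is objective-neutral, so it can freely absorb constraint asymmetries, and (ii) once $A^c$ is symmetrized in Step 2, the restoration and expansion moves for $A^c$ in Step 3 occur at identical levels of $x^s_j$, so their marginal constraint-costs align and the aggregate slack translates cleanly into a net objective gain.
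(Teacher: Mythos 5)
Your high-level mechanisms are the right ones (strict convexity drives concentration of the $A$-mass and equalization of the $A^c$-mass across good messages; $f'(0)=0$ rules out the corner at zero), but your execution diverges from the paper's in a way that opens a genuine gap. The paper never tries to preserve per-message feasibility under local perturbations: it sums the $k$ constraints into a single aggregate budget $\sum_{s\le k}\sum_{i\in A^c}\delta_i f\big(\pi_{m_s}(\omega_i)\big)\le M$, first computes the maximal total resource $M=\max_\pi \sum_{s\le k}\sum_{i\in A}\delta_i f\big(\pi_{m_s}(\omega_i)\big)$ (concentration, by convexity), and then decomposes the relaxed problem state by state, where $\max \sum_{s\le k} x^s_i$ subject to $\sum_{s\le k} f(x^s_i)\le M_i$ is solved by the uniform point and $f'(0)=0$ (via the argument of \autoref{convex}) gives $M_i>f(0)$, hence interiority. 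In that relaxation aggregate slack is the only currency, so ``more total LHS is better'' is immediate and no rebalancing across messages is ever needed.

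Your Step 3 is where the gap bites. After concentrating state $i$'s mass from $(u,v)$ on messages $(s,s')$ to $(u+v,0)$, constraint $s$ gains $G=\delta_i\big[f(u+v)-f(u)\big]$ of LHS while constraint $s'$ loses $L=\delta_i\big[f(v)-f(0)\big]$, with $G>L$. You restore $s'$ by lowering some $x^{s'}_j$ ($j\in A^c$, currently at the symmetric level $x_j$) by $\eta'$ with $\delta_j\big[f(x_j)-f(x_j-\eta')\big]=L$, and spend the new slack at $s$ by raising $x^s_j$ by $\eta$ with $\delta_j\big[f(x_j+\eta)-f(x_j)\big]=G$, claiming $\eta>\eta'$ because the slopes at $x_j$ coincide. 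That is only a first-order statement and the moves are not infinitesimal: since $f$ is strictly convex, releasing $L$ units of $f$ below $x_j$ costs a comparatively large $\eta'$, while absorbing $G$ units above $x_j$ buys a comparatively small $\eta$, and $G>L$ need not win this race. Concretely, with $f(x)=x^2$, $\delta_j=1$, $x_j=1/2$, $L=0.09$, and $G=0.105$, one gets $\eta'=0.1$ but $\eta\approx 0.096$, so your perturbation strictly \emph{lowers} the objective even though it creates aggregate slack. Step 2 has a milder version of the same problem: the compensating transfer of $A$-mass from $s'$ to $s$ gains LHS at $s$ and loses it at $s'$ at rates governed by $f'$ at possibly different points, so feasibility of that rebalancing is not automatic (and is impossible when no $A$-mass sits on $s'$). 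To close the argument you should either pass to the aggregated constraint as the paper does, or prove existence of a feasibility-preserving improving perturbation globally rather than exhibiting one specific local move.
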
 

The intuition behind the above results is the same as the intuition behind Propositions 7 and 8 since strictly concave (convex) $f$ leads to a strictly convex (concave) objective function. The following example further illustrates the difference between the case $\beta>1$ and the case $\beta<1$.

\begin{exm}[continues=persuasion]  Suppose now $|M|=3$ and $k=2$. When $\beta<1$, 
\[\pi^H_{m_1}=\pi^H_{m_2}=\frac{1}{2},\,\pi^M_{m_1}\in (0, 1), \text{ and }\pi^M_{m_2}=\pi^L_{m_1}=\pi^L_{m_2}=0.\]  
However, when $\beta>1$, 
\[\pi^H_{m_1}=1\text{ and }\pi^H_{m_2}=0\text{ and }\pi^M_{m_1}=\pi^M_{m_2}\in (0, 1) \text{ and }\pi^L_{m_1}=\pi^L_{m_2}\in (0, 1).\] 
\end{exm}

\subsection{Proof of \autoref{concave-rich}}

We first solve
\[\max_{\pi} \sum_{s\le k}\sum_{i\in A}\delta_i\,f\big(\pi_{m_s}(\omega_i)\big)=\max_{\pi} \sum_{i\in A }\delta_i \sum^k_{s=1} f\big(\pi_{m_s}(\omega_i)\big).\]
Since $f$ is strictly concave, $\pi^{*}_{m_s}(\omega_i)=\frac{1}{k}$ for any $s\le k$ and $i\in A$. Hence, 
\[\max_{\pi} \sum_{s\le k}\sum_{i\in A}\delta_i\,f\big(\pi_{m_s}(\omega_i)\big)=k\,f(\frac{1}{k})\,\sum_{i\in A}\delta_i=M.\]
Then we shall solve
\[\max_{\pi} \sum^k_{s=1}\big(\sum_{i\in B} \rho_i\, \pi_{m_s}(\omega_i)\big)=\sum_{i\in B}\rho_i \big(\sum^k_{s=1}\pi_{m_s}(\omega_i)\big)\]
\[\text{subject to }\sum_{s\le k}\sum_{i\in B}\delta_i\, f\big(\pi_{m_s}(\omega_i)\big)=\sum_{i\in B}\delta_i\, \big(\sum_{s\le k}f\big(\pi_{m_s}(\omega_i)\big)\big)\le M.\]
The solution to the above problem will be the solution to the problem below for some $M_i$:
\[\max \sum^k_{s=1}\pi_{m_s}(\omega_i)\text{ subject to }\sum_{s\le k}f\big(\pi_{m_s}(\omega_i)\big)\le M_i.\]

Since $f$ is strictly concave, there is some $s$ such that $\pi_{m_s}(\omega_i)=\min\{1, f^{-1}(M_i)\}$ and $\pi_{m_{s'}}(\omega_i)=0$ for each $s'\neq s$. By \autoref{convex}, there is $\bar{\omega}\in A^c$ such that $M_i$ is either $f(1)$ or $f(0)$ for each $i\in A^c\setminus\{\bar{\omega}\}$.

\subsection{Proof of \autoref{convex-rich}}

We first solve
\[\max_{\pi} \sum_{s\le k}\sum_{i\in A}\delta_i\,f\big(\pi_{m_s}(\omega_i)\big)=\sum_{i\in A }\delta_i \sum^k_{s=1} f\big(\pi_{m_s}(\omega_i)\big).\]
Since $f$ is strictly convex, there is $s\le k$ such that $\pi^{*}_{m_s}(\omega_i)=1$ and $\pi^{*}_{m_{s'}}(\omega_i)=0$ for each $s'\neq s$. Hence, 
\[\max_{\pi} \sum_{s\le k}\sum_{i\in A}\delta_i\,f\big(\pi_{m_s}(\omega_i)\big)=f(1)\,\sum_{i\in A}\delta_i=M.\]
Then we shall solve
\[\max_{\pi} \sum^k_{s=1}\big(\sum_{i\in B} \rho_i\, \pi_{m_s}(\omega_i)\big)=\sum_{i\in B}\rho_i \big(\sum^k_{s=1}\pi_{m_s}(\omega_i)\big)\]
\[\text{subject to }\sum_{s\le k}\sum_{i\in B}\delta_i\, f\big(\pi_{m_s}(\omega_i)\big)=\sum_{i\in B}\delta_i\, \big(\sum_{s\le k}f\big(\pi_{m_s}(\omega_i)\big)\big)\le M.\]
The solution to the above problem will be the solution to the problem below for some $M_i$:
\[\max \sum^k_{s=1}\pi_{m_s}(\omega_i)\text{ subject to }\sum_{s\le k}f\big(\pi_{m_s}(\omega_i)\big)\le M_i.\]

Since $f$ is strictly convex, $\pi_{m_{s}}(\omega_i)=\min\{\frac{1}{k}, f^{-1}(\frac{M_i}{k})\}$ for each $s\le k$. By \autoref{concave}, $M_i>f(0)$.

\end{document}